\newcommand{\caH}{{\cal H}}
\newcommand{\caI}{{\cal I}}
\newcommand{\caP}{{\cal P}}
\newcommand{\caT}{{\cal T}}
\newcommand{\caX}{{\cal X}}
\newcommand{\pair}[1]{\ensuremath{\langle}{#1}\ensuremath{\rangle}}
\newcommand{\semantics}[1]{{[\![#1]\!]}} 
\newcommand{\bvar}[1]{\mathit{shVar}(#1)}
\newcommand{\wf}{\mathit{wf}}
\newcommand{\chV}[1]{{#1}_\mathit{?}}
\newcommand{\NPASymbols}{\Symbols_\mathit{{SS}_\mathcal{P}}}
\newcommand{\NPAEq}{E_\mathit{{SS}_\mathcal{P}}}
\newcommand{\CstrEq}{E_\mathit{{SS}_\mathcal{P}}}
\newcommand{\SSEq}{E_\mathit{SS}}
\newcommand{\SSSymbols}{\Symbols_\mathit{{SS}}}
\newcommand{\PSymbols}{\Symbols_\caP}
\newcommand{\PEq}{E_{\caP}}
\newcommand{\PAEq}{E_\mathit{PA}}
\newcommand{\PAPEq}{E_\mathit{PA_\caP}}
\newcommand{\SpecPA}{\caP_\mathit{PA}}
\newcommand{\ProcPA}{P_\mathit{PA}}
\newcommand{\PASymbols}{\Symbols_\mathit{{PA}_\caP}}
\newcommand{\BNFSymbols}{\Symbols_\mathit{PA}}
\newcommand{\PAStateSymbols}{\Symbols_\mathit{{PA_\caP+State}}}
\newcommand{\PARls}{R_{\mathit{PA_\caP}}}
\newcommand{\FWRls}{R_\mathit{F_\caP}}
\newcommand{\SpecCstrFW}{\caP_\mathit{CstrSS}}
\newcommand{\StrCstrFW}{P_\mathit{CstrSS}}
\newcommand{\PCstrSymbols}{\Symbols_\mathit{CstrSS_\caP}}
\newcommand{\PCstrFWRls}{R_\mathit{CstrF_\caP}}
\newcommand{\PBWRls}{R_\mathit{B\mathcal{_\caP}}^{-1}}
\newcommand{\PCstrBWRls}{R_\mathit{CstrB_\caP}^{-1}}
\newcommand{\toCstrFW}{\textit{toCstrSS}}
\newcommand{\toCstrFWa}{\textit{toCstrSS*}}
\newcommand{\Pst}{\mathit{Pst}}
\newcommand{\Fst}{\mathit{Fst}}
\newcommand{\nil}{\mathit{nilP}}
\newcommand{\CstrS}{\mathit{CstrS}}
\newcommand{\var}[1]{\mathit{Var}(#1)}
\newcommand{\Variables}{\caX}
\newcommand{\Symbols}{\Sigma}
\newcommand{\TermsOn}[5]{{\caT^{#4}_{#1}(#2)}{}_{#3}^{#5}}
\newcommand{\Terms}{\TermsOn{\Symbols}{\Variables}{}{}{}}
\newcommand{\ETerms}{\TermsOn{\Symbols\!/\!E}{\Variables}{}{}{}}
\newcommand{\TermsS}[1]{\TermsOn{\Symbols}{\Variables}{\sort{#1}}{}{}}
\newcommand{\subterm}[2]{#1|_{#2}}
\newcommand{\sort}[1]{\ensuremath{\mathsf{#1}}}
\newcommand{\cnt}[1]{{\langle #1\rangle}}
\newcommand{\nI}[1]{\ensuremath{#1{\notin}{\cal I}}}
\newcommand{\inI}[1]{\ensuremath{#1{\in}{\cal I}}}
\newcommand{\GTermsOn}[2]{\caT^{#2}_{#1}}
\newcommand{\GTerms}{\GTermsOn{\Symbols}{}}
\newcommand{\GTermsS}[1]{\GTermsOn{\Symbols,\sort{#1}}{}}
\newcommand{\SubstOn}[2]{{\cal S}ubst(#1,#2)}
\newcommand{\Substs}{\SubstOn{\Symbols}{\Variables}{}{}{}}
\newcommand{\idsubst}{\textit{id}}
\newcommand{\composeSubst}{}
\newcommand{\composeRel}{;}
\newcommand{\compose}{\composeSubst}
\newcommand{\congr}[1]{=_{#1}}
\newcommand{\csu}[3]{\textit{CSU}_{#3}({#1})}
\newcommand{\csuV}[3]{\textit{CSU\/}^{#2}_{#3}({#1})}
\newcommand{\funocc}[1]{\mathit{Pos}_{\Symbols}(#1)}
\newcommand{\replace}[3]{#1[#3]_{#2}}
\newcommand{\domain}[1]{\mathit{Dom}(#1)}
\newcommand{\range}[1]{\intrvar{#1}}
\newcommand{\intrvar}[1]{\mathit{Ran}(#1)}
\newcommand{\rewrite}[1]{\rightarrow_{#1}}
\newcommand{\rewrites}[1]{\rightarrow^*_{#1}}
\newcommand{\narrowG}[3]{\mathop{\stackrel{#1}{\leadsto}\!\mbox{}_{#2}^{#3}}}
\newcommand{\narrow}[2]{\mathop{\stackrel{#1}{\leadsto}_{#2}}}
\newcommand{\narrowleft}[2]{\mathop{\stackrel{#1}{\leftsquigarrow}_{#2}}}
\newcommand{\narrowleftTop}[1]{\mathop{\stackrel{#1}\leftsquigarrow}}
\newcommand{\GETerms}{\GTermsOn{\Symbols/E}{}}
\newcommand{\HPsFs}{\caH_\mathit{PS\_FS}}
\newcommand{\HState}{\caH}
\newcommand{\HProc}{\caH_\mathit{LP\_Str}}
\newcommand{\eqname}[1]{\tag{#1}}
\newcommand{\MaxProcId}{\textit{MaxProcId}}
\newcommand{\MaxStrId}{\textit{MaxStrId}}
\newcommand{\IK}{\textit{IK}}
\newcommand{\ProcConf}{\mathit{ProcConf}} 
\newcommand{\LProc}{\mathit{LProc}} 
\newcommand{\Role}{\mathit{Role}} 
\newcommand{\Proc}{\mathit{Proc}} 
\newcommand{\Msg}{\mathit{Msg}} 
\newcommand{\Cond}{\mathit{Cond}} 
\newtheorem{theorem}{Theorem}
\newtheorem{lemma}{Lemma}
\newtheorem{definition}{Definition}
\newtheorem{example}{Example}[section]
\newtheorem{remark}{Remark}
\begin{document}







%


\setlength{\pdfpageheight}{\paperheight}
\setlength{\pdfpagewidth}{\paperwidth}


\title{Strand Spaces with Choice\\ via a Process Algebra Semantics 
}

\author{
Fan Yang\\
{\small University of Illinois at Urbana-Champaign, USA}\\
{\small\tt fanyang6@illinois.edu}
\and
Santiago Escobar\\
{\small       Universitat Polit\`ecnica de Val\`encia, Spain}\\
       {\small\tt sescobar@dsic.upv.es}
\and
Catherine Meadows\\
{\small       Naval Research Laboratory, Washington DC, USA}\\
       {\small\tt meadows@itd.nrl.navy.mil}
\and
Jos{\'e} Meseguer\\
{\small       University of Illinois at Urbana-Champaign, USA}\\
       {\small\tt meseguer@illinois.edu}
}       


\date{}
\maketitle

\begin{abstract}
Roles in cryptographic protocols do not always have a linear execution, but may
include choice points causing the protocol to continue along different
paths.  In this paper we address the problem of representing choice in
the strand space model of cryptographic protocols, particularly as it
is used in the Maude-NPA cryptographic protocol analysis tool.

To achieve this goal, we develop and give formal semantics to a
process algebra for cryptographic protocols that supports a rich
taxonomy of choice primitives for composing strand spaces.
In our taxonomy, deterministic and non-deterministic choices are broken down
further.  Non-deter\-ministic choice can be either \emph{explicit}, i.e., one of
two paths is chosen, or \emph{implicit}, i.e., the value of a variable is
chosen non-deterministically.  Likewise, deterministic choice can be
either an \emph{explicit if-then-else} choice, i.e., one path is chosen if a predicate is
satisfied, while the other is chosen if it is not, or 
\emph{implicit deterministic choice}, i.e., execution continues only if a certain pattern is
matched.  We have identified a class of choices which includes finite branching
and some cases of infinite branching, which we address in this paper.

We provide a bisimulation result between
the expected forwards execution semantics of the new process algebra
and
the original symbolic backwards semantics of Maude-NPA
that preserves attack reachability.
%
We have fully integrated  the process algebra syntax
and its transformation into strands in Maude-NPA. 
We illustrate its
expressive power and naturalness with various examples, and
show how it can be effectively used in formal analysis.
This allows users to write protocols from now on 
using the process syntax, which is more convenient for
expressing choice than the strand space syntax, in which choice can only be specified implicitly,
via two or more strands that are identical until the choice point.

 \end{abstract}

 \section{Introduction}

Formal analysis of cryptographic protocols has become 
one of the most successful applications of formal methods to security, with a number of tools available and 
many 
successful applications to the analysis of protocol standards.
In the course of developing these tools it has become clear 
that there are certain universal
features that can best be handled by accounting for them directly in
syntax and semantics of the formal specification language, e.g., unguessable nonces, communication across a network controlled by an attacker, and support for the equational properties of cryptographic primitives. Thus a number of different languages have been developed that include these features.  

At the same time, it is necessary to provide support for more commonly used constructs,
such as
\emph{choice points} that cause the protocol to continue in different
ways, and to do so in such a way that  they are well integrated with  the more specifically cryptographic features of the language.  However, in their original form most of these languages do not support choice, or support it only in a limited way.   

In particular the strand space model \cite{strands}, one of the most popular models designed
for use in cryptographic protocol analysis, does not support choice in its original form; strands describe linear sequences of input
and output messages, without any branching.  One response to dealing with this limitation, and to formalizing
strand spaces in general  has been to embed the strand space model in some other
 formal system that supports choice, e.g., event-based models for concurrency \cite{DBLP:conf/fsttcs/CrazzolaraW02}, Petri nets \cite{DBLP:journals/entcs/Froschle09}, or multi-set
 rewriting \cite{DBLP:conf/csfw/CervesatoDMLS00}. However, we believe that there is an advantage in introducing choice in the strand space
 model itself, while proving soundness and completeness with another formal system in order to validate the augmented model. This allows us to concentrate
 on handling the types of choice that commonly arise in cryptographic protocols.  
A detailed discussion of related work can be found at Section~\ref{sec:related-work}. 
 
\subsection{Contributions} 

This paper is an extended version of the conference paper \cite{YEMMS16}. 
We address the problem of representing choice in the strand space model, particularly as it
is used in the Maude-NPA cryptographic protocol analysis tool.
We have identified a class of choices which includes finite branching
and some cases of infinite branching.
At the theoretical level, we provide a bisimulation result between
the expected forwards execution semantics of the new process algebra
and the original symbolic backwards semantics of Maude-NPA.
This requires extra intermediate forwards and backwards semantics that are included in this paper, together with all the proofs,
but were not included in the conference paper \cite{YEMMS16}.
What these results make possible is a sound and complete symbolic
reachability analysis method for cryptographic protocols with choice
\emph{modulo} equational properties of the cryptographic functions
satisfying the finite variant property (FVP) (see \cite{MNPAmanual3.0} for
a detailed explanation of how FVP theories are supported in Maude-NPA).
At the tool level, we have fully integrated the process algebra syntax,
and its transformation into strands, and have developed new methods
to specify attack states using the process notation  in the recent release of Maude-NPA
3.0 (see Section \ref{Integration}, and \cite{MNPAmanual3.0}).
None of this was available at the time of the conference paper \cite{YEMMS16}.
Furthermore, we illustrate the
expressive power and naturalness of adding choice to strand spaces
with various examples, and
show how it can be effectively used in formal analysis.

\subsection{Choice in Maude-NPA}
 
Previous to this work, Maude-NPA offered some ways of handling
choice, but its scope was limited, and a uniform semantics of choice
was lacking.  Several  kinds of branching could be handled by a protocol
composition method  
in which a single parent strand is composed with one or more
child strands.  Although protocol composition is intended for
modular construction
of protocols, with suitable restrictions it can also be
used to express both non-deterministic branching and deterministic
branching predicates on pattern matching of output parameters of the
parent with the input parameters of the child.  However, repurposing
composition to branching has its limitations.  First of all, it is
possible to inadvertently introduce non-deterministic choice into what
was intended to be deterministic choice by unwise choice of input and
output parameters.  Secondly, the limitation to pattern matching rules
out certain types of deterministic choice conditioned on predicates
that cannot be expressed this way, e.g., disequality predicates.
Finally, implementation of choice via composition can be
inefficient, since Maude-NPA must evaluate all possible child strands
that match a parent strand.

Maude-NPA, in common with many other cryptographic protocol analysis tools,  also offers a type of implicit choice that does not involve branching: non-deterministic choice of the values
of certain variables.  For example,   a strand that describes an initiator communicating with a responder generally uses variables for both the initiator and responder names; this represents a non-deterministic choice of initiator responder identities.  However, the semantic implications of this kind of choice were not that well understood, which made it difficult to determine where it could safely be used.  Clearly, a more unified treatment of choice was necessary, together with a formal semantics of choice.

 In support of this work we have developed a taxonomy of choice
 in which the categories of deterministic and non-deterministic choice are further subdivided.  First of all, \allowbreak non-deterministic choice
 is subdivided into \emph{explicit} and \emph{implicit}
 non-deterministic choice.  In explicit non-deterministic choice a
 role\footnote{As further explained later, the behaviors
   of protocol participants, e.g., sender, receiver, server,
   etc., are described by their respective \emph{roles}.  Since a
   protocol may have multiple sessions, various participants, called
\emph{principals}, may play
   the same role in different sessions.}
 chooses either one
 branch or another at a choice point \ non- deterministically.  In implicit non-deterministic choice a logical \emph{choice variable} is introduced which may be non-deterministically instantiated
 by the role.   Deterministic choice is subdivided into (explicit) \emph{if-then-else} choice and \emph{implicit deterministic choice}.  In if-then-else
 choice a predicate is evaluated.    If the predicate evaluates to true one branch is chosen, and if it evaluates to false another branch is chosen. 
 Deterministic choice with more than two choices can be modeled by 
nesting of if-then-else choices. In implicit deterministic choice, a term pattern is used as an implicit guard, so that only messages matching such pattern can be chosen i.e., accepted, by the role. 
 Although implicit deterministic choice can be considered a special case of if-then-else choice in which the
 second branch is empty, it is often simpler to treat it separately.   Classifying choice in this way allows us to represent all possible behaviors of a protocol by a finite number
 of strands modeling possible executions, while still allowing the variables
 used in implicit non-deterministic and deterministic choice to be instantiated in an infinite number of ways.

 \subsection{A Motivating Example}\label{sec:ex}

In this section we introduce a protocol that we will use as a running example.
It is a simplified version of the handshake protocol in TLS 1.3
\cite{tls1.3.12},
 a proposed update to the TLS standard for client-server authentication.   This protocol, like most other protocol standards,
offers a number of different choices that 
are applied in different situations.  In order to make the presentation and discussion manageable, 
we present only a subset here:  the client chooses a  Diffie-Hellman group, and proposes it to the server.  The server can either accept it or request that the client
proposes a different group.
In addition, the server has the option of requesting that the client authenticates itself.  We present the protocol at a high level similar to the style used in \cite{tls1.3.12}.

 \begin{example}\label{ex:tlsinf} 
 We let a dashed arrow $\dasharrow$ denote an optional message, and an asterisk  * denote an optional field.
 	\begin{enumerate}
		\item $C \rightarrow S:$  ClientHello, Key\_Share \\
		The client sends a Hello message containing a nonce and the Diffie-Hellman group it wants to use.  It also sends a Diffie-Hellman key share.  
		\begin{itemize}
			\item{1.1} $S \dashrightarrow C:$  HelloRetryRequest \\
		The server may optionally reject the Diffie-Hellman group proposed by the client and request a new one.	
			\item{1.2} $C \dashrightarrow S:$  DHGroup, Key\_Share \\
		The client proposes a new group and sends a new key share.  
		\end{itemize} 
		\item $S \rightarrow C:$ ServerHello, Key\_Share, \\\mbox{}\hspace{8ex} \{AuthReq*\},\{CertificateVerify\}, \{Finished\}\\
		The server sends its own Hello message and a Diffie-Hellman key share.  It may optionally send an AuthReq to the client to authenticate itself
		with a public key signature from its public key certificate.  It then signs the entire handshake using its own public key in the CertificateVerify field.
		Finally, in the Finished field it computes a  MAC over the entire handshake using the shared Diffie-Hellman key.  
		The $\{\}$ notation denotes a field encrypted using the shared Diffie-Hellman key.
		\item $C \rightarrow S:$  $\{$CertificateVerify*$\}$, $\{$Finished$\}$ \\
		If the client received an AuthReq from the server it returns its own CertificateVerify and Finished fields.
	\end{enumerate}
 \end{example}

\subsection{Plan of the paper}  

The rest of the paper is organized as follows.
After some preliminaries in Section \ref{sec:prelim} and a high level introduction of the Maude-NPA tool in Section \ref{sec:mnpa-syntax}, we first define the process
algebra syntax and operational semantics in Section \ref{sec:PA}. 
In Section \ref{sec:Cstr} we extend Maude-NPA's strand space syntax to include choice operators.
The main bisimulation results between the expected forwards semantics of the process algebra in Section \ref{sec:PA}
and the original symbolic backwards strand
 semantics of Maude-NPA of Section \ref{sec:mnpa-syntax} 
are Theorems~\ref{thm: Soundness-PA}~and~\ref{thm: Completeness-PA}.
They are proved 
by introducing an intermediate semantics, a forward strand space semantics originally introduced in \cite{Escobar2014FW}. 
First, in Section~\ref{sec:Cstr} we extend the strand space model with constraints, since strands are the basis
of both the forwards semantics and the backwards semantics of Maude-NPA.
In Section~\ref{sec:CstrFW} we 
augment
the forwards strand space semantics of \cite{Escobar2014FW} 
with choice operators and operational semantic rules to produce a \emph{constrained forwards semantics}.
In Section~\ref{sec:PA-CstrFW-bisim} we prove bisimilarity of
the process algebra semantics of Section \ref{sec:PA}
and the constrained forwards semantics of Section \ref{sec:CstrFW}.
In~\cite{Escobar2014FW} 
the forwards strand space semantics
was proved sound and complete w.r.t. the original symbolic backwards semantics of Maude-NPA
and, therefore, such proofs had to be extended to handling constraints.
In Section~\ref{sec:CstrBW} we 
augment
the original symbolic backwards semantics of Maude-NPA
with choice operators and operational semantic rules to produce a
\emph{constrained backwards semantics}.
In Section~\ref{sec:proof-CstrFW-CstrBW}
we then prove  
that the constrained backwards semantics is sound and complete
with respect to the constrained forwards semantics.
By combining the bisimulation
between the process algebra and the constrained forwards semantics on the one hand
(Theorem~\ref{thm:Bisimulation})
and the bisimulation 
between the constrained forwards semantics and the constrained backwards semantics on the other hand 
(Theorems~\ref{thm:soundness} and~\ref{thm:completeness})
we obtain the main bisimulation results
(Theorems~\ref{thm: Soundness-PA} and~\ref{thm: Completeness-PA}).
Finally, in Section \ref{sec:Exp} we describe 
how the process algebra has been fully integrated into Maude-NPA
and show  some 
experiments we have run using Maude-NPA on various protocols
exhibiting both deterministic and non-deterministic choice. 
In Section \ref{sec:related-work}
we  discuss related and future work, in particular the potential of using the process algebra syntax as a specification language.
Finally,
we conclude in Section~\ref{sec:Conclusion}. 
\section{Preliminaries}\label{sec:prelim}

We follow the classical notation and terminology 
for term rewriting 
and 
for rewriting logic and order-sorted notions, see \cite{Meseguer92}.
%
We assume an order-sorted signature ${\bf \Sigma} = (\sort{S}, \leq, \Sigma)$
with poset of sorts $(\sort{S}, \leq)$.  
We also assume an $\sort{S}$-sorted family
$\Variables=\{\Variables_\sort{s}\}_{\sort{s} \in \sort{S}}$
of disjoint variable sets with each $\Variables_\sort{s}$
countably infinite.
$\TermsS{\sort{s}}$
is the set of terms of sort \sort{s},
and
$\GTermsS{\sort{s}}$ is the set of ground terms of sort \sort{s}.
We write
$\Terms$ and $\GTerms$ for the corresponding order-sorted term algebras.
For a term $t$, $\var{t}$ denotes the set of variables in $t$.


A \textit{substitution} $\sigma\in\Substs$ is a sorted mapping from a finite
subset of $\Variables$ to $\Terms$.
Substitutions are written as
$\sigma=\{X_1 \mapsto t_1,\ldots,X_n  \mapsto t_n\}$ where
the domain of $\sigma$ is
$\domain{\sigma}=\{X_1,\ldots,X_n\}$
and
the set
of variables introduced by terms $t_1,\ldots,\allowbreak t_n$ is written $\range{\sigma}$.
The identity
substitution is denoted $\idsubst$.  Substitutions are homomorphically extended
to $\Terms$.
The application of a substitution $\sigma$ to a term $t$ is
denoted by $t\sigma$. 
For simplicity, we assume that every substitution is idempotent,
i.e., $\sigma$ satisfies $\domain{\sigma}\cap\range{\sigma}=\emptyset$.
This ensures $t\sigma=(t\sigma)\sigma$.
The restriction of $\sigma$ to a set of variables
$V$ is $\subterm{\sigma}{V}$. 
Composition of two substitutions $\sigma$ and $\sigma'$ is denoted by $\sigma\compose\sigma'$.
 A substitution $\sigma$ is a ground substitution if $\range{\sigma} =\emptyset$.

A \textit{$\Symbols$-equation} is an unoriented pair $t = t'$, where
$t,t' \in \TermsS{s}$ for some sort $\sort{s}\in\sort{S}$.  Given
$\Symbols$ and a set $E$ of $\Symbols$-equations, 
order-sorted
equational logic induces a congruence relation $\congr{E}$ on terms
$t,t' \in \Terms$. 
The $E$-equivalence class of a term $t$ is denoted by $[t]_E$
and 
$\ETerms$ and $\GETerms$ denote the corresponding order-sorted term algebras
modulo $E$.
Throughout this paper we
assume that $\GTermsS{s}\neq\emptyset$ for every sort \sort{s},
because this affords a simpler deduction system.
An \emph{equational theory} $(\Symbols,E)$ is a
pair with $\Symbols$ an order-sorted signature and $E$ a
set of $\Symbols$-equations.
%
%
The \textit{$E$-subsumption} preorder $\sqsupseteq_{E}$ (or just $\sqsupseteq$ if $E$
is understood) holds between
$t,t' \in \Terms$, denoted $t \sqsupseteq_{E} t'$ (meaning that $t$ is
\emph{more general} than $t'$ modulo $E$), if there
is a substitution $\sigma$ such that
$t\sigma \congr{E} t'$; such a substitution $\sigma$ is said to be
an \textit{$E$-match} from $t'$ to $t$.

An \textit{$E$-unifier} for a $\Symbols$-equation $t = t'$ is a
substitution $\sigma$ such that $t\sigma \congr{E} t'\sigma$.  For
$\var{t}\cup\var{t'} \subseteq W$, a set of substitutions $\csuV{t =
  t'}{W}{E}$ is said to be a \textit{complete} set of unifiers
for the equality $t = t'$ modulo $E$ away from $W$ iff:
(i) each $\sigma \in
\csuV{t = t'}{W}{E}$ is an $E$-unifier of $t = t'$;
(ii) for
any $E$-unifier $\rho$ of $t = t'$ there is a $\sigma \in
\csuV{t=t'}{W}{E}$ such that $\subterm{\sigma}{W} \sqsupseteq_{E} \subterm{\rho}{W}$; 
(iii) for all
$\sigma \in \csuV{t=t'}{W}{E}$, $\domain{\sigma} \subseteq
(\var{t}\cup\var{t'})$ and $\range{\sigma} \cap W = \emptyset$.
If the set of variables $W$ is irrelevant or is understood from the context,
we write $\csu{t = t'}{W}{E}$ instead of $\csuV{t = t'}{W}{E}$.
An
$E$-unification algorithm is \textit{complete} if for any equation $t
= t'$ it generates a complete set of $E$-unifiers.  
A unification algorithm is said to be
\textit{finitary} and complete if it always terminates after
generating a finite and complete set of solutions.


A \textit{rewrite rule} is an oriented pair $l \to r$,
where\footnote{We do not impose the requirement
$\var{r} \subseteq \var{l}$, since extra variables
(e.g., choice variables)  may be introduced in the righthand side of a rule.
Rewriting with extra variables in righthand sides is handled by
allowing the matching substitution to instantiate these extra
variables in any possible way. Although 
this may produce an infinite number of one-step concrete rewrites from a term
due to the infinite number of possible instantiations, the 
symbolic, narrowing-based analysis 
used by Maude-NPA and explained below can
cover all those infinite possibilities in a finitary way.
}
$l \not\in \Variables$ 
and $l,r \in \TermsS{s}$ for some
sort $\sort{s}\in\sort{S}$.
An \textit{(unconditional) order-sorted
rewrite theory} is a triple $(\Symbols,E,R)$ with $\Symbols$ an
order-sorted signature, $E$ a set of $\Symbols$-equations, and $R$ a set
of rewrite rules.

The rewriting relation on
$\Terms$, written $t \rewrite{R} t'$  or
$t \rewrite{p,R} t'$
holds between $t$ and $t'$ iff
there exist $p \in \funocc{t}$,
$l \to r\in R$ and a substitution $\sigma$, such that
$\subterm{t}{p} = l\sigma$,
and $t' = \replace{t}{p}{r\sigma}$.
The subterm $t|_{p}$ is called a \emph{redex}.
The relation $\rewrite{R/E}$
on $\Terms$ is
${\congr{E} \composeRel\rewrite{R}\composeRel\congr{E}}$,
i.e.,
$t \rewrite{R/E} t'$
iff there exists $u,u'$ s.t.
${t \congr{E} u \rewrite{R} u' \congr{E} t'}$.
Note that
$\rewrite{R/E}$ on $\Terms$
induces a relation
$\rewrite{R/E}$ on the free $(\Sigma,E)$-algebra ${\TermsOn{\Symbols\!/\!{E}}{\Variables}{}{}{}}$
by
${[t]_{E} \rewrite{R/E} [t']_{E}}$ iff $t \rewrite{R/E} t'$.
The transitive (resp. transitive and reflexive) closure of $\rewrite{R/E}$ is denoted
$\rewrite{R/E}^+$ (resp. $\rewrites{R/E}$).


The
$\rewrite{R/E}$ relation can be difficult to compute.  However, under the appropriate
conditions it is equivalent to the $R{,}E$ relation 
in which it is enough to compute the relationship on any representatives of two
$E$-equivalence classes.
A relation $\rewrite{R,E}$ on $\Terms$ is defined as:
  $t \rewrite{p,R,E} t'$ (or just $t \rewrite{R,E} t'$)
  iff there exist $p \in \funocc{t}$,
  a rule $l \to r$ in $R$,
  and a substitution $\sigma$ such that
  $\subterm{t}{p} \congr{E} l\sigma$
  and $t' = \replace{t}{p}{r\sigma}$.

 Let $t$ be a term and $W$ be a set of variables such that
  $\var{t} \subseteq W$,
  the $R,E$-\textit{narrowing} relation on $\Terms$ is defined as 
  $t \leadsto_{p,\sigma,R,E} t'$ 
  ($\leadsto_{\sigma,R,E}$ if $p$ is understood, 
  $\leadsto_{\sigma}$ if $R,E$ are also understood,
  and
   $\leadsto$ if $\sigma$ is also understood) 
  if there is a non-variable position $p \in \funocc{t}$, 
  a rule $l \to r \in R$ properly renamed 
  s.t. $(\var{l}\cup\var{r})\cap W=\emptyset$, 
  and 
  a unifier
   $\sigma \in \csuV{\subterm{t}{p}=l}{W'}{E}$ 
  for $W' = W\cup\var{l}$,
  such that
  $t'=(\replace{t}{p}{r})\sigma$.
For convenience, 
in each narrowing step $t \leadsto_{\sigma} t'$
we only specify the part of $\sigma$ that binds variables of $t$.
The transitive (resp. transitive and reflexive) closure of $\leadsto$ is denoted by
$\leadsto^+$ (resp. $\leadsto^*$).
We may write $t \narrowG{}{\sigma}{k} t'$ 
if there are
$u_{1},\ldots,u_{k-1}$ and substitutions $\rho_{1},\ldots,\rho_{k}$
such that $t \narrow{}{\rho_{1}} \allowbreak u_{1} \allowbreak  \cdots \allowbreak u_{k-1}
\narrow{}{\rho_{k}} t'$, $k \geq 0$, and $\sigma =
\rho_{1}\cdots\rho_{k}$.  

Maude-NPA uses \emph{backwards narrowing} (i.e., uses protocol  rules
$l \rightarrow r$ ``in reverse'' as rules $r \rightarrow l$)
\emph{modulo} the algebraic properties of cryptographic functions
as a \emph{sound and complete} reachability analysis method.
Section \ref{sec:proof-CstrFW-CstrBW} gives a detailed
proof of the soundness and completeness of this
method for strands with choice.



\section{Overview of Maude-NPA}\label{sec:mnpa-syntax}
 
 Here we give a high-level summary of Maude-NPA.
For further details please see
\cite{MNPAmanual3.0}.
 
 
 %

 
 Given a protocol $\mathcal{P}$,
we define  its specification in the strand space model 
as a rewrite theory of the form $(\NPASymbols,\NPAEq, \PBWRls)$,
where 
(i)
the signature $\NPASymbols$ is split into predefined symbols $\SSSymbols$ for strand syntax
and user-definable symbols $\PSymbols$ based on a parametric sort
\sort{Msg} of messages,
(ii)
the algebraic properties $\NPAEq$ are also split
into the algebraic properties of the strand notation $\SSEq$
and
the user-definable algebraic properties $\PEq$ for the cryptographic functions,
and 
(iii) the transition rules $\PBWRls$ are defined on states, i.e.,
terms of a predefined sort \sort{State}.  They are \emph{reversed}
for backwards execution.


In Maude-NPA 
states are modeled as
elements of an initial algebra $\mathcal{T}_{\NPASymbols/\NPAEq}$,
i.e.,
an $\NPAEq$-equivalence class $[t]_{\NPAEq}\in
\caT_{\NPASymbols/ \NPAEq}$ with $t$ a ground
$\NPASymbols$-term.
%
%
A state 
has the form
$\{ S_1 \,\&\, \cdots \,\&\, S_n \,\&\, \{\textit{IK}\}\}$
where $\&$ is an associative-commutative union 
operator with identity symbol $\emptyset$. 
Each element in
the set is either a \emph{strand} 
$S_i$ or the \emph{intruder knowledge} $\{\textit{IK}\}$ at that state.

The \emph{intruder knowledge} $\{\textit{IK}\}$
belongs to the state and
is represented as a set of facts 
using the comma as an associative-commutative union operator 
with identity element $empty$. 
There are two kinds
of intruder facts: \emph{positive} knowledge facts (the intruder knows
message $m$,
i.e., $\inI{m}$), and \emph{negative} knowledge facts (the intruder \emph{does not
yet know} $m$ but \emph{will know it in a future state}, i.e., $\nI{m}$),
where $m$ is a message expression.

A \emph{strand}
\cite{strands} specifies the sequence of messages sent and received
by a principal executing a given role in the protocol and is represented as a sequence
of messages 
$
[\textit{msg}_1^\pm, \textit{msg}_2^\pm, \textit{msg}_3^\pm,
 \ldots,\allowbreak \textit{msg}_{k-1}^\pm, \textit{msg}_k^\pm]
 $ 
 with 
$\textit{msg}_{i}^\pm$ 
either
$\textit{msg}_i^-$ (also written $-\textit{msg}_i$) representing an input
message, 
or
$\textit{msg}_i^+$ (also written $+\textit{msg}_i$) representing an output message.
Note that 
each $\textit{msg}_{i}$ is a term 
of a special sort \textsf{Msg}.

Strands are used to represent both the actions of honest principals (with a strand 
specified for each protocol role) 
and the actions of an intruder 
(with a strand 
for each action an intruder is able to perform on messages).
In Maude-NPA strands evolve over time;
the symbol $|$ is used to divide past and future.
That is, given a strand 
$[\ \textit{msg}_1^{\pm},\ \ldots,\ \textit{msg}_i^{\pm}\ |\allowbreak\ \textit{msg}_{i+1}^{\pm},\ \ldots,\ \textit{msg}_k^{\pm}\ ]$,
 messages $\textit{msg}_1^\pm,\linebreak[2] \ldots, \textit{msg}_{i}^\pm$ are
the \emph{past messages}, and messages $\textit{msg}_{i+1}^\pm, \ldots,\textit{msg}_k^\pm$
are the \emph{future messages} ($\textit{msg}_{i+1}^\pm$ is the immediate future
message).
A strand 
$[\textit{msg}_1^\pm, \linebreak[2]\ldots, \linebreak[2] \textit{msg}_k^\pm]$ 
is shorthand for 
$[nil ~|~ \textit{msg}_1^\pm,\linebreak[2] \ldots,\linebreak[2] \textit{msg}_k^\pm , nil ]$.
An \emph{initial state} is a state where the bar is at the beginning for all strands in the state,
and the intruder knowledge has no fact of the form $\inI{\textit{m}}$. 
A \emph{final state} is a state where the bar is at the end for all strands in the state
and there is no intruder fact of the form $\nI{\textit{m}}$.


Since the number of 
states $T_{\NPASymbols/ \NPAEq}$
is in general infinite, rather than exploring concrete protocol states
$[t]_{\NPAEq}\in T_{\NPASymbols/\NPAEq}$  Maude-NPA 
explores
\emph{symbolic strand state patterns}  
 $[t(x_{1},\ldots,x_{n})]_{\NPAEq} \in
T_{\NPASymbols/ \NPAEq}(\Variables)$ on the free
$(\NPASymbols, \allowbreak \NPAEq)$-algebra over a set of
variables $\Variables$. In this way, a state pattern $[t(x_{1},\ldots,x_{n})]_{\NPAEq}$
represents not a single concrete state but a possibly infinite set of
such states, namely all the \emph{instances} of the pattern
$[t(x_{1},\ldots,x_{n})]_{\NPAEq}$ where the variables $x_{1},\ldots,x_{n}$
have been instantiated by concrete ground terms. 

The semantics of Maude-NPA is expressed in terms of the following 
\emph{forward rewrite rules} that describe how a protocol moves from one state to another 
via the intruder's interaction with it.   

\begin{small}	
		\begin{align}
		&  \{
		SS\ 
		\&\ [L ~|~ M^-, L']\ 
		\&\ \{\inI{M},IK\}\}		
		\to 
		 \{
		\textit{SS}\ 
		\&\ [L, M^- ~|~ L']\ 
		\&\ \{\inI{M},IK\} 
		\} \eqname{-}
		\label{eq:negative-1}
		\\[1ex]
		&   \{  
		SS\ 
		\&\ [L ~|~ M^+, L']\ 
		\&\ \{IK\}
		\} 
		\to 
		 \{  
		SS \ 
		\&\ [L, M^+ ~|~ L']\
		\&\ \{IK\}
		\}   \eqname{+}
		\label{eq:positiveNoLearn-2}
		\end{align}

		\begin{align}
		&   \{  
		SS\ 
		\&\ [L ~|~ M^+, L']\ 
		\&\ \{\nI{M},IK\}
		\} 
		\to 
		  \{  
		SS \ 
		\&\ [L, M^+ ~|~ L']\
		\&\ \{\inI{M}, IK\}
		\}   \eqname{++}
		\label{eq:positiveLearn-4}\\[1ex]
& {	\begin{array}[t]{@{}l@{}}
	\forall \ [ l_1, u^+, l_2] \in \caP: 
	  \{SS \, \& \, [\, l_1 |  \,  u^+, l_2\,] \,\&\, \{\nI{u},IK\}\}  
	\to   \{SS \,  \& \,  \{\inI{u},IK\}\}
	\end{array} 
	\label{eq:newstrand-positive}%
}\eqname{\&}
	\end{align}%
\end{small}

\noindent
where   $L$ and $L'$ are variables denoting a list of strand messages,
$\textit{IK}$ is a variable for a set of intruder facts 
(\inI{m} or \nI{m}),
 $\textit{SS}$ is a variable denoting a set of strands,
 and $l_1$, $l_2$ denote a list of strand messages.
The set $\PBWRls$
of \emph{backwards}  state transition rules is
 defined by reversing the direction of the above set of rules
 $\{ 
    \eqref{eq:negative-1},
    \eqref{eq:positiveNoLearn-2},
    \eqref{eq:positiveLearn-4}\} 
    \cup 
    \eqref{eq:newstrand-positive} $.
In the backwards executions of $\eqref{eq:newstrand-positive},\eqref{eq:positiveLearn-4}$,
$\nI{u}$ marks when the intruder learnt $u$.

One uses Maude-NPA to find an attack by specifying an insecure state pattern called an \emph{attack pattern}.  Maude-NPA attempts to find a path from an initial state to the attack pattern via backwards narrowing (narrowing using the rewrite rules with the orientation reversed). 
That is, a 
narrowing sequence from an initial state to 
an attack state is 
searched \emph{in reverse} as  a \emph{backwards path} from the attack state to the initial state. 
Maude-NPA attempts to find paths until it can no longer form any backwards narrowing steps, at which point it terminates.
If  at that point it has not found an initial state, the attack
pattern is shown to be \emph{unreachable} modulo $\NPAEq$.
(Section \ref{sec:proof-CstrFW-CstrBW} gives a detailed
proof of the soundness and completeness of this symbolic
method for the Maude-NPA extension supporting strands with choice).
 Note that Maude-NPA places
\emph{no bound on the number of sessions}, so reachability is undecidable in general.  
Note also that Maude-NPA does not perform any data abstraction
such as a bounded number of nonces. 
However, the tool makes use of a number of sound and complete state space reduction techniques that help to identify unreachable and redundant states, 
and thus make termination more likely.

\section{A Process Algebra for  Protocols with Choice}\label{sec:PA}

In this section we define a process algebra that extends the strand
space model to naturally specify
protocols exhibiting choice points.
Throughout the paper we refer to this process algebra as the
\emph{protocol process algebra}.

The rest of this section is organized as follows.
First, in Section~\ref{sec:syntaxPA} we define the syntax of the protocol  process algebra
and state the requirements that a \emph{well-formed process} must satisfy. 
Then in Section~\ref{sec:PASpecification}, we explain how \emph{protocol specifications} can be defined in this process algebra.
In Section~\ref{sec:semanticsPA} we then define the 
\emph{operational semantics} of the protocol  process algebra.
Note that the operational semantics of Maude-NPA given in Section~\ref{sec:mnpa-syntax} 
corresponds to a symbolic backwards semantics
while in Section~\ref{sec:semanticsPA} we give a rewriting-based
forwards semantics for process algebra.  
Sections \ref{sec:PA-CstrFW-bisim} and \ref{sec:proof-CstrFW-CstrBW} will
relate these two semantics using \emph{bisimulations}.

\subsection{Syntax of the Protocol Process Algebra }\label{sec:syntaxPA}

In the \emph{protocol process algebra} the behaviors of both honest principals
and the intruder are represented  by \emph{labeled processes}. 
Therefore, a protocol is specified as a set of labeled processes.
Each process performs a sequence of actions, namely, sending or receiving
a message, and may perform deterministic or non-deterministic choices. 
The protocol process algebra's syntax is 
parameterized\footnote{More precisely, as explained in
Section \ref{sec:PASpecification},
they are parameterized
by a user-definable equational theory
$(\Sigma_{\mathcal{P}},E_{\mathcal{P}})$
having a sort \sort{Msg} of messages.}
 by a sort \sort{Msg} of messages and has the following syntax:

\begin{align}
 \ProcConf~ &::= \LProc ~|~ \ProcConf ~\&~ \ProcConf ~|~ \emptyset \notag\\[-.8ex]
 \LProc ~&::= (\Role, I, J)~ \Proc \notag\\[-.8ex]
 \Proc~ &::=  \nil ~ |~ +\Msg~ | ~ -\Msg ~|~ \Proc \cdot \Proc  ~|  \notag \\[-.8ex]
              &~~~~~~~ \Proc~?~\Proc ~ | ~\textit{if} ~\Cond~ \textit{then} ~\Proc~ \textit{else} ~\Proc    \notag\\[-.8ex]
 \Cond~   &::=  \Msg~ \neq~ \Msg ~| ~\Msg =~ \Msg   \notag 
\end{align} 

\begin{itemize}
	
\item $\ProcConf$ stands for a \emph{process configuration}, that is, a set of labeled processes. The symbol \& 
is used to denote set union for sets of labeled processes. 

\item $\LProc$ stands for a \emph{labeled process}, that is, a process $\Proc$ with a label $(\Role,I,J)$.
$\Role$ 
refers to the role of the process in the protocol (e.g., initiator or responder). $I$ is a 
natural number denoting the identity of the process, which distinguishes different instances(sessions) of a process specification.
$J$ indicates that the action at stage $J$ of the process specification 
will be the next one to be executed, that is, the first $J-1$ actions of the process for role $\Role$ have  already been executed. 
Note that we omit $I$ and $J$ in the protocol specification when both $I$ and $J$ are $0$. 

\item $\Proc$ defines the actions that can be executed within a process. ${+\Msg}$, and ${-\Msg}$ respectively denote
 sending out or receiving a message $\Msg$.  We assume a single channel, through which all messages are sent or received by the intruder.  
``$\Proc~\cdot~\Proc$" denotes \emph{sequential composition} of processes,
 where 
symbol \verb!_._! is associative and has the empty process $\nil$ 
as identity.
``$\Proc~?~\Proc$" denotes an explicit \emph{nondeterministic choice}, whereas 
``$\textit{if} ~\Cond~ \textit{then} \allowbreak ~\Proc~ \textit{else}
~\Proc$" denotes an explicit \emph{deterministic choice}, whose
continuation depends on the 
satisfaction of the constraint $\Cond$. 

%

\item $\Cond$ denotes a constraint that will be evaluated in explicit deterministic choices. In this work we only consider
 constraints that are either equalities ($=$) or disequalities ($\neq$)  between message expressions. 

\end{itemize}

\vspace{-1ex}
Let 
$PS,~QS$, and $RS$ be process configurations,
and
$P,~Q$, and $R$ be protocol processes. 
 Our protocol syntax satisfies the following \emph{structural axioms}:

\begin{small}
\begin{minipage}{0.65\linewidth}
\begin{gather}
PS  \, \& \, QS = QS \, \& \, PS  \\
(PS \, \& \, QS) \, \& \, RS = PS \, \& \, ( QS  \, \& \,  RS)\\
(P \, \cdot \, Q) \cdot \, R  =  P \, \cdot \, (Q \, \cdot \, R)
\end{gather}
\end{minipage}
\begin{minipage}{0.30\linewidth}
\begin{gather}
 PS  \&  \emptyset = PS  \\
P \, \cdot \, \nil = P \\
\nil \, \cdot \, P = P 
\end{gather}
\end{minipage}
\end{small}

The specification of the processes defining a protocol's behavior may contain 
some variables denoting information that 
the principal executing the process does not yet know, or that will be different in different executions. 
  In all protocol specifications we assume
three disjoint kinds of variables:
    \begin{itemize}
        	\item \textbf{\emph{fresh variables}}: these are not really variables in the standard sense,
        	 but \emph{names} for \emph{constant values} in a data type \sort{V_{fresh}}
        	 of unguessable values such as nonces. 
        	 A \emph{fresh variable} $r$ is always associated with a role $ro\in Role$ in the protocol.
        	 For each protocol session $i$, we associate to $r$ a unique name $r.ro.i$ for a constant in the data type \sort{V_{fresh}}.
        	 What is assumed is that if $r.ro.i \neq r'.ro'.j$ (including the case $r.ro.i \neq r.ro.j$), 
        	 the values interpreting $r.ro.i$ and $r'.ro'.j$ in \sort{V_{fresh}} are both \emph{different} and \emph{unguessable}.
        	 In particular, for role $ro\in Role$, the interpretation mapping $I: \{r.ro.i \mid i\in \mathbb{N}\} \rightarrow \sort{V_{fresh}}$ 
        	 is \emph{injective} and \emph{random}.
        	 In our semantics, a constant $r.ro.i$ denotes its (unguessable) interpretation $I(r.ro.i) \in \sort{V_{fresh}}$.
%
%
        	Throughout this paper we will denote this kind of variables
        	as $r,r_1,r_2,\ldots$.

        	\item \textbf{\emph{choice variables}}: variables  first
        	appearing in a \emph{sent message} $\mathit{+M}$, which can be substituted 
        	by any value arbitrarily chosen from a possibly infinite domain.
        A choice variable indicates an  \emph{implicit non-deterministic choice}.
        Given a protocol with choice variables, each possible substitution of 
        these variables denotes a possible continuation of the protocol.
        	We always denote choice variables by uppercase
        	letters postfixed with the symbol ``?'' as a subscript, e.g., $\chV{A},\chV{B},\ldots$.

        	\item \textbf{\emph{pattern variables}}:  variables first appearing
        	in a \emph{received message} $\mathit{-M}$. These variables will be instantiated 
        	when matching sent and received messages. 
        	\emph{Implicit deterministic choices} are indicated by
                terms containing pattern variables, 
        	since failing to match a pattern term may lead to the rejection of a message.
        	A pattern term  plays the implicit role of a guard, 
        	so that, depending on the different ways of matching it, the protocol can have different continuations.
		    This kind of variables will be written with uppercase letters, e.g.,
		    $A,B,N_A,\ldots$.
        \end{itemize} 
        
Note that fresh variables are distinguished from other variables by having a specific sort \sort{Fresh}.
Choice variables or pattern variables can never have sort \sort{Fresh}.

To guarantee the requirements on different kinds of variables that
can appear in a given process, we consider
only  \emph{well-formed} processes.
We make this notion precise
by defining  a function
$\mathit{wf}: \mathit{\Proc} \rightarrow \mathit{Bool}$
checking whether a given process is well-formed. 
A labeled process is \emph{well-formed} if the process it labels is well-formed.
A process configuration is  \emph{well-formed} if all the labeled process 
in it are well-formed.
The definition of $\mathit{wf}$ uses an auxiliary function 
$\mathit{shVar} : \mathit{\Proc}  \rightarrow \mathit{VarSet}$,
retrieving the ``shared variables'' of a process, 
i.e., the set of variables that show up in all branches. 
Below we define both functions,  where 
$P,~Q,$ and $R$ are processes, $M$ is a message, 
and $T$ is a constraint.

\vspace{-1.5ex}
\begin{small}
\begin{align*}
& \bvar{+M ~\cdot P} =  \var{M} \cup \bvar{P} \\[-.5ex]
& \bvar{-M ~\cdot P} =   \var{M} \cup \bvar{P}  \\[-.5ex]
& \bvar{ (\textit{if} ~T~ \textit{then} ~P~ \textit{else} ~Q) ~\cdot R} 
= 
    \var{T} \cup (\bvar{P} \cap \bvar{Q}) \cup \bvar{R}  \\[-.5ex]
& \bvar{ (P~?~Q)~\cdot R } =  ( \bvar{P} \cap \bvar{Q}) \cup \bvar{R} \\[-.5ex]
& \bvar{\nil} = \emptyset
\end{align*}

\vspace{-4ex}

\begin{align*}
& \wf(P \cdot +M) = \wf(P)  
  ~~~~~~~~ \textit{if} ~~ (\var{M} \cap   \var{P}) \subseteq \bvar{P} \\[-.5ex]
& \wf( P \cdot -M) = \wf(P)    
~~~~~~~~ \textit{if} ~~ (\var{M} \cap   \var{P}) \subseteq \bvar{P} \\[-.5ex]
& \wf( P \cdot 
 (\textit{if} ~~T~ \textit{then} ~Q~ \textit{else} ~R))
  = \wf(P \cdot Q) \wedge \wf(P \cdot R) 
~~~~  \textit{if} \  P \neq \nil \ \textit{and } \ Q \neq \nil \ \textit{and }  \\[-.5ex]
& \hspace*{80mm}      ~\var{T} \subseteq \bvar{P}\\[-.5ex]
&\wf(P \cdot (Q~?~R) )= \wf(P \cdot Q)  \wedge \wf(P \cdot R) 
~~~~ ~~~~ \textit{ if } Q \neq \nil \ \textit{or} R \neq \nil \\[-.5ex]
&\wf(P \cdot ~\nil) = \wf(P) \\[-.5ex]
&\wf(\nil) = True.
\end{align*}
\end{small}

\vspace{-4ex}
\begin{remark}
	Note that the well-formedness property implies that if a process begins with a deterministic choice
	action \textit{if T then Q else R}, then all variables in $T$ must be instantiated, and thus only one branch may be taken.  
	For this reason, it is undesirable to specify processes that begin with such an action.
	Furthermore, note that the well-formedness property  avoids 
	explicit choices where both possibilities
	are the $\nil$ process. That is, processes containing either \textit{(if T then nil else nil)}, or \textit{(nil ? nil)}, respectively.
\end{remark}

We illustrate the notion of well-formed process
 below.

\begin{example}\label{ex:Det-tlslike}
 The behavior of a  Client initiating an instance of the handshake protocol from  Example~\ref{ex:tlsinf}  with the Server,
where the Server may or may not request the Client  to authenticate itself, may be
 specified  by the well-formed process shown below:

\vspace{-3ex}
   \begin{small}
 \begin{align*}
 (\mathit{Client}) \ &   +(hs ; n(\chV{C},r_1) ; \chV{G} ; gen(\chV{G}) ; keyG(\chV{G},\chV{C},r_2))\ \cdot \\[-.5ex]
   &  -(hs ; N ; \chV{G} ; gen(\chV{G}) ; E ; Z(\mathit{AReq},\chV{G},E,\chV{C},r_1,S,HM))\ \cdot \\[-.5ex]
   & \mathit{if}  ~( \mathit{AReq} = \mathit{authreq} ) \\[-.5ex]
     &  \mathit{then}  \\[-.8ex]
     & +(e(keyE (\chV{G},E,\chV{C},r_1), \\[-.5ex]
    &   ~~~~~~~~ sig(C, W(HM,\mathit{AReq},\chV{S},G_?,E,\chV{C},r_1)) ; \\[-.5ex]
   &  ~~~~~~~~ mac(keyE(\chV{G},E,,\chV{C},r_1), 
W(HM, \mathit{AReq},S,\chV{G},E,\chV{C},r_1))))\ \cdot \\[-.5ex]
  & \mathit{else}  \\[-.5ex] 
 &  +(e(keyE( \chV{G},E,\chV{C},r_2),  \\[-.5ex]
 & ~~~~~~~~  mac(keyE(\chV{G},E,\chV{C},r_2),  
W(HM, \mathit{AReq},S,\chV{G},E,\chV{C},r_1))))
 \end{align*}
 \end{small}


\vspace{-3ex}

 \noindent  where  
 $\mathit{KeyG}$, $Z$ and $W$ are macros used to construct
 messages sent in the protocol.  The variables
 $\chV{C}$ and   $\chV{G}$ are choice variables denoting
 the client and Diffie-Hellman group respectively, and the variables $r_1$ and $r_2$ are fresh variables.  All other variables are pattern variables.
 In particular, the variable $\mathit{AReq}$ is a pattern variable
 that  can be instantiated to either
 $\mathit{authreq}$ or $\mathit{noauthreq}$.  The Client makes a deterministic choice whether or not
 to sign its next message with its digital signature, depending
 on which value of $\mathit{AReq}$ it receives.
 
 \end{example}

\begin{example}\label{ex:non-tlslike}
The behavior of a Server who may or may not request a retry from a Client in an instance of the handshake protocol from  Example~\ref{ex:tlsinf}
may be specified as follows:

\vspace{-3ex}
 \begin{small}
 \begin{align*}
 (\mathit{Server}): \ &   -(hs ; N ; G ; gen(G) ; E) \cdot \\[-.5ex]
   &   (((+(hs ; retry) \cdot \\[-.5ex]
   & ~~~ -(hs ; N' ; G' ; gen(G') ; E') \cdot  \\[-.5ex]
& ~~~   +(hs ; n(\chV{S},r1) ; G' ; gen(G') ; keyG(G',\chV{S},r_2) ;  
Z(\chV{AReq},G',E',S,r_2,\chV{S},HM)))  \\[-.5ex]
&   ?   \\[-.5ex]
  &  ( + (hs ; n(\chV{S},r1) ; G ; gen(G) ; keyG(G,S,r_2) ; 
Z(\chV{AReq},G,E,S,r_2,\chV{S},HM)))))
 \end{align*}
 \end{small}
 \end{example}
 In this case the server nondeterministically chooses to request or not to
 request a retry.  In the case of a retry it waits for the retry message from the client, and then
 proceeds with the handshake message using the new key information from the
 client.  In the case when it does not request a retry, it sends the handshake message immediately
 after receiving the client's Hello message. The variable $r_2$ is a fresh variable, while $\chV{S}$ and $\chV{AReq}$
 are choice variables. $\chV{S}$ denotes the name of the server, and $\chV{AReq}$ is nondeterministically
 instantiated 
 to $\mathit{authreq}$ or $\mathit{noauthreq}$.

 
 \begin{example}\label{ex:ill-formed}
The following process 
does not satisfy the well-formedness property.

\vspace{-3ex}

\begin{small}
  \begin{align*}
   \mathit{(Resp)} &-(pk(B, A ; NA)) \cdot \\
					   & \ (+(pk(A, 1 ; n(B,r))) \ ? \ +(pk(A, 2))) \ \cdot \\
					   &  +(pk(\chV{C}, n(B,r) ))
  \end{align*}
  \end{small}

\vspace{-3ex}

  \noindent
  The problem with this process is the fresh variable $r$
  appearing in  message \allowbreak $+(pk(\chV{C}, n(B,r) ))$, since

  \begin{small}
  $$r \notin \bvar{ -(pk(B, A ; NA)) 
  \cdot (+(pk(A, 1 ; n(B,r))) \ ? \  +(pk(A, 2)))}$$
  \end{small}

\noindent
more specifically, because it does not appear in message   
$+(pk(A, 2))$, 
  but 
%
  $r \in \var{-(pk(B, A ; NA)) \cdot (+(pk(A, 1 ;  n(B,r))) \ ? \ +(pk(A, 2)))}.$
 \end{example} 
 
\subsection{Protocol Specification in Process Algebra} \label{sec:PASpecification}
Given a protocol $\caP$, we define  its specification in the 
protocol process algebra, written $\SpecPA$,
as a pair of the form $\SpecPA = ((\PASymbols,\PAPEq), \ProcPA)$,
where $(\PASymbols,\PAPEq)$ is an equational theory explained below, and
 $\ProcPA$ is a term denoting a \emph{well-formed} process configuration representing
the behavior of the honest principals as well as the 
 capabilities of the attacker. That is, 
$\ProcPA = (\mathit{ro_1}) P_1 ~\&~ \ldots  ~\&~ (\mathit{ro_i}) P_i $,
where each $ro_k$, $1 \leq k \leq i$, 
is either the role of an honest principal or
identifies one of the capabilities of the attacker.
 $\ProcPA$ cannot contain two processes with the same label, 
i.e., the behavior of each honest principal, and each attacker capability
are represented by a \emph{unique} process. $\PAPEq =\PEq \cup \PAEq$ is a set of equations with $\PEq$ denoting the protocol's cryptographic properties and $\PAEq$ denoting the properties of process constructors. 
The set of equations $\PEq$ is user-definable and can
vary for different protocols.   Instead,
the set of equations $\PAEq$ is always the same for all protocols. 
$\PASymbols=\PSymbols \cup \BNFSymbols$ is the signature
defining the sorts and function symbols as follows:
 
 \begin{itemize}
 \item $\PSymbols$ is an order-sorted signature defining the sorts and function symbols
        for the messages that can be exchanged in protocol $\caP$.
       However, independently of protocol $\caP$, $\PSymbols$  must always
        have  a sort  \sort{Msg} as the top sort in one of its 
       connected components.
        We call a sort $\sort{S}$ a \emph{data sort} iff it is either a subsort of $\sort{Msg}$, or there is a message constructor $c : \sort{S_1}...\sort{S}...\sort{S_n} \rightarrow \sort{S'}$, with $\sort{S'}$ a subsort of $\sort{Msg}$.
      The specific sort  \sort{Fresh} for fresh variables is an example of \emph{data sort}.
      Choice and pattern variables have sort \sort{Msg} or any
      of its subsorts.

 \item $\BNFSymbols$ is an order-sorted signature defining the sorts 
        and function symbols of the \emph{process algebra infrastructure}. 
        $\BNFSymbols$ corresponds
       exactly to the BNF definition 
       of the protocol process algebra's syntax in Section \ref{sec:syntaxPA}. 
        Although it has a sort \sort{Msg} for messages, it leaves this sort totally unspecified, 
        so that different protocols $\caP$ may use completely different message constructors and 
        may satisfy different equational properties $\PEq$. Therefore, $\BNFSymbols$
        will be the same signature for any protocol
       specified in the process algebra. 
       More specifically,  $\BNFSymbols$ 
       contains the sorts for  process configurations (\sort{ProcConf}),
        labeled processes (\sort{LProc}),  processes (\sort{Proc}), 
        constraints (\sort{Cond}), and messages(\sort{Msg}),
        as well as the subsort relations
        $\sort{LProc} \, < \, \sort{ProcConf}$.
       Furthermore, the function symbols in $\BNFSymbols$ are also defined according to the BNF definition.
\end{itemize}
Therefore, the syntax $\PASymbols$ of processes for $\caP$ will be in the union signature $\BNFSymbols \cup \PSymbols$,
consisting of the protocol-specific syntax $\PSymbols$, and the generic process syntax $\BNFSymbols$ through the shared sort \sort{Msg}.

\subsection{Operational Semantics of the Protocol Process Algebra}\label{sec:semanticsPA}

Given a protocol $\mathcal{P}$, 
a \emph{state} of $\caP$ consists of a set 
of (possibly partially executed) \emph{labeled processes}, and a set of terms in the intruder's knowledge $\{IK\}$.
That is, a state is a term of the form 
$\{ LP_1 \,\&\, \cdots \,\&\, LP_n ~|~ \{\textit{IK}\}\}$.
%
Given a state $St$ of this form,  we abuse notation and write 
$LP_k \in St$ if $LP_k$ is a labeled process in the set $LP_1 \,\&\, \cdots \,\&\, LP_n$.

The intruder knowledge $IK$ models the \emph{single} channel
through which all messages are sent and received. 
We consider an active attacker who has complete control of the channel, 
i.e, can read, alter, redirect, and delete traffic as well as create
its own messages by means of
\emph{intruder processes}.  That is, the purpose of some $LP_k \in St$
is to perform message-manipulation actions for the intruder.

State changes are defined by a set $\PARls$ of \emph{rewrite rules}, such that
the rewrite theory $(\PAStateSymbols, \PAPEq , \PARls)$ characterizes the behavior of 
protocol $\caP$, where  $\PAStateSymbols$ extends $\PASymbols$
by adding state constructor symbols.
We assume that a protocol's
execution begins with an empty state, i.e., a state with an empty
set of labeled processes, and an empty intruder knowledge. 
That is, the initial state is always of the form 
$ \{ \emptyset ~|~ \{empty\}  \}$.
Each transition rule in $\PARls$ is labeled with a tuple of the form $\mathit{(ro,i, j, a,n)}$, where:

\begin{itemize}
	\item $\mathit{ro}$ is  the role of the labeled process being executed in the transition.
     
    \item $i$ denotes the identifier of the labeled process being executed in the transition. Since there can be more than one process instance of the same role in a process state, $i$ is used to distinguish different instances, i.e., $ro$ and $i$ together uniquely identify a process in a state.
	
	\item $j$ denotes the process' step number since its beginning.

	\item $a$ is a ground term identifying the action that is being performed in the transition.
		  It has different possible values: 
		    ``$+m$'' or ``$-m$'' if the message $m$ was sent  (and added to the intruder's knowledge) or received, respectively;
		    ``$m$'' if the message $m$ was sent  but did not increase the intruder's knowledge,
		    ``$?$'' if the transition performs an explicit non-deterministic choice, or ``$\mathit{T}$'' if the transition 
		  performs a explicit deterministic choice.
	\item $n$ is a number that, if the action that is being executed is an explicit choice, indicates which branch has been chosen as the process continuation.
		  In this case $n$ takes the value of either $1$ or $2$.
	      If the transition does not perform any explicit choice, then $n=0$.
\end{itemize}


%
  
%
 
Below we describe the set of transition rules that define a protocol's
execution in the protocol process algebra, that is, the set of rules 
$\PARls$. 
Note that in the transition rules shown below, $PS$ denotes the rest of labeled processes of the state (which can be 
the empty set $\emptyset$).

\begin{itemize}
\item 
The action of \emph{sending a message} is represented by the two transition rules
below. Since we assume the intruder has complete control of the network, 
it can learn any message sent by other principals.
Rule~\eqref{eq:pa-output-modIK} denotes the case in which the
sent message is added  to the intruder's knowledge.
Note that this rule can only be applied if the intruder has not
already learnt that message.
%
Rule~\eqref{eq:pa-output-noModIK}
denotes the case in which the intruder chooses not to learn the message, i.e., the intruder's knowledge is not modified,
and, thus, no condition needs to be checked. 
Since choice variables denote messages that are nondeterministically chosen, all (possibly infinitely many) admissible ground substitutions for the choice variables are possible behaviors.

\begin{small}
 \begin{align}
&\{ (ro,i, j)~( +M \cdot P) ~\&~ PS  ~|~ \{IK\}  \} \notag\\[-.5ex]
&  \longrightarrow_{(ro,i,j,+M\sigma,0)}
\{ (ro,i, j+1)~P \sigma ~\&~ PS ~|~ \{ \inI{M\sigma}, IK\}  \} \notag\\[-.5ex]
&  \textit{ if } ( \inI{M\sigma}) \notin \textit{IK}  \notag\\[-.5ex]
&  \textit{where } \sigma  \ 
 \textit{is a ground substitution}
 \textit{ binding choice variables}  
\textit{ in} \  M   \eqname{{\small PA++}}
 \label{eq:pa-output-modIK}
  \end{align}

  
  \begin{align}
  &\{ (ro,i, j)~( +M \cdot P) ~\&~ PS  ~|~ \{IK\}  \} \notag\\[-.5ex]
  &  \longrightarrow_{(ro, i, j, M\sigma,0)}
 \{ (ro,i, j+1)~P\sigma ~\&~ PS ~|~ \{IK\} \} \notag\\[-.5ex]
 &  \textit{where } \sigma 
 \ \textit{is a ground substitution} 
\  \textit{binding choice variables} 
\textit{ in} \  M  \eqname{PA+}
 \label{eq:pa-output-noModIK}
 \end{align}
\end{small}

%
%
\item As shown in the rule below, a process can \emph{receive a message} matching a pattern $M$ if
there  is a message $M'$ in the intruder's knowledge, i.e., a message
previously sent either by some honest principal or by some intruder process, 
that matches the  pattern message $M$. 
After receiving this message the process
will continue with its variables instantiated by the matching substitution, 
which takes place modulo the equations $E_\caP$. Note that the intruder can ``delete" a message   via choosing not to learn it (executing Rule \ref{eq:pa-output-noModIK} instead of Rule \ref{eq:pa-output-modIK})  or not
to deliver it (failing to execute Rule \ref{eq:pa-input}).

\begin{small}
\begin{align}
&\{	(ro,i, j)~( -M\cdot P) ~\&~ PS \mid \{\inI{M'}, IK\} \}\notag\\[-.5ex]
&	\longrightarrow_{(ro,i,j,-M\sigma,0)}
\{	(ro,i, j+1)~P\sigma ~\&~ PS \mid \{ \inI{M'}, IK\}  \} \notag\\[-.5ex]
&~\textit{if}~ M'=_{\PEq} M\sigma  \eqname{PA-}
	\label{eq:pa-input}
\end{align}
\end{small}

\item The two transition rules shown below define the operational semantics of 
\emph{explicit deterministic choice}s. That is, the 
 operational semantics of an
  $\textit{if} ~T~ \textit{then} ~P~ \allowbreak \textit{else} ~Q$ expression.
More specifically, rule~\eqref{eq:pa-detBranch1} describes  the
\textit{then} case, i.e., if the constraint $T$ is satisfied, 
the process will continue as $P$.
Rule~\eqref{eq:pa-detBranch2} describes the \textit{else} case, 
that is, if the constraint $T$ is \emph{not} satisfied, the process will continue as $Q$.
Note that, since we only consider well-formed processes, these transition
rules will only be applied if $j \ge 1$. Note also that since $T$ has been fully substituted by the time the if-then-else is executed, and the constraints that we considered in this paper are of the form $ m \neq_{\PEq} m' $ or $m=_{\PEq} m'$, the satisfiability of $T$ can be checked by checking whether the corresponding ground equality or disequality holds.

\begin{small}
\begin{align}
 & \{ (ro,i, j)~((\textit{if} ~T~ \textit{then} ~P~ \textit{else} ~Q) ~\cdot R) ~\&~ PS \mid \{IK\}\}\notag\\[-.5ex]
 & \longrightarrow_{(ro,i, j, T,1)}
\{ (ro,i, j+1)~(P\cdot R) ~\&~ PS \mid \{IK\} \}
 ~~if~T  
 \eqname{PAif1}
 \label{eq:pa-detBranch1}
\end{align}


\begin{align}
& \{ (ro,i, j)~( (\textit{if} ~T~ \textit{then} ~P~ \textit{else} ~Q) ~\cdot R) ~\&~ PS \mid \{IK\} \} \notag\\[-.5ex]
& \longrightarrow_{(ro,i,j,T,2)}  
\{ (ro,i, j+1)~(Q\cdot R) ~\&~ PS \mid \{IK\} \}
 ~~if~  \neg T 
 \eqname{PAif2}
  \label{eq:pa-detBranch2}
\end{align}
\end{small}

\item The two transition rules below define the semantics of 
\emph{explicit non-deterministic choice} $P~?~Q$. In this case, 
the process can continue either as $P$, denoted by   rule~\eqref{eq:pa-nonDetBranch1},
or as $Q$, denoted by  rule~\eqref{eq:pa-nonDetBranch2}. 
Note that this decision is made non-deterministically. 

\begin{small}
\begin{align}
& \{ (ro,i, j)~((P~?~Q)\cdot R) ~\&~ PS \mid \{IK\} \}\notag\\[-.5ex]
&\longrightarrow_{(ro,i,j,?,1)} 
\{ (ro,i, j+1)~(P \cdot R) ~\&~ PS \mid \{IK\}  \} 
\eqname{PA?1}
\label{eq:pa-nonDetBranch1}\\[.5ex]
& \{ (ro,i, j)~((P~?~Q)\cdot R) ~\&~ PS \mid \{IK\} \}\notag\\[-.5ex]
&\longrightarrow_{(ro,i,j,?,2)} 
\{ (ro,i,j+1)(Q \cdot R) ~\&~ PS \mid \{IK\} \} 
\eqname{PA?2}
\label{eq:pa-nonDetBranch2}
\end{align}
\end{small}

\item 
The transition rules shown below describe the \emph{introduction of a new 
process} from the specification into the state, 
which allows us to support an unbounded session 
model. 
Recall that fresh variables are associated with a role and an identifier. 
Therefore, whenever a new process is introduced: 
(a) the largest process identifier $(i)$ will be increased by 1, and 
(b) new names will be assigned to the fresh variables in the new process.
The function $\MaxProcId(PS, ro)$ in the transition rule below is used to get the largest process identifier $(i)$ of role $ro$ in the process configuration $PS$.
The substitution $\rho_{ro, i+1}$ in the transition rule below takes a labeled process and assigns new names to the fresh variables according to the label.
More specifically, $(ro,{i+1},1)~ P_k(r_1, \ldots, r_n)\rho_{ro, i+1} = (ro,i+1,1)~ P_k(r_1, \ldots, r_n) \{r_1 \mapsto r_1.ro.i+1, \ldots, r_n \mapsto r_n.ro.i+1 \}$.         
 In a process state, a role name together with an identifier uniquely identifies a process. 
Therefore, there is a unique subset of fresh names for each process in the state. In the rest of this paper we will refer to this kind of substitutions as \emph{fresh substitutions}.  

\vspace{-1.5ex}
\begin{small}
 \begin{align}
 \left \{
 \begin{array}{@{}l@{}}
  \forall \  (ro)~ P_k \in \ProcPA\notag\\[.5ex]
  \{ PS \mid \{IK\} \} \notag\ \\[-.5ex]
  \longrightarrow_{(ro, i+1, 1,A,Num)}
 \{ (ro, i+1, 2)~P'_k ~\&~ PS \mid \{IK'\}  \}\notag\\ [1ex]
  \textsf{IF} ~  \{(ro,i+1,1)~ P_k\rho_{ro,i+1} \mid \{IK\}  \} \notag\\[-.5ex]
\longrightarrow_{(ro,i+1, 1,A,Num)} 
 \{	(ro,i+1, 2)~P'_k ~ \mid \{IK'\} \} \\[1ex]
  \textit{where } \rho_\mathit{ro,i+1} 
 \ \textit{is a fresh substitution}, \notag\\
 \  i= \MaxProcId(PS, ro)  
 \end{array}
 \right \}\eqname{PA\&}
 \label{eq:pa-new}
 \end{align}
 \end{small}
 \vspace{-1.5ex}

 \noindent
 Note that $A$ denotes the action of the state transition,
 and can be of any of the forms explained above.
 The function $\MaxProcId$ is defined as follows:

\vspace{-1.5ex}
\begin{small}
\begin{align*}
& \MaxProcId(\emptyset, ro) = 0 \notag \\
& \MaxProcId((ro, i, j) P \& PS, ro)  = max(\MaxProcId(PS, ro), i) \notag \\
& \MaxProcId((ro', i, j) P \& PS, ro) = \MaxProcId(PS, ro) 
\hspace*{10mm} \textit{if} \ ro \neq ro' \notag 
\end{align*}
\end{small}
\vspace{-1.5ex}

\noindent
where $PS$ denotes a process configuration, $P$ denotes a process, and $ro, ro'$ denote role names.
\end{itemize} 
 
 Therefore, the behavior of a protocol in the process algebra is
 defined by the set of transition rules
 $\PARls = \{ \eqref{eq:pa-output-modIK} ,  \ \eqref{eq:pa-output-noModIK} , 
              \eqref{eq:pa-input} ,
              \eqref{eq:pa-detBranch1} ,    \eqref{eq:pa-detBranch2} , \
              \eqref{eq:pa-nonDetBranch1} , \ \allowbreak  \eqref{eq:pa-nonDetBranch2} \} \cup
              \eqref{eq:pa-new} $.
              
Our main result is a bisimulation between the state space generated by 
the transition rules $\PBWRls$, 
associated to the symbolic backwards semantics of Section~\ref{sec:mnpa-syntax},
and 
the transition rules $\PARls$ above, associated to the forwards semantics for process algebra.
This is nontrivial, since there are three major
 ways in which the two semantics differ.  
 The first is that  processes  ``forget'' their past,  while strands ``remember'' theirs.  
 The second is that Maude-NPA uses backwards search, while the process
 algebra proceeds forwards.  
 The third is that Maude-NPA performs symbolic reachability analysis using
 terms with variables, while the process algebra considers only ground terms. 

 We systematically relate these different semantics
by introducing an intermediate semantics, a forward strand space
semantics extending that  in \cite{Escobar2014FW}. 
First, in Section~\ref{sec:Cstr} we extend the strand space model with constraints, since strands are the basis
of both the forwards semantics and the backwards semantics of Maude-NPA.
In Section~\ref{sec:CstrFW} we 
augment
the forwards strand space semantics of \cite{Escobar2014FW} 
with choice operators and operational semantic rules to produce a \emph{constrained forwards semantics}.
In Section~\ref{sec:PA-CstrFW-bisim} we prove bisimilarity of
the process algebra semantics of Section \ref{sec:PA}
and the constrained forwards semantics of Section \ref{sec:CstrFW}.
In~\cite{Escobar2014FW} 
the forwards strand space semantics
was proved sound and complete w.r.t. the original symbolic backwards semantics of Maude-NPA.
But now such proofs had to be extended to handle  constraints.
In Section~\ref{sec:CstrBW} we also
augment
the original symbolic backwards semantics of Maude-NPA
with choice operators and operational semantic rules to produce a
\emph{constrained backwards semantics}.
In Section~\ref{sec:proof-CstrFW-CstrBW},
we then prove  
that the constrained backwards semantics is sound and complete
with respect to the constrained forwards semantics.
By combining the bisimulation
between the process algebra and the constrained forwards semantics on the one hand,
and the bisimulation 
between the constrained forwards semantics and the constrained backwards semantics on the other hand,
we obtain the main bisimulation result.

Besides providing a detailed semantic account of how the strand model
can be extended with choice features, the key practical importance of
these bisimulation results is that, with the relatively modest
extensions to Maude-NPA described in Section \ref{Integration}
and supported by its recent 3.0 release, sound and complete analysis
of protocols with choice features specified in process algebra is made possible.

\section{Constrained Protocol Strands with Choice} \label{sec:Cstr}
To specify and analyze protocols with choices in Maude-NPA, in this section we extend Maude-NPA's strand notation
by adding new symbols
to support explicit choices. 
We refer to the strands in this extended syntax as \emph{constrained protocol strands}.

In Section~\ref{sec:syntaxCstrFW}
we  describe the syntax for constrained protocol strands.
Then, in Section~\ref{sec:mappingToCstrFW} we define a 
mapping from a protocol specification in the protocol  process algebra, 
as described in Section ~\ref{sec:PASpecification}, 
to a specification based on constrained protocol strands. 

 \subsection{Constrained Protocol Strands Syntax}\label{sec:syntaxCstrFW}

  In this section we extend  Maude-NPA's syntax by adding \emph{constrained messages}, which
  are terms of the form $\{\mathit{Cstr}, \allowbreak \mathit{Num}\}$,
  where $\mathit{Cstr}$ is a constraint,
   and $\mathit{Num}$ is a natural number that identifies
   the continuation of the protocol's execution, among the 
   two possibilities after an explicit choice point. 
  More specifically, we extend the $\SSSymbols$ part of the signature $\NPASymbols$ of the Maude-NPA's syntax we defined in Section \ref{sec:mnpa-syntax} as follows:

  \begin{itemize}
  	\item A new sort \sort{Cstr} represents the constraints 
  	      allowed in constrained messages,
	      containing three symbols:
(i)~$? \ : \ \to \sort{Cstr}$,
(ii)         	$\_{=}\_ \ : \ \sort{Msg} \ \sort{Msg} \to \sort{Cstr}$,
and 
(iii)            $ \_{\neq}\_ \ : \ \sort{Msg} \ \sort{Msg} \to \sort{Cstr}$.
  	        	      
      	
  	\item A new sort \sort{CstrMsg} for constrained messages, such that \sort{CstrMsg} $<$ \sort{SMsg}, where \sort{SMsg} is an existing Maude-NPA sort denoting signed messages (i.e., messages with + or -).
  	Therefore, now a strand is a sequence of output, input 
  	and constrained messages.
  		
  	\item A new operator 
$\{\_,\_\} : \sort{Cstr} \ \sort{Nat} \to \sort{CstrMsg} \ $
	constructs constrained messages.
  \end{itemize}

We refer to this extended signature as $\PCstrSymbols$.   
Note that the protocol signature $\Sigma_\caP$ is contained in
$\NPASymbols$,  and therefore in $\PCstrSymbols$.
Furthermore, in the constrained semantics we allow each honest principal or intruder capability strand to be 
\emph{labeled} by the ``role'' of that strand in the protocol 
(e.g., \texttt{(Client)} or \texttt{(Server)}).
  Therefore,  strands are now terms
of the form 
$(ro, i)[ u_1, \ldots, u_n]$,  
where $ro$ denotes the role of the strand in the protocol, 
$i$ is a unique identifier distinguishing different instances of strands of the same role, 
and each $u_i$ can be
  a sent or received message,
   i.e., a term of the form $M^\pm$, or a constraint message of the form $\{\mathit{Cstr},Num\}$. 
We often omit $i$, or both $ro$ and $i$ for clarity when they are not relevant.

\subsection{Protocol Specification using Constrained Protocol Strands}\label{sec:mappingToCstrFW}

The behavior of a  protocol involving choices can be specified 
using the syntax presented in Section~\ref{sec:syntaxCstrFW} as described  below.

\begin{definition}[Constrained protocol strand specification]\label{def:specCstrFW}
	Given a 
	protocol $\caP$, we define  its specification by means of 
	constrained protocol strands, written $\SpecCstrFW$,
	as a tuple of the form $\SpecCstrFW = ((\PCstrSymbols,\CstrEq),  \allowbreak \StrCstrFW)$,
	where $\PCstrSymbols$ is the protocol's signature (see Section~\ref{sec:syntaxCstrFW}), 
	and
	$\CstrEq=\PEq \cup \SSEq $ is a set of equations as we defined in Section~\ref{sec:mnpa-syntax},
	where $\PEq$ denotes the protocol's cryptographic properties and $\SSEq$ denotes  the protocol-independent properties of constructors of strands. That is, the set of equations $\PEq$ may vary depending on different protocols, but the set of equations $\SSEq$ is always the same for all protocols.
	 $\StrCstrFW$ is a set of constrained protocol strands as defined  in Section~\ref{sec:syntaxCstrFW},  representing the behavior of the honest principals as well as the 
	capabilities of the attacker. That is, 
	$\StrCstrFW$ is a set of labeled strands of the form:
	$\StrCstrFW = \{ (\mathit{ro_1}) [u_{1,1}, \ldots, u_{1,n_1}] ~\&~  \allowbreak \ldots  ~\&~  
	(\mathit{ro_m}) [u_{m,1}, \allowbreak \ldots,  u_{m,n_m}]  \}$,
	where, for each $ro_k$ such that $1 \leq k \leq i$, 
	$ro_k$ is either the role of an honest principal, or
	identifies one of the capabilities of the attacker.
	We note that $\StrCstrFW$  \emph{may contain several strands with the same
	label}, each defining one of the possible paths of such a principal.
\end{definition}

The protocol specification described above can be obtained 
by \emph{transforming} a specification in the process algebra of Section~\ref{sec:PASpecification}
as follows. 
Given a protocol $\caP$, its specification in the process algebra $\ProcPA$, consists of a set
of \emph{well-formed} labeled processes.
We transform a term denoting a set of labeled processes
into a term denoting a set of constrained protocol strands
by the mapping $\toCstrFW$. The intuitive idea is that, since our process contains no recursion, each process can be  
``deconstructed" as a set of constrained protocol strands, where each such strand represent a possible execution path of the process.

The mapping $\toCstrFW$ is specified in Definition~\ref{def:transf} below.
 
\begin{definition}[Mapping labeled processes $\toCstrFW$]\label{def:transf}
Given a labeled process $\mathit{LP}$
and a process configuration $\mathit{LPS}$, 
we define the mapping 
 $\toCstrFW : \caT_{\PASymbols}(\Variables) \ \to \caT_{\PCstrSymbols}(\Variables)$
recursively as follows:

\vspace{-1ex}
\begin{small}
\begin{align*}
& \toCstrFW(\mathit{LP} \ \& \ \mathit{LPS}) = \toCstrFWa(  \mathit{LP}, \nil) \ \& \ \toCstrFW(  \mathit{LPS})   \\[-.5ex]
& \toCstrFW(\emptyset) =  \mathit{\emptyset} 
\end{align*}
\end{small}
\vspace{-1.5ex}

\noindent
where
 $\emptyset$ is the empty set of strands.
 $\toCstrFWa$ is an auxiliary mapping that maps a term denoting a labeled process to a term that denotes a set of constrained protocol strands.
 It takes two arguments: a labeled process, and a temporary store that keeps a sequence of messages. 
More specifically, 
 $\toCstrFWa : \caT_{\PASymbols}(\Variables) \ \times  \caT_{\PCstrSymbols}(\Variables) \to \caT_{\PCstrSymbols}(\Variables)$ is defined as follows: 

 \vspace{-1ex}
\begin{small}
\begin{align*}
&  \toCstrFWa( (ro,i, j)  \ \nil, L) = (ro, i) \  [ \ L  \  ] \\[-.5ex]
&  \toCstrFWa( (ro,i, j) \ +M \ . \  P, L) 
=   \toCstrFWa( (ro,i, j) \  P, (L, \  +M) )   \\[-.5ex]
&  \toCstrFWa( (ro,i, j) \  -M \  . \  P, \  L) 
=   \toCstrFWa( (ro,i, j) \   P, (L, \  -M)  )   \\[-.5ex]
&  \toCstrFWa((ro,i, j) \  \textit{ (if } T  \textit{ then }  P \textit{ else } Q ) \  . \ R, L) \\[-.5ex]
& \hspace*{10mm} = \ \toCstrFWa((ro,i, j) \ P  \  .  \  R, (L,  \  \{T,1\}) ) \ \&\ 
\toCstrFWa((ro,i, j) \ Q  \ .  \  R, (L,  \  \{\neg T, 2\})) \\[-.5ex]
&  \toCstrFWa( (ro,i, j)\ (P \ ? \ Q)\ . \ R, L) \\[-.5ex]
&  \hspace*{10mm}= \  \toCstrFWa( (ro,i, j) \ P \ . \ R, (L,\  \{?,1\})) \ \&\ 
\toCstrFWa( (ro,i, j) \ Q\  . \ R, (L, \{?,2\})) 
\end{align*}
\end{small}
\vspace{-1ex}

\noindent
where $P$, $Q$, and $R$ denote processes,
$M$ is a message, 
$T$ is a constraint,
and $L$ denotes a list of messages, i.e., 
 input, output or constraint messages.
 
Note that $\toCstrFW$ does not modify output and input messages,
since messages are actually terms in 
$\caT_{\PSymbols/\PEq}(\Variables)$
in both the protocol process algebra,
and the constrained forwards semantics.
$\toCstrFW$ can be used both as a map between specifications, 
and as a map from process configurations and strand sets appearing in states.
\end{definition}

We illustrate the $\toCstrFW$ transformation  with the example below.

\begin{example}\label{ex:toCstrFW}
If we apply the mapping $\toCstrFW$ to the process in Example~\ref{ex:non-tlslike}
we obtain the following term which denotes a 
set of strands:

\vspace{-2ex}
\begin{small}
\begin{align*}
 (\textit{Server}) \ & [ \ \{?,1\}, \\[-.5ex]
 &  -(hs ; N ; G ; gen(G) ; E) , \\[-.5ex]
   &   +(hs ; retry) , \\[-.5ex]
   & -(hs ; N' ; G' ; gen(G') ; E')) ,  \\[-.5ex]
& +(hs ; n(\chV{S},r1) ; G' ; gen(G') ; keyG(G',\chV{S},r_2) ;  
Z(\chV{AReq},G',E',S,r_2,\chV{S},HM)) ] \ \& \\[-.5ex]
 (\textit{Server}) \ &[ \ \{?,2\}, \\[-.5ex]
 & -(hs ; N ; G ; gen(G) ; E) , \\[-.5ex]
   &   + (hs ; n(\chV{S},r1) ; G ; gen(G) ; keyG(G,S,r_2) ; 
Z(\chV{AReq},G,E,S,r_2,\chV{S},HM)) ]
 \end{align*}
 \end{small}
\vspace{-5ex}
 \end{example}
 

A protocol specification in the protocol process algebra
can then be transformed into a specification of that protocol in the constrained protocol strands described below using $\toCstrFW$.

\begin{definition}[Specification transformation]\label{def:transf-pa-Cstr}	
Given a protocol $\caP$ and its protocol process algebra specification
 $\SpecPA = ((\PASymbols,\PEq \cup \PAEq), \ProcPA)$,
with
$\ProcPA =  (ro_1) P_1  \& \ldots    \& 
   (ro_n) P_n $,
 its specification by means of constrained protocol strands  
is $\SpecCstrFW = ((\PCstrSymbols,\PEq \cup \SSEq), \allowbreak\StrCstrFW)$ with
  $\StrCstrFW = \toCstrFW(\ProcPA)$.
\end{definition}

\section{Constrained Forwards Strand Semantics}\label{sec:CstrFW}
In this section we extend Maude-NPA's rewriting-based
forwards semantics in \cite{Escobar2014FW} by adding new 
transition rules for constrained messages.
We refer to this extended forwards semantics
as \emph{constrained forwards strand semantics}. 
We show that the process algebra semantics
and the constrained forwards strand semantics are label bisimilar. 
Therefore, protocols exhibiting choices
can be specified and executed in an equivalent way in both semantics. 

In the constrained forwards strand semantics, state changes are defined by a set $\PCstrFWRls$ of \emph{rewrite rules}, so that
the rewrite theory $(\PCstrSymbols, \CstrEq, \PCstrFWRls)$ characterizes the behaviors of 
protocol $\caP$.

The set of transition rules  $\PCstrFWRls$ is an extension of the transition rules $\FWRls$ in \cite{Escobar2014FW}. 
The transition rules are generated from the protocol specification.
A \emph{state} consists of a multiset of partially executed strands and a set of terms denoting the intruder's knowledge.
The main differences between the sets $\PCstrFWRls$ and   $\FWRls$
are:
(i)  new transition rules are added in $\PCstrFWRls$ to appropriately deal with constraint messages,
(ii) strands are labeled with the role name, together with the identifier for distinguishing different instances, as explained in Section~\ref{sec:syntaxCstrFW}, 
(iii) transitions are also labeled, similarly as in the protocol process algebra, 
(iv) the global counter for generating fresh variables is deleted from
the state.
Instead, special unique names are assigned to fresh variable, which simplifies our notation.
 

In the constrained forwards strand semantics we label each transition rule 
similarly as in Section~\ref{sec:semanticsPA}, that is, using labels
of the form $(ro,i,j,a,n)$, where $ro$, $i$, $a$, and $n$ are as explained in  
Section~\ref{sec:semanticsPA}, and $j$ in this case 
is the position of the message that is being exchanged in the state transition. 
Also, similar to Section~\ref{sec:semanticsPA}, for transitions that send out messages containing choice variables, all (possibly infinitely many) admissible ground substitutions for the choice variables are possible behaviors.
A similar mechanism for distinguishing different fresh variables is used as that explained in Section~\ref{sec:semanticsPA}.
Since messages are introduced into strands in the state incrementally, 
we instantiate the fresh variables incrementally as well. Recall that fresh variables always first show up in a sent message.
Therefore, each time a sent message is introduced into a strand in the state, we assign new names to the fresh variables in the message being introduced.
 The function $\MaxStrId$ for getting the max identifier for a constrained strand of a certain role is similar to $\MaxProcId$ in Section \ref{sec:semanticsPA}. 

Since now messages in a strand can be sent or received messages, i.e., terms of 
the form $m^+$ or $m^-$, as well as constraint messages $\{\mathit{Cstr},\mathit{Num}\}$,
we represent them in the rules below simply as terms of the form $u_i$ when their
exact form is not relevant.  We will use the precise form of the message 
when disambiguation is needed.  

Before explaining the new transition rules for constraint messages, 
we  show how the transition rules in \cite{Escobar2014FW} are labeled.

The constrained forwards strand semantics extends Maude-NPA's forwards semantics in \cite{Escobar2014FW} by adding
transition rules to handle constraint messages, i.e,  messages of the form
$\{\mathit{Cstr},\mathit{Num}\}$, where $Num$ can be either $1$ or $2$.
First, we add the two transition rules below 
for the cases when  such a constrained message  
 comes from explicit choices. 
Note that, as a consequence of 
well-formedness, the constraints introduce no new variables, and since
the constraints that we 
consider are of the form $ m \neq_{\PEq} m' $ or $m=_{\PEq} m'$, the
satisfiability of $Cstr$ can 
be checked by checking whether the corresponding ground equality or disequality holds.

\begin{small}
\begin{align}
\left\{
\begin{array}{@{}l@{}}
\forall \ (ro) \  [ u_1,\ldots,u_{j-1},u_{j}^+,u_{j+1},\ldots,u_n] \in \StrCstrFW  \wedge  j \textgreater 1:  \\[1ex]
\{SS \,\&\,\{IK\} 
\,\&\, (ro, i) \ [ u_1,\ldots,u_{j-1}]\} \\
\hspace{5ex} 
\hspace{-4ex}\to_{(ro,i, j,(u_{j}\rho_\mathit{ro,i}\sigma)^+,0)}  \\ 
\{
SS 
\,\&\, \{\inI{u_{j}\rho_{ro,i}\sigma},IK\}
\,\&\, (ro, i) \ [ u_1,\ldots,u_{j-1},(u_{j}\rho_\mathit{ro,i}\sigma)^+ ]\} \\
\hspace{8ex} 
\ \textsf{IF}\   (\inI{u_j\rho_\mathit{ro,i}\sigma})    \notin IK \\[1ex]
 \  \textit{where }  \sigma 
 \  \textit{is a ground substitution} 
 \textit{ binding choice variables} 
  \ \textit{in} \  u_j , 
\\  
 \ \rho_\mathit{ro,i} =\{r_1\mapsto r_1.ro.i,  \ldots, r_n\mapsto r_n.ro.i  \}  
 \ \textit{is a fresh}  
  \ \textit{substitution}. 
\end{array}
\right\}  \eqname{F++}
\label{eq:forward-positive-modIK-Cstr}
\end{align}

\begin{align}
\left\{
\begin{array}{@{}l@{}}
\forall \ (ro) \ [ u_1,\ldots,u_{j-1},u_{j}^+,u_{j+1},\ldots,u_n] \in \StrCstrFW  \wedge  j \textgreater 1:  \\[1ex]
\{SS \,\&\,\{IK\}
\,\&\, (ro, i) \ [ u_1,\ldots,u_{j-1}]\} \\
\hspace{10ex} 
\hspace{-8ex}\to_{(ro,i,j,u_{j}\rho_\mathit{ro,i}\sigma,0)} \\ 
\{
SS 
\,\&\, \{IK\}
\,\&\,(ro, i) \ [ u_1,\ldots,u_{j-1},(u_{j}\rho_\mathit{ro,i}\sigma)^+ ]\}\\ [1ex]
 \ \textit{where }  \sigma 
 \ \textit{is a ground substitution} 
 \textit{ binding choice variables}  
 \ \textit{in} \  u_j , \\
 \ \rho_\mathit{ro,i} =\{r_1\mapsto r_1.ro.i,  \ldots, r_n\mapsto r_n.ro.i  \}  
 \ \textit{is a fresh}  
  \ \textit{substitution}. 
\end{array}
\right\}  \eqname{F+}
\label{eq:forward-positive-Cstr}
\end{align}

\noindent 
\begin{align}
\left\{
\begin{array}{@{}l@{}}
\forall \ (ro) \ [ u_{1}^+,\ldots,u_n] \in \StrCstrFW: \\[1ex]
\{SS \,\&\,\{IK\} \}
\to_{(ro, i+1, j,(u_{1}\rho_\mathit{ro,i+1}\sigma)^+,0)}\\[1ex]
\{
SS \,\&\,(ro, i+1) \ [ (u_{1}\rho_\mathit{ro,i+1}\sigma)^+ ]
\,\&\, \{\inI{u_{1}\rho_\mathit{ro,i+1}\sigma},IK\} \}
\\
\hspace{2ex}		
\hspace{2ex} \ \textsf{IF}\  (\inI{u_1\rho_\mathit{ro,i+1}\sigma})    \notin IK \\ [1ex]
 \ \textit{where }  \sigma
 \ \textit{is a ground substitution}  
 \textit{ binding choice variables} 
 \ \textit{in} \  u_1,\\  
 \  i=\MaxStrId(SS, ro), \\
\ \rho_\mathit{ro,i+1} =\{r_1\mapsto r_1.ro.i+1,  \ldots, r_n\mapsto r_n.ro.i+1  \}  
 \ \textit{is a fresh substitution}.
\end{array}
\right\}  \eqname{F++\&}
\label{eq:forward-positive2-modIK-Cstr}
\end{align}

\noindent
\begin{align}
\left\{
\begin{array}{@{}l@{}}
\forall \ (ro) \ [ u_{1}^+,\ldots,u_n] \in \StrCstrFW:\\[1ex]
\hspace{5ex}
\hspace{-5ex} \{SS \,\&\,\{IK\}\} 
\to_{(ro,i+1, j,u_{1}\rho_\mathit{ro,i+1}\sigma,0)}\\
\hspace{2ex}\{ SS \,\&\,(ro, i+1) \  [ (u_{1}\rho_\mathit{ro,i+1}\sigma)^+ ]
\,\&\, \{IK\}\} \\[1ex]
 \ \textit{where }  \sigma
 \ \textit{is a ground substitution}  
 \textit{ binding choice variables} 
\ \textit{in} \  u_1,  \\
\ i=\MaxStrId(SS, ro),\\
 \ \rho_\mathit{ro,i+1} =\{r_1\mapsto r_1.ro.i+1,  \ldots, r_n\mapsto r_n.ro.i +1 \}  
 \ \textit{is a} 
 \  \textit{fresh substitution}.
\end{array}
\right\}  \eqname{F+\&}
\label{eq:forward-positive2-Cstr}
\end{align}

\noindent
	\begin{align}
	\left\{
	\begin{array}{@{}l@{}}
	\forall \ (ro) \ [ u_1,\ldots,u_{j-1},u_{j}^-,u_{j+1},\ldots,u_n] \in \StrCstrFW 
	\wedge  j > 1: \\[1ex]
	\{SS \  \& \{\inI{u_{j}},IK\}
	\,\&\, (ro, i) \ [ u_1,\ldots,u_{j-1}]\}   \\
	\to_{(ro, i, j, u_j^-,0)}\\ 
	\{
	SS 
	\,\&\, \{\inI{u_{j}},IK\}
	\,\&\, (ro, i) \ [ u_1,\ldots,u_{j-1},u_{j}^- ]\}
	\end{array}
	\right\}   \eqname{F-}
	\label{eq:forward-negative-Cstr}
	\end{align}
	
	\noindent
	\begin{align}
	\left\{
	\begin{array}{@{}l@{}}
	\forall (ro) \ [ u_{1}^-,u_{2},\ldots,u_n] \in \StrCstrFW :  \\[1ex]
	\{SS \,\&\,\{\inI{u_{1}},IK\} \} \\
	\to_{(ro,i+1, 1, u_1^-,0)}
	\{
	SS \  \& \ (ro, i+1) \  [ u_1^- ] 
	\,\&\, \{\inI{u_{1}},IK\} \} \\
	\ \ \textit{where }  i=\MaxStrId(SS, ro)
	\end{array}
	\right\}   \eqname{F-\&}
	\label{eq:forward-negative2-Cstr}
	\end{align}
\end{small}

\begin{small}
    \begin{align}
    \left\{
    \begin{array}{@{}l@{}}
    \forall \ (ro) \ [ u_1,\ldots,u_{j-1},\{\mathit{Cstr},\mathit{Num}\},u_{j+1},\ldots,u_n] \in \StrCstrFW \\
    \wedge  j > 1: \\[1ex]
    \{SS \  \& \{IK\}
    \,\&\,(ro, i) \ [ u_1,\ldots,u_{j-1}]\}   \\
    \to_{(ro,i,j,T,\mathit{Num})}  \\ 
    \{
    SS 
    \,\&\, \{IK\}
    \,\&\,(ro, i) \ [ u_1,\ldots,u_{j-1},\{\mathit{Cstr},\mathit{Num}\} ]\}
    \\
    \hspace{2ex}		
    \hspace{2ex} \ \textsf{IF}\  \mathit{Cstr}
    \end{array}
    \right\}  \eqname{Fif}
    \label{eq:forward-CstrMsg-det}
    \end{align}
    
      \begin{align}
      \left\{
      \begin{array}{@{}l@{}}
      \forall \ (ro) \ [ u_1,\ldots,u_{j-1},\{?,\mathit{Num}\},u_{j+1},\ldots,u_n] \in \StrCstrFW \\
      \wedge  j > 1: \\[1ex]
      \{SS \  \& \{IK\}
      \,\&\,(ro, i) \ [ u_1,\ldots,u_{j-1}]\}   \\
       \to_{(ro,i,j,?,\mathit{Num})} \\ 
      \{
      SS 
      \,\&\, \{IK\}
      \,\&\,(ro, i) \ [ u_1,\ldots,u_{j-1},\{?,\mathit{Num}\} ]\}
      \end{array}
      \right\} \eqname{F?}
      \label{eq:forward-CstrMsg-nonDet}
      \end{align}
\end{small}
        
The following set of transition
 rules adds to the state a new strand whose first message is 
 a constraint message of the form $\{?,\mathit{Num}\}$:

	\noindent
	\begin{small}
   \begin{align}
 	\left\{
 	\begin{array}{@{}l@{}}
 		\forall \  (ro) \ [ \, \{?,\mathit{Num}\},u_{2},\ldots,u_n] \in \StrCstrFW :           \\[1ex]
 		\{SS \,\&\,\{IK\} \}           \\                                 
 		\to_{(ro,i+1, 1,?,\mathit{Num})}                                                    \\
 		\{
 	SS \  \& \ (ro, i+1) \  [ \, \{?,\mathit{Num}\} \, ] 
 	\,\&\, \{IK\} \} \\
 	\ \ \textit{where }  i=\MaxStrId(SS, ro)
 	\end{array}
 	\right\}  \eqname{F?\&}
 	\label{eq:forward-CstrMsg-first}
 	\end{align}	
	\end{small}

	\begin{definition}\label{def:CstrFW-semantics}
		Let $\mathcal{P}$ be a protocol with signature $\PCstrSymbols$ and equational theory $\CstrEq$.  We define the {\em constrained forwards rewrite theory characterizing $\mathcal{P}$} as 
		$(\PCstrSymbols, \CstrEq, \allowbreak \PCstrFWRls)$
		where $\PCstrFWRls = \eqref{eq:forward-positive-modIK-Cstr}
		\cup \eqref{eq:forward-positive-Cstr}
		\cup  \eqref{eq:forward-positive2-modIK-Cstr}
		\cup  \eqref{eq:forward-positive2-Cstr}
		\cup  \eqref{eq:forward-negative-Cstr} 
		\cup  \eqref{eq:forward-negative2-Cstr} 
		\cup \eqref{eq:forward-CstrMsg-det}
		\cup \eqref{eq:forward-CstrMsg-nonDet} 
		\cup \eqref{eq:forward-CstrMsg-first}$. 
	\end{definition}

\section{Bisimulation between Constrained Forwards Strand Semantics and Process Algebra Semantics}
\label{sec:PA-CstrFW-bisim}

In this section we show that the process algebra semantics
and the constrained forwards strand semantics are label bisimilar. We first define PA-State and FW-State, the respective notions
of state in each semantics.

\begin{definition}[PA-State]
 Given a protocol $\caP$, a \emph{PA-State} of $\caP$
 is a state in the protocol process algebra semantics
that is \emph{reachable} from the initial state. The initial PA-State is $P_{\mathit{init}}=\{\emptyset ~|~ \{empty\} \}$. 
\end{definition}

\begin{definition}[FW-State]
Given a protocol $\caP$, a \emph{FW-State} of $\caP$
is a state in the constrained forwards strand semantics
that is \emph{reachable} from the initial state. The initial FW-State is $F_{\mathit{init}}= \{\emptyset ~\&~ \{empty\}\} $. 
\end{definition}

The bisimulation relation is defined based on reachability, i.e., if a PA-State and a FW-State are in the relation $\HState$, then they 
both 
can be reached from their corresponding initial states by the same
label sequence. Note that we only consider states that are reachable
from the initial states. 

Let us first define the notation of label sequence that we will use throughout.

\begin{definition}[Label Sequence]
An ordered sequence $\alpha$ of transition labels is defined by using $\_.\_$ as an associative concatenation operator with $nil$ as an identity. The length of a label sequence $\alpha$ is denoted by $| \alpha |$. 
Given a label sequence $\alpha$, we denote by $\alpha|_{(ro, i)}$ the sub-sequence of labels in $\alpha$ that have $ro$ as role name, and $i$ as identifier, i.e., labels of the form $(ro, i, \_,\_,\_)$ ($\_$ is a shorthand for denoting any term). 
\end{definition}

  \begin{definition}[Relation $\HState$]
Given a protocol $\caP$, the relation $\HState$ is defined as:
$\HState=\{(\Pst, \Fst) \in \textit{PA-State} \times \textit{FW-State} ~|~ \exists \ \textit{label sequence} \ \alpha  \textit{ s.t. }  P_{init}\rightarrow_{\alpha} \Pst, \allowbreak ~F_{init} \rightarrow_{\alpha} \Fst \}$.
 \end{definition}

Recall that a process can be ``deconstructed" by the mapping
$\toCstrFW$ into a set of constrained protocol strands, each
representing a possible execution path. If a PA-State $\Pst$ and a
FW-State $\Fst$ are related by $\HState$, then an important
observation is that there is a  \emph{duality} between individual
processes in $\Pst$ and strands in $\Fst$: if there is a process in
the $\Pst$ describing a role's \emph{continuation} in the future,
there will be a corresponding strand in $\Fst$ describing the part of
the process that has \emph{already been executed}, and vice versa.
Another observation is that, since the intruder's knowledge is
extracted from the communication history, following the definition of
$\HState$, the states $\Pst$ and $\Fst$ have the same communication
history, therefore they have the same intruder's knowledge. We
formalize these observations in Lemmas \ref{lemm:equiv-PStr} and
\ref{lemm:sameIK}. These lemmas then lead us to the 
main result that $\HState$ is a bisimulation  relation.

We now 
define the relation $\HProc$, which relates a possibly partially executed labeled process and a constrained strand.  This relation defines the \emph{duality} relation between a labeled process and a constrained strands. 
If a labeled process $LP$ is related to a constrained strand $Str$ by the relation $\HProc$, then:
 (i) $LP$ and $Str$ denote the behavior of the  same role with the same identity in the same protocol, and 
 (ii) for any strand $Str_\mathit{LP}$, $Str_\mathit{LP}$ denotes a possible execution path of $LP$ iff $Str$ followed by $Str_\mathit{LP}$ forms a valid possible execution path of the protocol.  

\begin{definition}[Relation $\HProc$]
 Given a protocol $\caP$, and a possibly partially executed labeled process $LP$ of $\caP$, a possibly partially executed constrained strand $Str$ of $\caP$, then $(LP , Str)\in \HProc$ iff  
 
 \begin{quote}
 	 $\toCstrFW(LP) = 
 	  \&
 	  \{ (ro, i)  
 	  [u_{j+1},\ldots,  u_n]\rho_{ro,i}\theta \mid 
 	 \exists\ \textit{ground substitution}\ \theta \ \linebreak
 	 \exists (ro)[u_1,\ldots u_j, u_{j+1},\ldots,  u_n] \in \caP_\mathit{Cstr}  \ 
     \textit{s.t.}  \
 	  Str =  (ro, i)[u_1,\ldots u_j]\rho_{ro,i}\theta \}$
 \end{quote}

\noindent 
where $\&\{S_1, S_2, \ldots , S_n\}$ is a shorthand for a term $S_1 \& S_2 \& \ldots \allowbreak \& S_n$ denoting a set of strands.
$\rho_{ro,i}=\{r_1 \mapsto r_1.ro.i, \ldots, r_m\mapsto r_m.ro.i\}$ for fresh variables $r_1, \ldots, r_m$ in $[u_1,\ldots u_j, u_{j+1}, \ldots, 
u_n]$.
\end{definition}

\begin{example}
Following Examples \ref{ex:non-tlslike}  and \ref{ex:toCstrFW}, we
show a process $LP$ and a strand $Str$ that are related by the
relation $\HProc$. 
$LP$ (resp. $Str$) is the labeled process (resp. constrained strand) of the Server role after making the first explicit nondeterministic choice. \\

\begin{small}
\begin{align*}
 LP= &  (\mathit{Server, 1, 2}) \ \sigma (+(hs ; retry) \cdot -(hs ; N' ; G' ; gen(G') ; E') \cdot  \\
& ~~~   +(hs ; n(\chV{S},r1) ; G' ; gen(G') ; keyG(G',\chV{S},r_2) ;  \\
& ~~~~~~~~ Z(\chV{AReq},G',E',S,r_2,\chV{S},HM))) \\
 Str= &(\mathit{Server, 1}) \ \sigma [ \ \{?,1\} ,  -(hs ; N ; G ; gen(G) ; E) ]
\end{align*}
\end{small}

\noindent where $\sigma$ is a ground substitution to the pattern variables $N$, $G$, and $E$.
\end{example}

We then lift the duality relation between individual processes and strands to a duality relation between PA-State and FW-State.

\begin{definition}[Relation $\HPsFs$]
 Let $\Pst=\{LP_1 \& \allowbreak \ldots \& LP_n \mid \{\IK\} \}$ be a PA-State and $\Fst=\{Str_1 \& \ldots \& Str_m \allowbreak \& \{\IK'\} \}$ be a FW-State, if $(\Pst, \Fst) \in \HPsFs$, then:
\begin{itemize}
 	\item[(i)]  For each labeled process $LP_k \in \Pst$, $1\leq k \leq n$,
 	there exists a strand $Str_{k'} \in \Fst$, $1\leq k' \leq m$,
 	such that $(LP_k , Str_{k'}) \in \HProc$.
 	\item[(ii)] For each strand $Str_{k'} \in \Fst$, $1\leq k' \leq m$, there exists a labeled process $LP_k \in \Pst$, $1\leq k \leq n$, such that $(LP_k , Str_{k'}) \in \HProc$.
 \end{itemize} 
\end{definition}

The lemma below states that the relation $\HState$ induces the \emph{duality} relation $\HPsFs$.
  
 \begin{lemma}\label{lemm:equiv-PStr}
 Let $\Pst=\{LP_1 \& \ldots \& LP_n \mid \{\IK\} \}$ be a PA-State and $\Fst=\{Str_1 \& \ldots  \allowbreak \& Str_m  \& \{\IK'\} \}$ be a FW-State, if $(\Pst, \Fst) \in \HState$, i.e., exists a label sequence $\alpha$ such that
 $P_{\mathit{init}}\rightarrow_{\alpha} \Pst$, and
$F_{\mathit{init}}\rightarrow_{\alpha} \Fst$, 
then $(\Pst, \Fst) \in \HPsFs$.
 \end{lemma}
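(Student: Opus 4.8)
The plan is to prove the lemma by induction on the length $|\alpha|$ of the label sequence witnessing $(\Pst, \Fst) \in \HState$, maintaining as an invariant the finer per-instance duality encoded by $\HProc$. In the base case $\alpha = nil$, both states are the initial states $P_{\mathit{init}} = \{\emptyset \mid \{empty\}\}$ and $F_{\mathit{init}} = \{\emptyset \& \{empty\}\}$, which contain no labeled processes and no strands; hence conditions (i) and (ii) in the definition of $\HPsFs$ hold vacuously.

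For the inductive step I would write $\alpha = \beta . \ell$ with last label $\ell = (ro, i, j, a, n)$, and factor the two derivations as $P_{\mathit{init}} \rightarrow_\beta \Pst' \rightarrow_\ell \Pst$ and $F_{\mathit{init}} \rightarrow_\beta \Fst' \rightarrow_\ell \Fst$, so that the induction hypothesis gives $(\Pst', \Fst') \in \HPsFs$. The first fact to record is that the pair $(ro,i)$ occurring in $\ell$ identifies at most one labeled process $LP \in \Pst'$ and at most one strand $Str \in \Fst'$, since a role name together with an identifier determines a unique process (respectively strand) in a state. Consequently the step touches exactly one dual pair and leaves every other process and strand untouched; those untouched pairs stay related by $\HProc$ directly from the hypothesis. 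It therefore suffices to show that the single affected pair is carried to a new pair still related by $\HProc$, or, in the case of a new role instance, that a fresh dual pair is created.

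The core is a case analysis on the action component $a$ of $\ell$, matching each process-algebra rule against its constrained-forwards counterpart: sending (\eqref{eq:pa-output-modIK}/\eqref{eq:pa-output-noModIK} against \eqref{eq:forward-positive-modIK-Cstr}/\eqref{eq:forward-positive-Cstr}), receiving (\eqref{eq:pa-input} against \eqref{eq:forward-negative-Cstr}), deterministic choice (\eqref{eq:pa-detBranch1}/\eqref{eq:pa-detBranch2} against \eqref{eq:forward-CstrMsg-det}), nondeterministic choice (\eqref{eq:pa-nonDetBranch1}/\eqref{eq:pa-nonDetBranch2} against \eqref{eq:forward-CstrMsg-nonDet}), and the introduction of a new role instance (\eqref{eq:pa-new} against the ``$\&$''-variants \eqref{eq:forward-positive2-modIK-Cstr}, \eqref{eq:forward-negative2-Cstr}, \eqref{eq:forward-CstrMsg-first}, etc.). In each case I would take $(LP, Str) \in \HProc$, expand the defining set-equation $\toCstrFW(LP) = \& \{ (ro,i)[u_{j+1},\ldots,u_n]\rho_{ro,i}\theta \mid \ldots \}$, and verify that appending to $Str$ the message or constraint consumed by the step (yielding $Str'$) while advancing $LP$ to its continuation $LP'$ preserves this equation. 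The branching structure of $\toCstrFW$ is exactly what makes the choice cases go through: $\toCstrFW$ of an if-then-else (respectively $?$) already splits into two strand families carrying the constraint messages $\{T,1\}$ and $\{\neg T,2\}$ (respectively $\{?,1\}$ and $\{?,2\}$), so selecting branch $n$ in the process matches appending $\{T,n\}$ (or $\{?,n\}$) in the strand, and the strands of the discarded branch drop out of the deconstruction of $LP'$ because their heads no longer match $Str'$.

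I expect the main obstacle to be the bookkeeping of fresh and choice variables in the introduction rule \eqref{eq:pa-new}, together with the threading of the matching substitution $\theta$ from the definition of $\HProc$. Concretely, I must check that the fresh substitution $\rho_{ro,i+1}$ built from $\MaxProcId$ on the process side coincides with the one built from $\MaxStrId$ on the strand side, so that identical fresh names $r.ro.(i+1)$ are assigned; that the ground substitution $\sigma$ binding the choice variables, which is recorded inside the label $\ell$, is applied identically on both sides; and that the matching substitution produced when a received message is consumed instantiates the remaining pattern variables compatibly with the existentially quantified $\theta$. Once these substitutions are shown to be threaded consistently, the defining equation of $\HProc$ is preserved in every case, the invariant $\HPsFs$ is re-established, and the induction closes, giving $(\Pst, \Fst) \in \HPsFs$.
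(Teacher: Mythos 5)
Your strategy is sound, but it is organized quite differently from the paper's proof. The paper does \emph{not} induct on $|\alpha|$ and performs no rule-by-rule case analysis inside this lemma. Instead it argues directly: for any $LP_k = (ro,i,j)\,P_k$ in $\Pst$, the sequence $\alpha$ must contain labels of the form $(ro,i,\_,\_,\_)$, so $\Fst$ contains a strand $(ro,i)[v_1,\ldots,v_{j'}]$; reachability by $\alpha$ is then taken to force $[v_1,\ldots,v_{j'}]$ to be exactly the messages recorded in $\alpha|_{(ro,i)}$ (so $j'=j-1$), and to force $LP_k$ to be the continuation of the \emph{unique} specification process $(ro)P_{\mathit{spec}}$ of role $ro$ after the transitions $\alpha|_{(ro,i)}$; the defining equation of $\HProc$ then drops out of the specification correspondence $\StrCstrFW = \toCstrFW(\ProcPA)$ together with the uniqueness of $(ro)P_{\mathit{spec}}$ in $\ProcPA$. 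In other words, the paper reads both members of each dual pair off the projected label subsequence and matches them through $\toCstrFW$, deferring all case analysis on transition rules to the proof of Theorem~\ref{thm:Bisimulation} in the appendix. Your induction with invariant preservation proves exactly the two facts the paper treats as unproved observations (that the strand records the messages of $\alpha|_{(ro,i)}$ and that the process is the corresponding continuation of the spec process), so your route is more self-contained: those claims are precisely of the kind that need an induction over the execution to be fully justified. The price is that your case analysis --- matching \eqref{eq:pa-output-modIK}/\eqref{eq:pa-output-noModIK} with \eqref{eq:forward-positive-modIK-Cstr}/\eqref{eq:forward-positive-Cstr}, \eqref{eq:pa-input} with \eqref{eq:forward-negative-Cstr}, the choice rules with \eqref{eq:forward-CstrMsg-det}/\eqref{eq:forward-CstrMsg-nonDet}, and \eqref{eq:pa-new} with the $\&$-rules --- largely duplicates work that the proof of Theorem~\ref{thm:Bisimulation} redoes anyway, whereas the paper's organization keeps this lemma short and concentrates the rule matching in one place. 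The substitution bookkeeping you flag (agreement of the fresh substitutions built from $\MaxProcId$ versus $\MaxStrId$, and consistent threading of $\sigma$ and the existential $\theta$) is a genuine obligation, and it is discharged essentially as you propose: the shared label fixes the identifier $i$, hence the fresh names $r.ro.i$, and the ground instance of the exchanged message on both sides.
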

 
 \begin{proof} 
We first prove property (i). 
If $|\alpha| =0$, since both the strand set and the process configuration are empty, the statement is vacuously true. 

Now suppose that $|\alpha| > 0$. Then, without loss of generality, assume there exists a labeled process
$LP_k = ((ro,i, j) \ P_k)$ in $\Pst$, with $i, j \ge 1$. Then there is at least one label in $\alpha$ of the form $(ro, i, \_,\_,\_ )$ ($\_$ is a short hand for any content), therefore, there is a strand $St_{k'}$ in $\Fst$ of the form $(ro, i) [v_1, \ldots, v_{j'}]$. 

We then show that the above-mentioned  $LP_k$ and $Str_{k'}$ are related by $\HProc$, i.e., $(LP_k,Str_{k'}) \in \HProc$. 
Since the state $Fst$ is reachable from the initial state by the label sequence $\alpha$, and $Str_{k'}\in Fst$,  $[v_1, \ldots, v_{j'}]$ denotes exactly the sequence of messages in the unique sequence of labels $\alpha|_{(ro, i)}$. Moreover, $j'= j-1$.

Since the process state $\Pst$ is reachable from the initial state $P_{init}$ by label sequence $\alpha$, there exists a unique process  $(ro) P_{spec}$ in the specification $\ProcPA$, 
and $LP_k$ represents all possible behaviors of  $(ro) P_{spec}$ after the sequence of transitions $\alpha|_{(ro, i)}$. 
Therefore, $\toCstrFW(LP_k) =$
\begin{align*}
&\& \{ (ro, i) [u_{j}, \ldots, u_n]\rho_{ro,i}\theta  \mid \\[-.3ex]
&\exists \ \textit{ground substitution} \  \theta  \\[-.3ex]
&\exists (ro)[u_1, \ldots, u_{j-1}, u_{j}, \ldots, u_n] \in \toCstrFW((ro)P_\mathit{spec}) \\[-.3ex]
& \textit{s.t.}  \ (ro, i)[u_1, \ldots, u_{j-1}]\rho_{ro,i}\theta = (ro, i)[v_1, \ldots, v_{j-1}]\}
\end{align*}
 By the correspondence between protocol specifications defined in definition \ref{def:transf-pa-Cstr}	, $ \caP_{CstrF}  =\toCstrFW(\ProcPA)$. Also note that $(ro)P_{spec}$ is the only process in  $\ProcPA$ that has $ro$ as its role name, therefore, 
$\toCstrFW((ro)P_{spec}) = \{ (ro)[u_1, \ldots, u_n] \mid  (ro)[u_1, \ldots, u_n] \in \caP_{CstrF} \}$. 
Therefore, $\toCstrFW(LP_k) =$ 
\begin{align*}
 & \&\{ (ro, i) [u_{j}, \ldots u_n]\rho_{ro,i}\theta  \mid \\[-.3ex]
&\exists \  \textit{ground substitution} \ \theta, \\[-.3ex]
&\exists (ro)[u_1, \ldots, u_{j-1}, u_{j}, \ldots, u_n] \in \caP_{CstrF} \\[-.3ex]
&\textit{s.t.} \ (ro, i)[u_1, \ldots, u_{j-1}]\rho_{ro,i}\theta = (ro, i)[v_1, \ldots, v_{j-1}]\}. 
\end{align*}
Therefore, $(LP_k,Str_{k'}) \in \HProc$.
	  The proof for property (ii) is similar to the one for property (i).
\end{proof}

Lemma \ref{lemm:sameIK} below 
formalizes the observation that the equivalence of label sequence
implies the same intruder knowledge. 

 \begin{lemma}\label{lemm:sameIK}
 	Given a PA-State $\Pst$  and a FW-State $\Fst$ 
 such that  $(\Pst, \Fst) \in \HState$, i.e., there exists a label sequence $\alpha$ such that $P_{init}\rightarrow_{\alpha} \Pst$ and
  $F_{init}\rightarrow_{\alpha} \Fst$,
 	then the contents of intruder knowledge 
 	in $\Pst$ and in $\Fst$ are  syntactically equal.
 \end{lemma}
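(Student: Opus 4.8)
The plan is to proceed by induction on the length $|\alpha|$ of the common label sequence, exploiting the fact that in both the process algebra semantics and the constrained forwards semantics the intruder knowledge only ever grows, and does so in a way dictated solely by the action component $a$ of each transition label $(ro,i,j,a,n)$. Indeed, none of the transition rules in $\PARls$ or $\PCstrFWRls$ ever removes a fact from the intruder knowledge, so the content of $\IK$ at any reachable state is determined entirely by which transitions in the history were \emph{learning} transitions.

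For the base case $|\alpha| = 0$ we have $\Pst = P_{\mathit{init}}$ and $\Fst = F_{\mathit{init}}$, whose intruder knowledges are both the empty fact set $empty$, so they coincide trivially. For the inductive step, write $\alpha = \beta \, . \, \ell$ with $\ell = (ro,i,j,a,n)$ the final label, and let $\Pst'$ and $\Fst'$ be the intermediate states with $P_{\mathit{init}} \rightarrow_{\beta} \Pst' \rightarrow_{\ell} \Pst$ and $F_{\mathit{init}} \rightarrow_{\beta} \Fst' \rightarrow_{\ell} \Fst$. By the induction hypothesis the intruder knowledge of $\Pst'$ equals that of $\Fst'$, so it suffices to perform a case analysis on the shape of $a$ and verify that the final transition updates the intruder knowledge identically on both sides.

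The case analysis is short. When $a = +m$ (a learning transition), the PA side applies \eqref{eq:pa-output-modIK} or the corresponding instance of \eqref{eq:pa-new}, while the FW side applies \eqref{eq:forward-positive-modIK-Cstr} or \eqref{eq:forward-positive2-modIK-Cstr}; in each case the single fact $\inI{m}$ is adjoined, where $m$ is the ground message recorded in the label and therefore the same term on both sides. In every other case, namely $a = m$, $a = -m$, $a = T$, or $a = {?}$, the applicable rules (\eqref{eq:pa-output-noModIK}, \eqref{eq:pa-input}, \eqref{eq:pa-detBranch1}/\eqref{eq:pa-detBranch2} and \eqref{eq:pa-nonDetBranch1}/\eqref{eq:pa-nonDetBranch2} on the PA side, and \eqref{eq:forward-positive-Cstr}, \eqref{eq:forward-negative-Cstr}, \eqref{eq:forward-CstrMsg-det} and \eqref{eq:forward-CstrMsg-nonDet}/\eqref{eq:forward-CstrMsg-first} on the FW side) leave the intruder knowledge unchanged. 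Since in each case the same update is applied to the equal prior intruder knowledges, those of $\Pst$ and $\Fst$ are syntactically equal, completing the induction.

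The main obstacle I expect is the bookkeeping in the learning case rather than the inductive skeleton: I must argue that the ground message $m$ carried by the action $+m$ is literally the same term on both sides. This holds because the choice-variable substitution $\sigma$ and the fresh substitution $\rho_{ro,i}$ appearing in the two families of rules are forced to produce exactly the ground instance that the shared label $\ell$ records as its action, so no genuine freedom remains once the label is fixed. A secondary point is that the side conditions guarding the learning rules (e.g.\ $\inI{m} \notin \IK$ in \eqref{eq:pa-output-modIK}) need not be re-examined: we are given that both derivations exist, so applicability is assumed, and by the induction hypothesis these conditions have the same truth value on both sides and do not affect the resulting fact set.
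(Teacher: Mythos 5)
Your proposal is correct and rests on exactly the same key observation as the paper's proof: in both semantics the only rules that modify the intruder knowledge are those labeled $(\_,\_,\_,+m,\_)$, and the fact $\inI{m}$ they add is fully determined by the shared label. The paper merely compresses your induction into the closed-form statement $\mathit{IK}(\Pst)=\{\inI{m} \mid (\_,\_,\_,+m,\_)\in\alpha\}=\mathit{IK}(\Fst)$, which your inductive case analysis establishes step by step.
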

\begin{proof}
  In both semantics the only transition rules that add new elements
  to the intruder's knowledge are the ones whose label is of the form
  $(ro,i, j, +m,n)$. Therefore, given the two states $\Pst$ and $\Fst$ 
  as described above,  their intruder's knowledge can be computed
  from the sequence of labeled transitions $\alpha$ as 
  	\linebreak
 $\mathit{IK(\Pst) =  
  \{ 
  \inI{m} \mid  \allowbreak  (\_, \_, \_,  +m,  \_) \in \alpha \} } = \mathit{IK(\Fst) } $.
\end{proof}


 Based on the lemmas above, we can now show that the relation
 $\HState$ is a bisimulation. Since the proof of Theorem
 \ref{thm:Bisimulation} requires a somewhat lengthy case analysis,
it has been moved to Appendix \ref{bisim-proof-app}.

\begin{theorem}[Bisimulation]\label{thm:Bisimulation}
$\HState$ is a bisimulation.
\end{theorem}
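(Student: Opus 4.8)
The plan is to verify the two transfer conditions of a bisimulation directly, exploiting the fact that $\HState$ is defined purely through reachability by a common label sequence. Suppose $(\Pst,\Fst)\in\HState$, witnessed by a label sequence $\alpha$ with $P_{init}\rightarrow_\alpha\Pst$ and $F_{init}\rightarrow_\alpha\Fst$, and suppose $\Pst\rightarrow_\ell\Pst'$ for some transition label $\ell$. Then $P_{init}\rightarrow_{\alpha.\ell}\Pst'$, so to conclude $(\Pst',\Fst')\in\HState$ it suffices to exhibit a forwards transition $\Fst\rightarrow_\ell\Fst'$ carrying the \emph{same} label $\ell$: the extended sequence $\alpha.\ell$ then witnesses membership in $\HState$ automatically. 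Thus the entire content reduces to a one-to-one matching of enabled transitions by label in the two semantics, and the symmetric forwards-to-process-algebra direction is handled in the same way.

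The two ingredients that make this matching possible are Lemmas~\ref{lemm:equiv-PStr} and~\ref{lemm:sameIK}. From Lemma~\ref{lemm:equiv-PStr} I obtain $(\Pst,\Fst)\in\HPsFs$, so every labeled process $LP_k\in\Pst$ is paired via $\HProc$ with a strand $Str_{k'}\in\Fst$ recording exactly the already-executed prefix of that role instance, and conversely. Concretely, if $LP_k=(ro,i,j)\,P_k$ is poised to execute its $j$-th action, then $\HProc$ forces $Str_{k'}$ to be the image under $\toCstrFW$ of the common specification strand truncated just before position $j$. From Lemma~\ref{lemm:sameIK} the intruder knowledge is syntactically identical in $\Pst$ and $\Fst$. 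Together these let me transport both the shape of the active redex and the side conditions (the ``$\inI{M\sigma}\notin IK$'' learn-guard and the satisfiability of a constraint $T$) from one semantics to the other unchanged.

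Next I would proceed by cases on the rule applied in $\Pst\rightarrow_\ell\Pst'$, pairing each process-algebra rule with its constrained-forwards counterpart sharing the same label: sending maps \eqref{eq:pa-output-modIK}/\eqref{eq:pa-output-noModIK} to \eqref{eq:forward-positive-modIK-Cstr}/\eqref{eq:forward-positive-Cstr}, reception maps \eqref{eq:pa-input} to \eqref{eq:forward-negative-Cstr}, deterministic choice maps \eqref{eq:pa-detBranch1}/\eqref{eq:pa-detBranch2} to \eqref{eq:forward-CstrMsg-det}, and nondeterministic choice maps \eqref{eq:pa-nonDetBranch1}/\eqref{eq:pa-nonDetBranch2} to \eqref{eq:forward-CstrMsg-nonDet}. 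In each case the $\HProc$-paired strand $Str_{k'}$ already has the prefix required by the forwards rule's left-hand side; the message or constraint at position $j$ of the common specification strand coincides on both sides (since $\toCstrFW$ leaves messages untouched and records $\{T,Num\}$ for branches); and the guard holds by Lemma~\ref{lemm:sameIK}. Appending $u_j$ to $Str_{k'}$ yields a strand still $\HProc$-paired with $\Pst'$'s continuation $(ro,i,j+1)\,P_k\sigma$, so $(\Pst',\Fst')\in\HPsFs$ and the label $\ell$ is realized on the forwards side.

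The delicate cases, and the main obstacle, are the process/strand \emph{introduction} rules, where the ``forget the past'' asymmetry between processes and strands is sharpest. A process-algebra step with label $(ro,i{+}1,1,A,Num)$ fires rule \eqref{eq:pa-new}, whose premise is itself a first-action transition of a fresh instance of a specification process $(ro)\,P_k$; this must be matched against whichever ``\&''-variant of the forwards rules (\eqref{eq:forward-positive2-modIK-Cstr}, \eqref{eq:forward-positive2-Cstr}, \eqref{eq:forward-negative2-Cstr}, or \eqref{eq:forward-CstrMsg-first}) corresponds to the first symbol $u_1$ of the chosen specification strand. Here I must argue two things: first, that $\MaxProcId(\Pst,ro)=\MaxStrId(\Fst,ro)$, so both semantics allocate the \emph{same} fresh identifier $i{+}1$, which holds because both maxima are determined solely by the labels of the form $(ro,\_,\_,\_,\_)$ occurring in the shared sequence $\alpha$; and second, that the fresh-renaming substitutions $\rho_{ro,i+1}$ agree, which holds by construction since they depend only on $ro$ and $i{+}1$. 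Unfolding the premise of \eqref{eq:pa-new} and observing that $\toCstrFW$ sends $(ro)\,P_k$ to exactly the set of specification strands used by the forwards rules closes the gap, leaving the newly introduced $Str_{k'}$ in $\HProc$-correspondence with the residual process $(ro,i{+}1,2)\,P'_k$. The reverse direction is symmetric: each forwards rule's label uniquely selects the process-algebra rule to replay, and the same two lemmas supply the matching redex, guard, and fresh identifier.
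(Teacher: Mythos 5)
Your proposal is correct and follows essentially the same route as the paper's proof: both reduce the bisimulation transfer conditions to realizing each transition with the \emph{same} label on the other side (so that $\HState$-membership of the successors is automatic from the reachability-based definition), both rely on Lemmas~\ref{lemm:equiv-PStr} and~\ref{lemm:sameIK} to transport the redex and the guards, and both perform the same rule-by-rule case analysis, with your separately-treated introduction rules corresponding exactly to the paper's $j=1$ sub-cases where \eqref{eq:pa-new} is matched against the \&-variants of the forwards rules. Your justification that $\MaxProcId(\Pst,ro)=\MaxStrId(\Fst,ro)$ follows from the shared label sequence $\alpha$ is a slight (and equally valid) variation on the paper's appeal to Lemma~\ref{lemm:equiv-PStr} for the same fact.
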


\section{Constrained Backwards Strand Semantics}\label{sec:CstrBW}
In this section we extend Maude-NPA's symbolic backwards semantics with 
rules for constrained messages
of the form described in Section~\ref{sec:syntaxCstrFW}, so that
it can analyze protocols exhibiting explicit choices.
We refer to this extended backwards semantics as 
\emph{constrained backwards strand semantics}.
We then show that the \emph{constrained backwards strand semantics} is sound and complete
with respect to the 
constrained forwards strand semantics presented in Section~\ref{sec:CstrFW}, and 
the process algebra semantics presented in Section~\ref{sec:PA}. This result 
 allows us to use Maude-NPA for analyzing protocols exhibiting choice, including both implicit and explicit choices, 
 and in particular any protocol specified using the \emph{protocol process algebra}.

 
The strand space model used in the constrained backwards strand semantics
is the same as the one already used in Maude-NPA \cite{EscobarMM-fosad}, 
except for the following differences:

 \begin{itemize}
	  \item Maude-NPA explores \emph{constrained states} as defined in 
	  \cite{EscobarMMS15}, that is, states that
	  have an associated constraint store. More specifically, 
	  a \emph{constrained state}
	  is a pair
	  $\pair{St,\Psi}$ consisting of
	  a state expression  $St$ and a \emph{constraint}, i.e., a set $\Psi$ 
	  understood as a conjunction
	  $\Psi = {\bigwedge^n_{i=1}} u_i \not= v_i$ 
	  of  disequality  constraints.	  
  \item Strands are now of the form $[ u_1, \ldots, u_i \mid u_{i+1}, \ldots u_n]$,
	    where each $u_k$ can be of one of these forms: 
		  (i)  $m^+$ if it is a sent message,
		  (ii) $m^-$ if it is a received message, or
		  (iii) $\{Cstr, Num\}$ if it is a constrained message.
\end{itemize}

	State changes are described by a set $\PCstrBWRls$
    of \emph{rewrite rules}, so that the rewrite theory
	$(\PCstrSymbols,\CstrEq,\PCstrBWRls)$ 
	characterizes the behavior of protocol $\caP$ modulo the equations $\CstrEq$
	for \emph{backwards} execution. 
	The set of rules $\PCstrBWRls$ is obtained as follows.
	First, we adapt the set of rules $\PBWRls$ in 
	Section~\ref{sec:mnpa-syntax}
	to constrained states, which is an embedding of rules in
        $\PBWRls$. Their forwards version is shown below: 

\begin{small}	
		\begin{align}
		&  \pair{ \{
		SS\ 
		\&\ [L ~|~ M^-, L']\ 
		\&\ \{\inI{M},IK\}\}, \Psi}		
		\notag\\[-.5ex]
		&  \to 
		 \pair{\{
		\textit{SS}\ 
		\&\ [L, M^- ~|~ L']\ 
		\&\ \{\inI{M},IK\} 
		\}  , \Psi} \eqname{B-}
		\label{eq:negative-1-CstrBW}
		\\[1ex]
		&  \pair{ \{  
		SS\ 
		\&\ [L ~|~ M^+, L']\ 
		\&\ \{IK\}
		\} , \Psi}
		\notag\\[-.5ex]
		& \to 
		 \pair{ \{  
		SS \ 
		\&\ [L, M^+ ~|~ L']\
		\&\ \{IK\}
		\}  , \Psi} \eqname{B+}
		\label{eq:positiveNoLearn-2-CstrBW}
		\end{align}

		\begin{align}
		&  \pair{ \{  
		SS\ 
		\&\ [L ~|~ M^+, L']\ 
		\&\ \{\nI{M},IK\}
		\} , \Psi}
		\notag\\[-.5ex]
		&\to 
		 \pair{ \{  
		SS \ 
		\&\ [L, M^+ ~|~ L']\
		\&\ \{\inI{M}, IK\}
		\}  , \Psi} \eqname{B++}
		\label{eq:positiveLearn-4-CstrBW}\\[1ex]
& {	\begin{array}[t]{@{}l@{}}
	\forall \ [ l_1, u^+, l_2] \in \caP: \\
	 \pair{ \{ \{SS \, \& \, [\, l_1 |  \,  u^+, l_2\,] \,\&\, \{\nI{u},IK\}\} , \Psi} \\
	\to  \pair{ \{SS \,  \& \,  \{\inI{u},IK\}\} \} , \Psi}
	\end{array} 
	\label{eq:newstrand-positive-CstrBW}%
}\eqname{B\&}
	\end{align}%
\end{small}

\noindent
where   $L$ and $L'$ are variables denoting a list of strand messages,
$\textit{IK}$ is a variable for a set of intruder facts 
(\inI{m} or \nI{m}),
 $\textit{SS}$ is a variable denoting a set of strands,
 and $l_1$, $l_2$ denote a list of strand messages.

	Then, we define new transition rules for constrained messages.
	That is, we add  the reversed version
	of the following rules:
	
	\vspace{-2ex}
	
	\begin{small}	
	\begin{align}
		& \pair{\{ SS \, \& \, \{IK'\} \, \& \, (ro)[L \mid \{?,\mathit{Num}\}, L' ] \} , \Psi} \notag\\[-.5ex]
		& \to 
		\pair{\{ SS \, \& \, \{IK'\} \, \& \, (ro)[L, \{?,\mathit{Num}\} \mid L' ] \},\Psi}  \eqname{B?}
		\label{eq:CstrBW-nonDet}
		\end{align}

\vspace{-3ex}
		    
 \begin{align}
& \pair{\{ SS \, \& \, \{IK \} \, \& \, (ro)[L \mid \{ M =_{E_\caP} M,\mathit{Num}\}, L' ] \}, \Psi} \notag\\[-.5ex]
& \to 
\pair{\{ SS \, \& \, \{IK\} \, \& \, (ro)[L, \{M =_{E_\caP} M,\mathit{Num}\} \mid L' ] \} ,\Psi} \eqname{Bif=}
\label{eq:CstrBW-det-eq}
\end{align}

\vspace{-3ex}

\begin{align}
& \langle\{ SS \, \& \, \{IK\} \, \& \, (ro)[L \mid \{ M \neq M',\mathit{Num}\}, L' ] \} , 
         (\Psi \wedge \allowbreak M \neq M') \rangle \notag\\
& \to 
\pair{\{ SS \, \& \, \{IK\} \, \& \, (ro)[L, \{M \neq M',\mathit{Num}\} \mid L' ] \} ,\Psi}\notag\\[-.5ex]
& \mbox{if } (\Psi \wedge M \neq_{E_\caP} M') \mbox{ is satisfiable in } \caT_{\PCstrSymbols/ E_{\mathcal{P}}}(\Variables)\ 
\eqname{Bif$\mathit{\neq}$}
\label{eq:CstrBW-det-diseq}
\end{align}	
		\end{small}

Rule~\eqref{eq:CstrBW-nonDet} processes a constraint message
denoting an explicit non-deterministic choice with constant ``$?$''. The constraint store is not changed and no satisfiability
check is required.

Rules~\eqref{eq:CstrBW-det-eq} and \eqref{eq:CstrBW-det-diseq} 
deal with constrained messages 
associated to
explicit deterministic choices. 
Since the only constraints we allow in explicit deterministic choices are 
equalities and disequalities, 
rule~\eqref{eq:CstrBW-det-eq} is for the case when the constraint is an equality,  
rule \eqref{eq:CstrBW-det-diseq} is for the case when the constraint is a disequality. 
The equality constraint is solved by $E_\caP$-unification.
The constraint in a \emph{constrained state} is therefore a \emph{disequality constraint}, i.e.,  
	  $\Psi = {\bigwedge^n_{i=1}} u_i \neq_{\PEq} v_i$.  
	  The \emph{semantics} of such a constrained state, written $\semantics{\pair{St,\Psi}}$ is
	   the set of all ground substitution
	  instances  of the form:	  
	  $$\semantics{\pair{St,\Psi}} = \{ St\theta \mid \theta \in [\Variables \to \caT_{\PSymbols}] 
	  \wedge u_i\theta \neq_{E_\caP} v_i\theta, 1 \leq i \leq n  \} $$
	  The disequality constraints are then solved the same way as in \cite{EscobarMMS15}.

	\begin{definition} \label{def:CstrBW-semantics}
		Let $\mathcal{P}$ be a protocol with signature $\PCstrSymbols$ and equational theory $E_\mathcal{P}$.  We define the {\em constrained backwards rewrite theory characterizing $\caP$} to be \linebreak
		$(\PCstrSymbols, \CstrEq, \PCstrBWRls)$
		where $\CstrEq$ is same as explained in Section \ref{sec:mnpa-syntax}. $\PCstrBWRls$ is the result of reversing the rewrite rules  $\{ \eqref{eq:negative-1-CstrBW} , \eqref{eq:positiveNoLearn-2-CstrBW} ,\allowbreak \eqref{eq:positiveLearn-4-CstrBW},
		\eqref{eq:CstrBW-nonDet},
		\eqref{eq:CstrBW-det-eq}, 
		 \eqref{eq:CstrBW-det-diseq}\}  \cup \eqref{eq:newstrand-positive-CstrBW}$.
	\end{definition}

\section{Soundness and Completeness of \allowbreak Constrained Backwards Strand Semantics}
	\label{sec:proof-CstrFW-CstrBW}
The soundness and completeness proofs 
generalize the proofs in \cite{Escobar2014FW}. 
Recall that the state in the constrained states of constrained backwards strand semantics is a symbolic strand state, i.e., a state with variables. A state in the forwards strand semantics is a ground strand state, i.e., a state without variables.
The lifting relation defines the instantiation relation
 between symbolic and ground states. 
 
 We define a symbolic state and a ground state as follows.
 \begin{definition}[Symbolic Strand State]
   Given a protocol $\caP$, a \emph{symbolic} strand state $S$ of $\caP$ is a term of the form:
    \begin{align}
	 S = \{  & ::r_{1_1}, \ldots, r_{m_1}::[ u_{1_1}, \ldots u_{i_1-1} ~\mid~ u_{i_1},\ldots,  u_{n_1}] \ \& \notag\\
               & \vdots  \notag\\
               & 
                   ::r_{1_k}, \ldots, r_{m_k}:: [ u_{1_k}, \ldots, u_{i_k-1} ~\mid~ u_{i_k},\ldots,  u_{n_k}] \ \&\  SS\ \notag \\
          & \{\inI{w_1},  \ldots, \inI{w_m}, \ \nI{w'_1},  \ldots, \nI{w'_{m'}},  IK\} \} \notag
    \end{align}
    \noindent
    where for each $1\leq j\leq k$, there exists a strand
    $[ m_{1_j}, \ldots m_{i_j-1}, \allowbreak m_{i_j},\ldots,  m_{n_j}]\in \StrCstrFW$
    and a substitution $\rho_j:\Variables\to\caT_{\PSymbols}(\Variables)$ such that
    $m_{1_j}\rho_j \congr{E_\caP} u_{1_j}$,
    \ldots,
    $m_{n_j}\rho_j \congr{E_\caP} u_{n_j}$,
    $SS$ is a variable denoting a  (possibly empty) set of strands,
    and
    $IK$ is a variable denoting a (possibly empty) set of intruder's knowledge facts.
\end{definition}

\begin{definition}[Ground Strand State]
   Given a protocol $\caP$, a \emph{ground} strand state $s$ of $\caP$ is a term without variables of the form:
    \begin{align}
  s = \{  
                    & [ u_{1_1}, \ldots u_{i_1-1} ] \ \& 
                   \cdots \&\ 
                   [ u_{1_k}, \ldots, u_{i_k-1} ] \ \&\  \notag\\
            &
           \{\inI{w_1},  \ldots, \inI{w_m}\}\  \} \notag
    \end{align}
    \noindent
    where 
    for each $1\leq j\leq k$, there exists a strand
    $[ m_{1_j}, \ldots m_{i_j-1}, \allowbreak m_{i_j},\ldots,  m_{n_j}]\in \StrCstrFW$
    and a substitution $\rho_j:\Variables \to \caT_{\PSymbols}$ such that
    $m_{1_j}\rho_j \congr{E_\caP} u_{1_j}$,
    \ldots,
    $m_{i_j}\rho_j \congr{E_\caP} u_{i_j}$.
\end{definition}
  
The lifting relation in \cite{Escobar2014FW} is extended with constraints and constrained messages. 
Note that the $u_{i}$ in the definition below can be sent messages, received messages, or constrained messages. 

\begin{definition}[Lifting Relation]
    Given a protocol $\caP$, a constrained symbolic strand state $\CstrS=\cnt{S,\Psi}$ and a ground strand state $s$, 
	we say that $s$ \emph{lifts} to $\CstrS$,
	or that $\CstrS$ \emph{instantiates} to $s$ with a \emph{ground} substitution  
	$ \theta: (\var{S} - \{ \mathit{SS}, \mathit{IK} \}) \to  \caT_{\PSymbols}$,
	written $\CstrS >^\theta s$ iff 
	\begin{itemize}
		\item  for each strand $::r_{1},\ldots,r_{m}::[ u_{1}, \ldots u_{i-1} \mid u_{i},\ldots,  u_{n}]$  in $S$,
			there exists a strand 
			$[ v_{1}, \ldots v_{i-1} ]$ in $s$
 			such that $\forall 1\leq j\leq i-1$, $v_j =_{E_\caP} u_j\theta$.				
		\item  for each positive intruder fact $\inI{w}$ in $S$, 
		there exists a positive intruder fact $\inI{w'}$ in $s$ such that $w' \congr{E_\caP} w\theta$,
		and
		\item  for each negative intruder fact $\nI{w}$ in $S$, 
		there is no positive intruder fact $\inI{w'}$ in $s$ such that $w' \congr{E_\caP} w\theta$.
		\item ${E_{\caP}} \models \Psi\theta $.
	  \end{itemize}
\end{definition}
\noindent

In the following we show the soundness and completeness of transitions in constrained backwards strand semantics w.r.t. 
the constrained forwards strand
semantics by proving two lemmas stating the completeness and soundness
of one-step transition in the constrained backwards strand semantics w.r.t. 
the constrained forwards strand
semantics. The soundness and completeness result
directly follows these two lemmas. 
In the proofs we consider only transition rules added 
in both semantics to deal with explicit choices, that is, rules 
$\eqref{eq:forward-CstrMsg-det}
\cup \eqref{eq:forward-CstrMsg-nonDet} 
\cup \eqref{eq:forward-CstrMsg-first}$
in the constrained forwards strand semantics
and rules
$\{
\eqref{eq:CstrBW-nonDet},
\eqref{eq:CstrBW-det-eq},  \eqref{eq:CstrBW-det-diseq} \}$
in the constrained backwards strand semantics.
The proof 
of the soundness and completeness of
one-step transitions performed in the constrained backwards strand semantics
using rules 
$\{ \eqref{eq:negative-1-CstrBW} , \eqref{eq:positiveNoLearn-2-CstrBW}, \eqref{eq:positiveLearn-4-CstrBW}\} \cup \eqref{eq:newstrand-positive-CstrBW}$
w.r.t to one-step transitions performed in the
constrained forwards  strand semantics
using rules  $\eqref{eq:forward-positive-modIK-Cstr}
\cup \eqref{eq:forward-positive-Cstr}
\cup \eqref{eq:forward-positive2-modIK-Cstr}
\cup  \eqref{eq:forward-positive2-Cstr}
\cup  \eqref{eq:forward-negative-Cstr} 
\cup  \eqref{eq:forward-negative2-Cstr} $
is the same as in 
\cite{Escobar2014FW}, since in these transitions
no constraint is involved. 
Note that although in \cite{Escobar2014FW}, \emph{Choice Variables} were not defined explicitly, the proof extends to strands with choice variables naturally, since the lifting relation between a ground state and a symbolic state does not need to be changed to  cover choice variables. Since the strand labels are irrelevant for the result of this section, we will omit the strand labels to simplify the notation from now on.  Also, we include the fresh substitution in the substitutions and do not separate the fresh substitutions explicitly.

Extending the proofs in \cite{Escobar2014FW}, we first prove
how
the lifting of a ground state to a symbolic state induces a lifting
of a forwards rewriting step in the forwards semantics  to a backwards narrowing step
 in the backwards semantics, i.e., the completeness of one-step transition. 
 The lemma below extends the lifting lemma in \cite{Escobar2014FW} to strands with constrained messages.

\begin{lemma}[Lifting Lemma]\label{lemm: constrained-completeness}
Given a protocol $\caP$, two ground strand states  $s$ and $s'$, a constrained symbolic strand state $\CstrS'=\cnt{S', \Psi'}$
and a substitution $\theta'$ s.t.  $s \to s'$ and $\CstrS' >^{\theta'} s'$,
then there exists  a constrained symbolic strand state $\CstrS=\cnt{S, \Psi}$ and a 
substitution $\theta$ s.t.  $\CstrS >^{\theta} s$
and either  $\CstrS \narrowleft{\mu}{} \CstrS'$ or $\CstrS = \CstrS'$.
\end{lemma}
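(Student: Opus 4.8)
The plan is to prove this \emph{Lifting Lemma} by a case analysis on which rule of $\PCstrFWRls$ is used in the forward ground step $s \to s'$, reusing the argument of the lifting lemma of \cite{Escobar2014FW} for the send/receive rules \eqref{eq:forward-positive-modIK-Cstr}--\eqref{eq:forward-negative2-Cstr} (where the store $\Psi$ is untouched and, as already observed in the text, the presence of choice variables does not affect the lifting relation) and concentrating on the three genuinely new rules \eqref{eq:forward-CstrMsg-det}, \eqref{eq:forward-CstrMsg-nonDet} and \eqref{eq:forward-CstrMsg-first}. In each such case the step appends a ground constrained message $c$ to the past of some ground strand, turning a prefix $[v_1,\ldots,v_{j-1}]$ in $s$ into $[v_1,\ldots,v_{j-1},c]$ in $s'$, where $c$ has the form $\{?,\mathit{Num}\}$, an equality $\{M\congr{\PEq}M',\mathit{Num}\}$, or a disequality $\{M\neq M',\mathit{Num}\}$, the last two only because the forward \textsf{IF} side condition of \eqref{eq:forward-CstrMsg-det} held on the ground instance.

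The first dichotomy I would set up is whether the modified ground strand has a symbolic counterpart in $S'$. Since the lifting relation only forces each \emph{symbolic} strand of $S'$ to possess a ground witness in $s'$, the state $s'$ may carry ground strands that $S'$ does not track. If the strand modified by the step is of this kind, then deleting $c$ affects no clause of the lifting relation: constrained messages never enter the intruder knowledge, so the $\inI{\cdot}$ and $\nI{\cdot}$ clauses are preserved verbatim, and $\PEq\models\Psi'\theta'$ is unchanged. Hence $\CstrS' >^{\theta'} s$ already holds and we conclude with $\CstrS=\CstrS'$ and $\theta=\theta'$, which is exactly the second alternative $\CstrS=\CstrS'$ offered by the statement.

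In the tracked case the modified strand lifts from a symbolic strand $::\vec r::[w_1,\ldots,w_{j-1},c'\mid\cdots]$ of $S'$ whose past ends in a symbolic constrained message $c'$ with $c'\theta'$ equal, modulo $\PEq$, to the ground $c$. I would obtain $\CstrS$ by moving the bar back across $c'$ using the reversed backward rule matching the shape of $c$: \eqref{eq:CstrBW-nonDet} for $\{?,\mathit{Num}\}$, \eqref{eq:CstrBW-det-eq} for an equality, \eqref{eq:CstrBW-det-diseq} for a disequality. For the non-deterministic case the narrowing substitution is $\mu=\idsubst$, the store is unchanged, and $\CstrS >^{\theta'} s$ is immediate because the backed-up prefix $[w_1,\ldots,w_{j-1}]$ still matches $[v_1,\ldots,v_{j-1}]$ under $\theta'$. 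For the equality $c'=\{M_1\congr{\PEq}M_2,\mathit{Num}\}$, the forward rule fired only because $M_1\theta'\congr{\PEq}M_2\theta'$, so $\theta'$ is an $E_\caP$-unifier of $M_1=M_2$; by completeness of the finitary $E_\caP$-unification algorithm there are $\mu\in\csu{M_1=M_2}{}{\PEq}$ and a ground $\theta$ with $\mu\sqsupseteq_{\PEq}\theta'$, i.e. $\mu\theta\congr{\PEq}\theta'$ on the relevant variables. This $\mu$ is precisely the narrowing substitution of \eqref{eq:CstrBW-det-eq}, I set $\CstrS=\cnt{S\mu,\Psi'\mu}$, and I verify $\CstrS >^{\theta} s$ by pushing $\mu\theta\congr{\PEq}\theta'$ through every clause of the lifting relation. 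For the disequality $c'=\{M_1\neq M_2,\mathit{Num}\}$ I take $\mu=\idsubst$ and $\CstrS=\cnt{S,\Psi'\wedge M_1\neq M_2}$; the side condition of \eqref{eq:CstrBW-det-diseq} that $\Psi'\wedge M_1\neq_{\PEq}M_2$ be satisfiable is discharged by exhibiting $\theta'$ as a witness, since $\PEq\models\Psi'\theta'$ follows from $\CstrS' >^{\theta'} s'$ and $M_1\theta'\neq_{\PEq}M_2\theta'$ is exactly the forward \textsf{IF} condition, and the same $\theta'$ then yields $\CstrS >^{\theta'} s$.

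The step I expect to be the main obstacle is the equality case: correctly factoring the ground satisfaction $M_1\theta'\congr{\PEq}M_2\theta'$ through a complete-set-of-unifiers element $\mu$ and then propagating $\mu\theta\congr{\PEq}\theta'$ \emph{consistently} across all lifting clauses (strand prefixes, positive and negative intruder facts, and the store), while keeping $\theta$ ground and disjoint from the freshly renamed variables of the rule. A secondary subtlety is the boundary rule \eqref{eq:forward-CstrMsg-first}, where backing up the bar lands it at the head of a one-message strand: here the backed-up symbolic strand has an empty executed prefix and is handled exactly as strand introduction is handled in the backward semantics of \cite{Escobar2014FW}, so that $\CstrS$ contains one fewer strand than $\CstrS'$ and $s$, which does not contain the newly created strand, lifts to it under the same $\theta'$.
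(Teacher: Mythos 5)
Your proposal is correct and follows essentially the same route as the paper's proof: it restricts attention to the three new constrained-message rules, uses the same dichotomy on whether the affected strand appears in $S'\theta'$ (the paper's cases e and f, yielding $\CstrS=\CstrS'$ in the untracked case), and in the tracked case applies the matching backwards rule --- with $\mu\theta \congr{E_\caP} \theta'$ obtained from completeness of $E_\caP$-unification for the equality constraint, and the extended store $\Psi'\wedge M_1\neq M_2$ whose satisfiability is witnessed by $\theta'$ for the disequality --- exactly as the paper does, only spelled out in more detail. The one cosmetic deviation is rule \eqref{eq:forward-CstrMsg-first}, which the paper dismisses as ``similar'': there the backwards rule \eqref{eq:CstrBW-nonDet} leaves a symbolic strand with an empty executed prefix rather than deleting it, so $\CstrS$ has the same number of strands as $\CstrS'$, but this does not affect the lifting argument.
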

The Lifting Lemma is illustrated by Figure \ref{fig:lift}.
\begin{figure}
\centering
{
$
\xymatrix@C=13pt@R=13pt{
\CstrS\ar@{.>}[d]_{>^{\theta}} & \CstrS'\ar@{.>}[l]\ar[d]^{>^{\theta'}} \\
s\ar[r] & s'
}%
$
}
\caption{Lemma \ref{lemm: constrained-completeness}}
\label{fig:lift}

\end{figure}

 \begin{proof}
As has been explained before, we only need to consider the new rules:
\eqref{eq:forward-CstrMsg-det}, \eqref{eq:forward-CstrMsg-nonDet},
\eqref{eq:forward-CstrMsg-first}. The proof in \cite{Escobar2014FW} is
structured by
 cases, some of which having  specific requirements on intruder knowledge, 
or involve changes made to the intruder knowledge. Since all the new
rules we are 
considering do not have specific requirements on the intruder
knowledge, and do 
not change the intruder knowledge either, the cases that we need
to consider are 
the following (cases $e$ and $f$ in the proof  in
\cite{Escobar2014FW}), which 
involve the appearance or non-appearance of certain strand(s):
\begin{itemize}
		\item[e:] 
There is a strand $[u_1,\ldots,u_{j-1},u_j,\ldots,u_n]$ in $\StrCstrFW$,
$n \geq 1$, $1 \leq j \leq n$,
and a substitution $\rho$ such that
$[u_1,\ldots,u_{j-1},u_j]\rho$ is a strand in $s'$
and
$[u_1,\ldots,u_{j-1}, u_j \mid u_{j+1},\ldots,u_n]\rho$ is a strand in $S'\theta'$.

\item[f:] 
There is a strand $[u_1,\ldots,u_{j-1},u_j,\ldots,u_n]$ in $\StrCstrFW$,
$n \geq 1$, $1 \leq j \leq n$,
and a substitution $\rho$ such that
$[u_1,\ldots,u_{j-1},u_j]\rho$ is a strand in $s'$
but
$[u_1,\ldots,u_{j-1},  u_j \mid u_{j+1} , \ldots,u_n]\rho$ is not a strand in $S'\theta'$.
		\end{itemize} 		
		
Now we consider for the forward rewrite rule application in the step $s\rightarrow s'$.  

\begin{itemize}
\item Given ground states $s$ and $s'$ s.t. $s\rightarrow s'$ using a rule in set \eqref{eq:forward-CstrMsg-det}, then there exists a ground substitution $\tau$, variables SS' and IK', and strand $ [ u_1,\ldots,u_{j-1}, \linebreak \{T, Num\}, u_{j+1},\ldots,u_n]$ in $\StrCstrFW$, such that $s =\{SS'\tau  \& \{IK'\tau\} \& (ro) [u_1\tau,\allowbreak \ldots,u_{j-1}\tau]\} $, and $s'=\{SS'\tau  \&  \{IK'\tau\} \& [u_1\tau,\ldots,u_{j-1}\tau, \{T\tau, Num\}]\} $ and $T\tau =_{E_{\caP}} true$. Since there exists a substitution $\theta '$ s. t.  $\CstrS' >^{\theta'} s'$, we consider the following two cases:

\begin{itemize}
\item{Case e)} The strand appears in $S'\theta'$. More specifically, 
$[ u_1\sigma,\ldots, u_{j-1}\sigma, \linebreak \{T\sigma, Num\} \mid u_{j+1}\sigma,\ldots,u_n\sigma]$ is a strand in $S'$ 
s.t. $\sigma\theta' =_{E_\caP} \tau$.  
If the constraint $T$ is an equality constraint, 
since $T\tau =_{E_\caP} T\sigma\theta'  =_{E_{\caP}} true$, 
and by the lifting relation, $E_\caP \models \Psi'\theta'$, 
rule \eqref{eq:CstrBW-det-eq} can be applied for the backwards narrowing $\CstrS' \narrowleft{\mu}{} \CstrS$, 
and $\CstrS >^{\theta} s$ such that $\mu\theta =_{E_\caP}\theta'$.
If the constraint $T$ is a disequality constraint, 
since $T\tau =_{E_\caP} T\sigma\theta'  =_{E_{\caP}} true$, 
and by the lifting relation, $E_\caP \models \Psi'\theta'$, 
we have $E_\caP \models T\sigma\theta' \wedge \Psi'\theta'$.
 Therefore, rule \eqref{eq:CstrBW-det-diseq} can be applied for the backwards narrowing, and $\CstrS >^{\theta} s$. 

\item {Case f)} 	The strand does not appear in $S'\theta'$. Then $\theta'$ makes $S'$ as a valid symbolic strand state of $s$, i.e., $S = S'$ and $\CstrS' >^{\theta'} s$.
	\end{itemize}

\item Given ground strand states $s$ and $s'$ s.t. $s\rightarrow s'$ using a rule in set \eqref{eq:forward-CstrMsg-nonDet}, then we consider the following two applicable cases:
\begin{itemize}
		\item {Case e)} The strand appears in $S'\theta'$
		and thus we can perform a backwards narrowing step from $\CstrS'$
		with rule~\eqref{eq:CstrBW-nonDet}, i.e., $\CstrS'  \narrow{}{} \CstrS$,
		 and $\CstrS  >^{\theta'} s$.

		\item {Case f)} The strand does not appear in $S'\theta'$.
		Then $\theta'$ makes  $\CstrS'$ as a valid constraint symbolic state of $s$,
		i.e., $\CstrS = \CstrS'$ and $\CstrS >^{\theta'} s$.
		\end{itemize}
		
\item  Given states $s$ and $s'$ s.t. $s\rightarrow s'$ using a rule in set \eqref{eq:forward-CstrMsg-first}, the proof is similar with using a rule in the set \eqref{eq:forward-CstrMsg-nonDet}.
\end{itemize}
\end{proof}
  
 Theorem \ref{thm:completeness} below then follows straightforwardly. 
 
\begin{theorem}[Completeness]\label{thm:completeness}
 	Given a protocol $\caP$,  two ground strand states $s, s_0$,
	a constrained symbolic strand state $\CstrS$ and a substitution $\theta$ s.t.
	(i) $s_0$ is an initial state, 
	(ii) $s_0 \to^n s$, and
	(iii) $\CstrS >^\theta s$. 
	Then there exists a constrained symbolic initial strand state $\CstrS_0$, two substitutions  $\mu$  and $\theta'$, 
	and $k \leq n$, s.t. 
	$\CstrS_0  \narrowleft{k}{\mu} \CstrS$, and
	 $\CstrS_0 >^{\theta'} s_0$.
\end{theorem}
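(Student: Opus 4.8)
The plan is to proceed by induction on the number $n$ of forward rewriting steps in $s_0 \to^n s$, using the Lifting Lemma (Lemma~\ref{lemm: constrained-completeness}) to transport the lifting one step backwards at a time while accumulating backwards narrowing steps. The idea is that each application of the Lifting Lemma peels off the last forward step and either produces one matching backwards narrowing step or leaves the symbolic state unchanged, so after $n$ applications I arrive at a symbolic state lifting to $s_0$.

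In the base case $n=0$ we have $s = s_0$, so $\CstrS >^\theta s_0$ with $s_0$ an initial ground state. Here I would argue that a constrained symbolic state lifting to an initial ground state is itself a symbolic \emph{initial} state: by definition an initial state has the bar at the beginning of every strand and carries no positive intruder fact $\inI{m}$, and through the lifting relation this forces each symbolic strand of $\CstrS$ to have an empty executed prefix and forbids positive intruder facts in $\CstrS$. Hence I take $\CstrS_0 = \CstrS$, $k = 0$, $\mu = \idsubst$ and $\theta' = \theta$; then $\CstrS_0 \narrowleft{0}{\idsubst} \CstrS$ and $\CstrS_0 >^{\theta'} s_0$ hold trivially.

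For the inductive step I assume the statement for executions of length $n-1$ and decompose the given execution as $s_0 \to^{n-1} \ol{s} \to s$. Applying the Lifting Lemma to the final step $\ol{s} \to s$ (instantiating its $s'$, $\CstrS'$ and $\theta'$ with $s$, $\CstrS$ and $\theta$) yields a constrained symbolic state $\ol{\CstrS}$ and a substitution $\ol{\theta}$ with $\ol{\CstrS} >^{\ol{\theta}} \ol{s}$ and either $\ol{\CstrS} \narrowleft{\mu_1}{} \CstrS$ or $\ol{\CstrS} = \CstrS$. I then apply the induction hypothesis to the prefix $s_0 \to^{n-1} \ol{s}$ together with $\ol{\CstrS} >^{\ol{\theta}} \ol{s}$, obtaining a symbolic initial state $\CstrS_0$, substitutions $\mu_2$ and $\theta'$, and $k' \le n-1$ such that $\CstrS_0 \narrowleft{k'}{\mu_2} \ol{\CstrS}$ and $\CstrS_0 >^{\theta'} s_0$. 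Concatenating the two backwards narrowing derivations gives the result: in the step alternative I set $k = k'+1 \le n$ and let $\mu$ be the composition of $\mu_1$ and $\mu_2$, while in the equality alternative I set $k = k' \le n-1 \le n$ and $\mu = \mu_2$. In both cases $\CstrS_0 \narrowleft{k}{\mu} \CstrS$ with $\CstrS_0$ a symbolic initial state, $k \le n$, and $\CstrS_0 >^{\theta'} s_0$.

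The routine but load-bearing bookkeeping is the correct instantiation of the Lifting Lemma for the last step and the tracking of substitution composition, where the compatibility of successive lifting substitutions is exactly the relation $\mu\theta \congr{E_\caP} \theta'$ guaranteed inside that lemma's proof. The genuinely delicate point, and the main obstacle, is the base case claim that lifting to an initial ground state forces $\CstrS$ to be a bona fide symbolic initial state; everything else is the standard inductive concatenation of one-step liftings. The slack $k \le n$, rather than $k = n$, is precisely what absorbs those forward steps whose lifting falls under the ``$\ol{\CstrS} = \CstrS$'' alternative and therefore contribute no backwards narrowing step.
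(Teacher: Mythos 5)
Your proposal is correct and takes essentially the same route as the paper: the paper gives no explicit proof, stating only that the theorem ``follows straightforwardly'' from the Lifting Lemma (Lemma~\ref{lemm: constrained-completeness}), and your induction on $n$ --- peeling off the last forward step with the Lifting Lemma, concatenating the resulting backwards narrowing steps, and letting the slack $k \leq n$ absorb the $\ol{\CstrS} = \CstrS$ alternative --- is precisely that straightforward argument. Your base-case observation (that the lifting relation applied to the empty initial ground state forces every symbolic strand to have an empty executed prefix and forbids positive intruder facts, so $\CstrS$ is itself a symbolic initial state) is a correct and worthwhile filling-in of a detail the paper leaves implicit.
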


The Soundness Theorem from \cite{Escobar2014FW} can also be extended to constrained backwards and forwards strand semantics. 
We first show that Lemma 2 in \cite{Escobar2014FW}, which states the soundness of one-step transition, still holds after extending to constrained states. The Soundness Theorem then follows straightforwardly.

\begin{lemma}\label{lemm: constrained-soundness} 
Given a protocol $\caP$, two constrained symbolic states $\CstrS=\cnt{S, \Psi}$ and $\CstrS'=\cnt{S', \Psi' }$,
a ground strand state $s$ 
and a ground substitution $\theta$, if
$\CstrS  \narrowleftTop{\mu} \CstrS'$ and   $\CstrS >^\theta s$, 
then there exists a ground strand state $s'$ and a ground substitution $\theta'$ such that
$s  \to s'$, and $\CstrS' >^{\theta'} s'$.
\end{lemma}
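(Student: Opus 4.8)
The plan is to prove this lemma as the exact dual of the Lifting Lemma (Lemma~\ref{lemm: constrained-completeness}): there one starts from a forward ground step and lifts it to a backwards narrowing step, whereas here I start from a backwards narrowing step $\CstrS \narrowleftTop{\mu} \CstrS'$ together with a lift $\CstrS >^\theta s$ of its \emph{target} $\CstrS$, and must exhibit a forward ground step $s \to s'$ whose \emph{target} $s'$ lifts to the \emph{source} $\CstrS'$. As in the proof of Lemma~\ref{lemm: constrained-completeness}, and following the scope restriction stated before it, I would treat only the three backwards rules dealing with explicit choice, namely \eqref{eq:CstrBW-nonDet}, \eqref{eq:CstrBW-det-eq} and \eqref{eq:CstrBW-det-diseq}, since the remaining rules carry no constraints and are handled exactly as in \cite{Escobar2014FW}. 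In every such case the step moves the bar of one strand leftward over a single constraint message: writing the narrowed strand of the source $\CstrS'$ as $(ro,i)[L, \{Cstr, Num\} \mid L']$, the corresponding strand of the target $\CstrS$ is $(ro,i)[L \mid \{Cstr, Num\}, L']$, with the narrowing unifier $\mu$ applied throughout.

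The uniform construction I would use is to set $\theta' := \mu\theta$, which is well defined and ground since $\mu$ maps the variables of $\CstrS'$ into terms over the variables of $\CstrS$, on which $\theta$ is ground. I then obtain $s'$ from $s$ by applying the matching forward rule that appends the ground constraint message $\{Cstr, Num\}\theta'$ to the ground strand of $s$ coming from $L$: rules \eqref{eq:CstrBW-det-eq} and \eqref{eq:CstrBW-det-diseq} are matched against the forward deterministic rule \eqref{eq:forward-CstrMsg-det}, while \eqref{eq:CstrBW-nonDet} is matched against \eqref{eq:forward-CstrMsg-nonDet} when $L \neq nil$ and against the strand-introduction rule \eqref{eq:forward-CstrMsg-first} when $L = nil$ (so that the strand, having its bar at the beginning in $\CstrS$, is absent from $s$ and is freshly introduced in $s'$). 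The strand and intruder-fact clauses of $\CstrS' >^{\theta'} s'$ then transfer from $\CstrS >^\theta s$: since $IK$ is untouched by all three rules and $\CstrS'\mu$ agrees with $\CstrS$ on every past message except the newly barred constraint, the positive and negative intruder-fact conditions and the remaining strand conditions hold verbatim under $\theta' = \mu\theta$, and the one new past message matches because $s'$ extends $s$ by exactly $\{Cstr,Num\}\theta'$.

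The only genuinely new verifications, and where I expect the real work to lie, are discharging the applicability guard of the chosen forward rule and the constraint-store clause $E_\caP \models \Psi'\theta'$. For the non-deterministic rule both are vacuous. For the equality case \eqref{eq:CstrBW-det-eq} the unifier $\mu$ is precisely the $E_\caP$-unifier collapsing the two sides of $Cstr$ into the common term $M$, so the forward guard $Cstr\theta'$ reads $M\theta' =_{E_\caP} M\theta'$ and holds trivially, while $\Psi = \Psi'\mu$ gives $\Psi'\theta' = \Psi'\mu\theta = \Psi\theta$, satisfied by the hypothesis $E_\caP \models \Psi\theta$. The delicate case is the disequality \eqref{eq:CstrBW-det-diseq}: here $\mu$ is the identity and the rule records the guard by augmenting the store, so that the target store satisfies $\Psi = \Psi' \wedge (M \neq M')$. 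The hypothesis $E_\caP \models \Psi\theta$ therefore splits into $E_\caP \models \Psi'\theta$ (giving the store clause for $\CstrS'$, since $\theta'=\theta$ here) and $M\theta \neq_{E_\caP} M'\theta$, and it is exactly this second conjunct that supplies the forward guard $M\theta' \neq_{E_\caP} M'\theta'$ required to fire \eqref{eq:forward-CstrMsg-det}. In other words, the disequality store is what \emph{remembers} the branch taken, and the crux of the argument is verifying, via the store semantics $\semantics{\pair{St,\Psi}}$, that this memory translates into the ground disequality guarding the forward step. Once these guards are checked the case analysis is complete, and the Soundness Theorem then follows by the usual induction on the length of the backwards narrowing sequence.
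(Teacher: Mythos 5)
Your proof is correct and takes essentially the same approach as the paper's: restrict attention to the three new backwards rules, pair \eqref{eq:CstrBW-nonDet} with \eqref{eq:forward-CstrMsg-nonDet} and \eqref{eq:forward-CstrMsg-first}, pair \eqref{eq:CstrBW-det-eq} and \eqref{eq:CstrBW-det-diseq} with \eqref{eq:forward-CstrMsg-det}, set $\theta' = \mu\theta$, and discharge the forward guards and the store clause $E_{\caP} \models \Psi'\theta'$ exactly as you describe (equality case: the unifier makes the guard trivially true and $\Psi =_{E_{\caP}} \Psi'\mu$; disequality case: $\Psi =_{E_{\caP}} \Psi'\mu \wedge (u \neq v)\sigma'\mu$, whose second conjunct under $\theta$ is precisely the forward guard). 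The only differences are cosmetic: the paper keeps the narrowing unifier $\mu$ general in the disequality case where you specialize it to the identity, and it states the association with \eqref{eq:forward-CstrMsg-nonDet} versus \eqref{eq:forward-CstrMsg-first} without spelling out your (correct) $L = nil$ criterion.
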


Lemma \ref{lemm: constrained-soundness} is illustrated by the Figure \ref{fig:soundness}.
\begin{figure}
\centering
{
$
\xymatrix@C=13pt@R=13pt{
\CstrS\ar[d]_{>^{\theta}} & \CstrS'\ar@{~>}[l]\ar@{.>}[d]^{>^{\theta'}} \\
s\ar@{.>}[r] & s'
}%
$
}
\caption{Lemma \ref{lemm: constrained-soundness}}
\label{fig:soundness}
\end{figure}

\begin{proof}
We only need to consider the new rules: rule \eqref{eq:CstrBW-det-eq}, \eqref{eq:CstrBW-det-diseq} and \eqref{eq:CstrBW-nonDet}. \\
1) If $\CstrS  \narrowleftTop{\mu} {\CstrS'}$ using rule \eqref{eq:CstrBW-nonDet}, then there are associated rules in the sets \eqref{eq:forward-CstrMsg-nonDet} and \eqref{eq:forward-CstrMsg-first}.\\
2) If $\CstrS  \narrowleftTop{\mu} {\CstrS'}$ using rule \eqref{eq:CstrBW-det-eq}, 
there is a strand  
$[ u_1\sigma,\ldots,u_{j-1}\sigma \mid \{(u=v)\sigma, Num\}, u_{j+1}\sigma,\ldots, u_n\sigma]$ in $S$, 
$[ u_1\sigma',\ldots,u_{j-1}\sigma', \{(u=v)\sigma', Num\} \mid u_{j+1}\sigma',\ldots, \allowbreak u_n\sigma']$ in $S'$
s.t. $\sigma=_{E_\caP} \sigma'\mu$, $\Psi =_{E_\caP} \Psi' \mu$ and $u\sigma=_{E_\caP}v\sigma$, 
where $ [ u_1,\ldots,u_{j-1}, \{u=v, Num\},u_{j+1},\ldots,u_n]$ is a strand in $\StrCstrFW$.
  Since $\CstrS >^\theta s$, there is a ground strand $[ u_1\sigma\theta,\ldots, \allowbreak u_{j-1}\sigma\theta]$ in s, 
  and $E_\caP \models \Psi\theta $. 
  Therefore, $E_\caP \models \Psi' \mu\theta $ and $u\sigma\theta=_{E_\caP}v\sigma\theta$.
  By rule 
\eqref{eq:forward-CstrMsg-det}, $s\rightarrow s'$, and  $\CstrS' >^{\mu\theta} s'$.
  
  If $\CstrS  \narrowleftTop{\mu} {\CstrS'}$ using rule \eqref{eq:CstrBW-det-diseq}, 
there is a strand  
$[ u_1\sigma,\ldots,u_{j-1}\sigma \mid \{(u\neq v) \sigma, Num\}, u_{j+1}\sigma,\ldots, u_n\sigma]$ in $S$ , 
$[ u_1\sigma',\ldots,u_{j-1}\sigma', \{(u\neq v)\sigma', Num\} \mid  u_{j+1}\sigma', \ldots, \allowbreak u_n\sigma']$ in $S'$
s.t. $\sigma=_{E_\caP} \sigma'\mu$ and $\Psi =_{E_\caP} \Psi'\mu \wedge (u\neq v) \sigma'\mu$, 
where $ [ u_1,\ldots,u_{j-1}, \{u\neq v, Num\},u_{j+1},\ldots,u_n]$ is a strand in $\StrCstrFW$.
  Since $\CstrS >^\theta s$, there is a ground strand $[ u_1\sigma\theta,\ldots, \allowbreak u_{j-1}\sigma\theta]$ in s, 
  and $E_\caP \models \Psi\theta$. Therefore, $E_\caP \models \Psi'\mu\theta \wedge (u\neq v) \sigma'\mu\theta$.
 By rule 
\eqref{eq:forward-CstrMsg-det}, $s\rightarrow s'$, and  $\CstrS' >^{\mu\theta} s'$.
\end{proof}

The Soundness Theorem below shows that the backwards symbolic reachability analysis
is \emph{sound} with respect to the forwards rewriting-based strand semantics.

\begin{theorem}[Soundness]\label{thm:soundness}
Given a protocol $\caP$, two constrained symbolic strand states $\CstrS_0, \CstrS'$,
	an initial ground strand state $s_0$ and a substitution $\theta$ s.t.
	(i) $\CstrS_0$ is a symbolic initial state, and
	(ii) $\CstrS_0  \stackrel{*}{\leftsquigarrow} \CstrS'$    , and
	(iii)  $\CstrS_0 >^{\theta} s_0$. 
	Then there exists a ground strand state $s'$ and a substitution $\theta'$, s.t.
	(i)  $s_0  \to^* s'$, and
	(ii)  $\CstrS' >^{\theta'} s'$.
\end{theorem}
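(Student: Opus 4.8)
The plan is to obtain the multi-step statement from the one-step Soundness Lemma (Lemma~\ref{lemm: constrained-soundness}) by induction on the number $k$ of backwards narrowing steps witnessing $\CstrS_0 \stackrel{*}{\leftsquigarrow} \CstrS'$. In the base case $k=0$ we have $\CstrS_0 = \CstrS'$, so I would simply take $s' = s_0$ and $\theta' = \theta$; then $s_0 \to^* s_0$ holds in zero steps and $\CstrS' >^{\theta} s_0$ is exactly hypothesis (iii).

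For the inductive step I would factor the derivation using the ``append'' form of the reflexive--transitive closure: a $k$-step derivation $\CstrS_0 \stackrel{*}{\leftsquigarrow} \CstrS'$ splits as a $(k{-}1)$-step prefix $\CstrS_0 \stackrel{*}{\leftsquigarrow} \CstrS_1$ followed by a single step $\CstrS_1 \narrowleftTop{\mu} \CstrS'$. Since $\CstrS_0$ is still the symbolic initial state and $s_0$ is still the same initial ground state, all hypotheses are preserved for the prefix, so applying the induction hypothesis to $\CstrS_0 \stackrel{*}{\leftsquigarrow} \CstrS_1$ yields a ground strand state $s_1$ and a substitution $\theta_1$ with $s_0 \to^* s_1$ and $\CstrS_1 >^{\theta_1} s_1$. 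Note that $s_1$ is again ground, because the forwards transition rules of Section~\ref{sec:CstrFW} map ground states to ground states, so it is a legitimate input to the one-step lemma. I would then feed the remaining single step $\CstrS_1 \narrowleftTop{\mu} \CstrS'$ together with the lifting $\CstrS_1 >^{\theta_1} s_1$ into Lemma~\ref{lemm: constrained-soundness}, obtaining a ground strand state $s'$ and a substitution $\theta'$ with $s_1 \to s'$ and $\CstrS' >^{\theta'} s'$. Composing the forward derivations gives $s_0 \to^* s_1 \to s'$, i.e.\ $s_0 \to^* s'$, which together with $\CstrS' >^{\theta'} s'$ closes the induction.

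The only point requiring care is the bookkeeping of directions. Because backwards narrowing is forwards rewriting run in reverse, the \emph{last} backwards narrowing step $\CstrS_1 \narrowleftTop{\mu} \CstrS'$ (the one nearest the goal state $\CstrS'$) is exactly the step consumed by the one-step lemma to produce the forward step $s_1 \to s'$ extending the derivation, while the prefix that reaches the initial state $\CstrS_0$ is absorbed by the induction hypothesis; choosing the append decomposition rather than the prepend one is what keeps $\CstrS_0$ and $s_0$ initial throughout. I do not expect a genuine obstacle, since all the substantive work --- matching the reversed constraint rules $\eqref{eq:CstrBW-nonDet}$, $\eqref{eq:CstrBW-det-eq}$ and $\eqref{eq:CstrBW-det-diseq}$ against their forwards counterparts and verifying that the constraint store remains satisfiable under the composed ground substitution --- has already been discharged in Lemma~\ref{lemm: constrained-soundness}. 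In particular the initiality of $s_0$ is merely carried along for the intended reading of the statement and is never used as a side condition, as neither the one-step lemma nor the lifting relation imposes any requirement that could fail on the intermediate, non-initial state $s_1$.
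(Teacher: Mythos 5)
Your proposal is correct and takes essentially the same route as the paper: the paper dispatches Theorem~\ref{thm:soundness} with the remark that it ``follows straightforwardly'' from the one-step Lemma~\ref{lemm: constrained-soundness}, and your induction on the length of the backwards narrowing derivation is exactly that routine argument written out in full. Your bookkeeping is also right --- splitting off the step adjacent to $\CstrS'$, applying the induction hypothesis to the portion reaching the initial state $\CstrS_0$, and noting that neither the one-step lemma nor the lifting relation requires the intermediate ground state $s_1$ to be initial --- so the composition $s_0 \to^* s_1 \to s'$ closes the proof as intended.
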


The soundness and completeness results in Theorems~\ref{thm:soundness} and~\ref{thm:completeness} together with the bisimulation proved in Theorem~\ref{thm:Bisimulation} show that the backwards symbolic reachability analysis
is \emph{sound}  and \emph{complete} with respect to the process algebra semantics.

\begin{theorem}[Soundness]\label{thm: Soundness-PA}
Given a protocol $\caP$,  two constrained symbolic strand states $\CstrS_0, \CstrS$, the initial FW-State $F_{init}$, a substitution $\theta$, and the initial PA-State $P_{init}$ s.t.
	(i)   $\CstrS_0$ is a symbolic initial strand state, and
	(ii)  $\CstrS_0  \stackrel{*}{\leftsquigarrow}_{\mu} \CstrS$, and
	(iii) $\CstrS_0 >^{\theta} F_{init}$.  
Then there exists a FW-State $\Fst$ 
such that $\CstrS >^{\theta'} \Fst$,  and therefore, there is a PA-State $\Pst$ such that $\Pst\  \HState\   \Fst$.
\end{theorem}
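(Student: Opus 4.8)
The plan is to chain the two results already in place: Theorem~\ref{thm:soundness}, relating the constrained backwards and constrained forwards semantics, and Theorem~\ref{thm:Bisimulation}, which states that $\HState$ is a label bisimulation between the process algebra and the constrained forwards semantics. The first yields the FW-State $\Fst$ and the lifting $\CstrS >^{\theta'} \Fst$; the second supplies the matching PA-State $\Pst$ with $\Pst\ \HState\ \Fst$.

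First I would produce $\Fst$. Observe that the initial FW-State $F_{init}=\{\emptyset~\&~\{empty\}\}$ is an initial \emph{ground} strand state in the sense required by Theorem~\ref{thm:soundness}: it has no variables, an empty strand multiset, and empty intruder knowledge. Instantiating Theorem~\ref{thm:soundness} with $s_0 := F_{init}$, $\CstrS' := \CstrS$, and the same $\theta$, its three hypotheses are exactly hypotheses (i)--(iii) of the present statement, namely that $\CstrS_0$ is a symbolic initial state, that $\CstrS_0 \stackrel{*}{\leftsquigarrow} \CstrS$, and that $\CstrS_0 >^{\theta} F_{init}$. Its conclusion gives a ground strand state $s'$ and a substitution $\theta'$ with $F_{init} \to^* s'$ and $\CstrS >^{\theta'} s'$. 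I set $\Fst := s'$. Since $\Fst$ is reachable from the initial state by the forwards rewriting relation, it is by definition a FW-State, and $\CstrS >^{\theta'} \Fst$ is the first half of the conclusion.

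It remains to exhibit a PA-State $\Pst$ with $\Pst\ \HState\ \Fst$. Because every rule in $\PCstrFWRls$ carries a transition label, the derivation $F_{init} \to^* \Fst$ decomposes into a labeled sequence $F_{init} \rightarrow_{\ell_1} \cdots \rightarrow_{\ell_k} \Fst$; writing $\alpha = \ell_1 \cdots \ell_k$ gives $F_{init} \rightarrow_\alpha \Fst$. I would then match $\alpha$ on the process-algebra side by induction on $k=|\alpha|$, using that $\HState$ is a bisimulation (Theorem~\ref{thm:Bisimulation}) and that $(P_{init}, F_{init}) \in \HState$ by the definition of $\HState$ with the empty label sequence $nil$. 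For $k=0$ take $\Pst := P_{init}$. For the inductive step, if $F_{init} \rightarrow_{\ell_1\cdots\ell_{k-1}} \Fst'' \rightarrow_{\ell_k} \Fst$, the induction hypothesis supplies $\Pst''$ with $P_{init}\rightarrow_{\ell_1\cdots\ell_{k-1}}\Pst''$ and $(\Pst'',\Fst'')\in\HState$; applying the bisimulation clause that matches a forwards FW-transition by a PA-transition with the same label to $\Fst''\rightarrow_{\ell_k}\Fst$ yields $\Pst$ with $\Pst''\rightarrow_{\ell_k}\Pst$ and $(\Pst,\Fst)\in\HState$. Hence $P_{init}\rightarrow_\alpha\Pst$ and $\Pst\ \HState\ \Fst$, as required.

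The two invocations of the earlier theorems are routine; the one point demanding care is the passage from the single-step bisimulation property to the matching of the entire label sequence $\alpha$, handled by the induction above. I would also confirm that the output $s'$ of Theorem~\ref{thm:soundness} genuinely qualifies as a FW-State, i.e.\ that the $\to^*$ there is the reflexive-transitive closure of the labeled forwards transition relation whose reachability defines FW-States, so that the witnessing sequence $\alpha$ used in the second half is indeed available; this is immediate from the definition of FW-State and of $\PCstrFWRls$.
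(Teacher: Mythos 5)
Your proposal is correct and takes essentially the same route as the paper, which gives no separate proof of this theorem but derives it exactly as you do: instantiate Theorem~\ref{thm:soundness} at the initial ground state $F_{\mathit{init}}$ to obtain the FW-State $\Fst$ with $\CstrS >^{\theta'} \Fst$, then use the bisimulation of Theorem~\ref{thm:Bisimulation} to transfer the labeled forwards run to the process-algebra side. Your explicit induction on the label sequence merely fills in the routine step that the paper leaves implicit.
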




\begin{theorem}[Completeness]\label{thm: Completeness-PA}
%
Given a protocol $\caP$, a PA-State $\Pst$, a FW-State $\Fst$, a constrained symbolic strand state $\CstrS$ s.t.  
(i) $\Pst\ \HState\ \Fst$, (ii)  $\CstrS >^{\theta'} \Fst$. Then there is a backwards symbolic execution $\CstrS_0  \stackrel{*}{\leftsquigarrow}_{\mu} \CstrS$ s.t. $\CstrS_0$ is a symbolic initial strand state as defined in Section \ref{sec:mnpa-syntax}, and $\CstrS_0 >^{\theta} F_{init}$. 
\end{theorem}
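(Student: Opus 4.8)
The plan is to reduce this statement to the one-directional completeness result already established between the constrained forwards and constrained backwards strand semantics, namely Theorem~\ref{thm:completeness}. The only genuinely new work is the bookkeeping that converts the bisimulation hypothesis $\Pst\ \HState\ \Fst$ into a concrete forwards ground execution reaching $\Fst$; once that is in place, Theorem~\ref{thm:completeness} applies almost verbatim, so no new induction is required. Note that the PA-State $\Pst$ itself plays no active role in the argument beyond witnessing, via $\HState$, that $\Fst$ is forwards-reachable.

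First I would unfold hypothesis (i). By the definition of the relation $\HState$ there is a label sequence $\alpha$ with $P_{\mathit{init}}\rightarrow_{\alpha}\Pst$ and $F_{\mathit{init}}\rightarrow_{\alpha}\Fst$; the second of these exhibits $\Fst$ as the result of a finite forwards rewriting sequence $F_{\mathit{init}}\to^{n}\Fst$ in the constrained forwards strand semantics, with $n=|\alpha|$ (equivalently, this is just the statement that $\Fst$ is a FW-State). Since $F_{\mathit{init}}=\{\emptyset~\&~\{empty\}\}$, and both $F_{\mathit{init}}$ and $\Fst$ are ground strand states in the sense of Section~\ref{sec:proof-CstrFW-CstrBW}, all hypotheses of Theorem~\ref{thm:completeness} are met by taking that theorem's $s_0:=F_{\mathit{init}}$, $s:=\Fst$, its symbolic state to be $\CstrS$, and its lifting substitution to be the $\theta'$ supplied here by hypothesis (ii). Applying Theorem~\ref{thm:completeness} then produces a constrained symbolic initial strand state $\CstrS_0$, a narrowing substitution $\mu$, an output lifting substitution which I rename $\theta$, and a bound $k\le n$, such that $\CstrS_0\ \narrowleft{k}{\mu}\ \CstrS$ and $\CstrS_0>^{\theta}F_{\mathit{init}}$. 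Discarding the step count $k$ yields the required $\CstrS_0\stackrel{*}{\leftsquigarrow}_{\mu}\CstrS$, and $\CstrS_0$ is initial in the sense of Section~\ref{sec:mnpa-syntax} precisely because Theorem~\ref{thm:completeness} delivers it as a symbolic initial strand state; this is exactly the desired conclusion.

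The one point requiring care, and the main potential obstacle, is the clash of conventions attached to the word \emph{initial}: in the forwards semantics $F_{\mathit{init}}$ is the empty state that begins an incremental build-up of strands, whereas a symbolic initial state in the backwards sense is one in which every strand has its bar at the beginning and the intruder knowledge contains no fact $\inI{m}$. I would verify that these two notions are compatible under the lifting relation $>^{\theta}$, i.e.\ that a symbolic state with all bars at the front lifts exactly to a ground state whose past segments are all empty, which after dropping empty strands coincides with $F_{\mathit{init}}$. This compatibility is already implicit in Theorem~\ref{thm:completeness}, whose proof rests on the Lifting Lemma (Lemma~\ref{lemm: constrained-completeness}); the present theorem merely repackages that result at the level of process-algebra states, so beyond aligning the two \emph{initial} conventions and tracking the renaming of $\theta$ versus $\theta'$ between the two statements, nothing further is needed.
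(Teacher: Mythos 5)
Your proposal is correct and follows essentially the same route as the paper: the paper states this theorem as an immediate consequence of Theorem~\ref{thm:completeness} combined with the bisimulation setup, and your argument—unfolding the definition of $\HState$ to obtain the ground forwards execution $F_{\mathit{init}}\to^{n}\Fst$ and then instantiating Theorem~\ref{thm:completeness} with $s_0:=F_{\mathit{init}}$, $s:=\Fst$, and the lifting substitution $\theta'$—is exactly that consequence made explicit. Your side remarks (that $\Pst$ serves only as a witness of forwards reachability, and that the two notions of ``initial'' state are reconciled through the lifting relation) are likewise consistent with the paper's treatment.
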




\section{Protocol Experiments} \label{sec:Exp}

In this section we describe some 
experiments\footnote{Available at \url{http://personales.upv.es/sanesro/Maude-NPA-choice/choice.html}} that we have performed on protocols with choice. 
We have fully integrated the process algebra syntax,
and its transformation into strands, and have developed new methods
to specify attack states using the process notation  in the recent release of Maude-NPA
3.0 (see 
\cite{MNPAmanual3.0}).

\subsection{Integration of the Protocol Process Algebra in Maude-NPA}
\label{Integration}

We have fully implemented the process algebra notation in Maude-NPA.
Strands represent each role behavior as a linear sequence of message outputs and inputs 
but processes represent each role behavior as a possibly non-linear sequence of message outputs and inputs.
The honest principal specification is specified in the process algebra syntax. 
In order for Maude-NPA to accept process specifications, we have replaced the section 
\texttt{STRANDS-PROTOCOL} from the protocol template 
by a new section 
\texttt{PROCESSES-PROTOCOL}; see \cite{MNPAmanual3.0} for details.
The intruder capabilities as well as the states generated by the tool still use the strand syntax. 

Attack patterns may be specified using the process algebra syntax,
under the label \texttt{ATTACK-PROCESS},
 or strand syntax,
under the label \texttt{ATTACK-STATE}.
We describe how they are specified in the process algebra syntax below.
An attack pattern describes a state consisting  
of zero or more processes that must have executed, and zero or more terms
in the intruder knowledge.  It may also contain \emph{never patterns},
that is, descriptions of 
processes that must \emph{not} be
executed at the time the state is reached.  Never patterns can be used
to reason about authentication properties, e.g., can
Alice execute an instance of the protocol, apparently with Bob, without Bob executing an instance of the protocol with Alice.

Note that processes in an attack pattern cannot contain explicit nondeterminism (?) or explicit deterministic choice (if), 
since one and only one behavior is provided in an attack pattern. 
This is achieve by requiring that any constraint $c$ appearing in an attack pattern must be \emph{strongly irreducible}, that is, it
must not only be irreducible, but for any irreducible substitution $\sigma$ to the variables of $c$, $\sigma c$ must be irreducible as well.

That is, imagine a process i the form
$$-(m1)\ .\ +(m2)\ .\ if\ exp1\ =\ exp2\ then\ +(m3)\ else\ +(m4)$$
\noindent where each of the expressions $exp1$ and  $exp2$ can evaluate to $yes$ or $no$ depending on the substitutions made to them.

Then in the attack pattern one must specify one and only one of the following possibilities
\begin{eqnarray}
-(m1)\ .\ +(m2)\ .\ yes=yes\ .\ +(m3) \notag \\
-(m1)\ .\ +(m2)\ .\ yes \neq no\ .\ +(m4) \notag \\
-(m1)\ .\ +(m2)\ .\ no = no\ .\ +(m3) \notag \\
-(m1)\ .\ +(m2)\ .\ no \neq yes\ .\ +(m4) \notag 
\end{eqnarray}

Finally, never patterns must satisfy a stronger condition: the entire never pattern must be strongly irreducible.  This condition
is inherited from the original Maude-NPA.

\subsection{Choice of Encryption Type}
This protocol allows either public key encryption or shared key
encryption to be used by Alice to communicate with Bob. Alice
initiates the conversation by sending out a message containing the
chosen encryption mode, then Bob replies by sending an encrypted
message containing his session key. The encryption mode is chosen
nondeterministically by Alice. Therefore, it exhibits an
\emph{explicit nondeterministic choice}. Below we show the protocol
description: the first one reflects the case in which public key
encryption (denoted by $PubKey$) is chosen.

 			\begin{enumerate}
 				\item $A \rightarrow B: A ; B ;\mathit{PubKey} $ 
 				\item $B \rightarrow A:  pk(A,  B; SK)$
 				\item $A\rightarrow B:   pk(B,  A ;  SK ; N_A\}$
 				\item $B \rightarrow A : pk(A,  B ; N_A )$
 			\end{enumerate}


 	\noindent
 			The second one reflects the case
 	in which a shared key encryption (denoted by $SharedKey$) is chosen.

 			\begin{enumerate}
 				\item $A \rightarrow B: A ; B ;\mathit{SharedKey} $
 				\item $B \rightarrow A:  shk(key(A,B),  B ; SK)$
 				\item $A\rightarrow B:  shk(key(A,B), A ; SK ; N_A)$
 				\item $B \rightarrow A : shk(key(A,B),  B ; N_A)$
 			\end{enumerate}		


 	\noindent
 	Note that $A$ and $B$ are names of principals, $SK$ denotes the session key
 	generated by $B$, and $N_A$ denotes
 	a nonce generated by $A$.

There are different ways of encoding this protocol as two process expressions.
We have chosen to treat the encryption mode as a choice variable which can be either public key encryption or shared key encryption, 
and then the receiver will perform
 an explicit deterministic choice depending on the value of this choice variable. 
The process specification is as follows:

 \begin{align*}
 	 (\mathit{Init}) \  ( & (+(\chV{A} \, ; \, \chV{B} \, ; \,\mathit{PubKey}) \cdot 
						        -(pk(\chV{A}, \chV{B}  \, ; \, \mathit{SK}))  \\[-.6ex]
 	& ? \\[-.5ex]
 	   &                   (+(\chV{A} \,;\, \chV{B} \,;\, \mathit{SharedKey}) \cdot 
 	                       -(e(\mathit{key}(\chV{A},\chV{B}), \chV{B} \, ;\, \mathit{SK}))   \\
  	 (\mathit{Resp}) \   & -(A \, ; \,  B  \, ; \, \mathit{TEnc}) \  \cdot \\[-.6ex]
					  	   & \mathit{if} \  TEnc = \mathit{PubKey} \\[-.6ex]
					  	   & \hspace{3ex}  \mathit{then }   
					  	     \ (+(pk(A, B \, ; \, \mathit{skey(A,B,r')})) \    \\[-.6ex]
					  	&    \hspace{3ex} \mathit{else } \ 
  	                         (+(e(\mathit{key(A,B)}, B \, ; \, \mathit{skey(A,B,r')}))) \   
 \end{align*}

We analyzed whether the intruder can learn the session key generated by Bob, when either the public key encryption or shared key encryption is chosen, assuming both principals are honest. 

\begin{verbatim}
--- initiator accepts session key for shared key encryption and 
--- intruder learns it
  eq ATTACK-PROCESS(2)
   = -(a ; b ; mode) .
     (mode neq pubkey) .
     +(she(key(a, b), skey(b,r))) .
     -(she(key(a, b), skey(b,r) ; N)) .
     +(she(key(a, b), N))
     || skey(b,r) inI
     || nil
 [nonexec] . 

--- initiator accepts session key for public key encryption and  
--- intruder learns it
  eq ATTACK-PROCESS(3)
   = -(a ; b ; mode) .
     (mode eq pubkey) .
     +(pk(a, b ; skey(b, r))) .
     -(pk(b, a ; skey(b,r) ; N)) .
     +(pk(a, b ; N)) 
     || skey(b,r) inI
     || nil
 [nonexec] . 
\end{verbatim}

For this property, Maude-NPA terminated without any attack being found for any of the two attack states.

\subsection{Rock-Paper-Scissors}

To evaluate our approach on protocols with explicit deterministic
choices, we have used a simple protocol which simulates the famous
Rock-Paper-Scissors game, in which Alice and Bob are the two players
of the game. In this game, Alice and Bob commit to each other their
hand shapes, which are later on revealed to each other after both
players committed their hand shapes. The result of the game is then
agreed upon between the two players according to the rule: rock beats
scissors, scissors beats paper and paper beats rock.  They finish by
verifying with each other that they both reached the same conclusion.
Thus, at the end of the protocol each party should know the outcome of
the game and whether or not the other party agrees to the outcome.
This protocol exhibits \emph{explicit deterministic choice}, because
the result of the game depends on the evaluation of the committed hand
shapes according to the game's rule. Note that this protocol also
exhibits \emph{implicit nondeterministic choice}, since the hand shape
of the players are chosen by the players during the game.

The protocol proceeds as follows. First, both initiator and responder choose their hand shapes and send them to each other using a secure commitment scheme.
Next, they both send each other the nonces that are  necessary to open the commitments. Each of them 
then compares the two hand shapes and decides if the initiator
wins, the responder wins, or there is a tie.  The initiator then sends the responder the outcome.  When the responder receives the initiator's verdict, it compares
it against its own.  It responds with ``finished'' if it agrees with the initiator and ``cheater'' if it doesn't.  All messages are signed and encrypted, and the
initiator's and responder's nonces are included in the messages concerning the outcome of the game.  The actual messages sent and choices made are
described in more detail below.  


\clearpage
\begin{enumerate}
 				\item $A \rightarrow B:  pk(B,  sign(A, commit(N_A, X_A) ) ) $
 				\item $B \rightarrow A:  pk(A, sign(B, commit(N_B, X_B ) )  )$
 				\item $A \rightarrow B:  pk(B, sign(A, N_A )) $
 				\item $B \rightarrow A:  pk(A, sign(B,  N_B)) $
 				\item $if \ (X_A \ beats  \ X_B) \  $\\
 				               $ then \ R = Win $ \\ 
 				  $ else \ if  \ (X_B \ beats \ X_A)$ \\
 				  \hspace*{5mm}            $ then \ R = Lose$ \\ 
 				  \hspace*{5mm}            $   else\ if  \ (X_B \ = \ X_A)  \\ 
				  \hspace*{13mm} then \ R = \ Tie$  \\
 				  \hspace*{13mm} $ else\ \nil$ 
				\item $A \rightarrow B : pk(B,  sign(A, N_A ; N_B ; R )  ) $ 
				 \item $if \ (R =  Win \& X_A \ beats \ X_ B )  $ \\
				 	\hspace*{16mm} $ \ or \ (R \ = \ Lose \  \& \ X_B  \ beats \  X_A )$\\
					 \hspace*{16mm} $\  or \ (R \ =  \ Tie \ \&  \  X_A \ = \ X_B) $\\
					  \hspace*{6mm}      $ then \ B \rightarrow A:  pk(A, sign(B, N_A ; N_B ; \ finished \  )  )$ \\
				      \hspace*{6mm}      $ else\ B \rightarrow A:  pk(A, sign(B, N_A ; N_B ; \ cheater  \  )  )$
\end{enumerate}		


One interesting feature of the Rock-Scissors-Paper protocol, is that, in order
to  verify that the commitment has been opened successfully, i.e., that the nonce received is the
nonce used to create the commitment, one must verify that the result of opening it is well-typed, i.e., that it is
equal to ``rock'', ``scissors'', or ``paper''.  This can be done via the evaluation of predicates.  First, we create a sort Item and declare the constants ``rock'', ``scissors'', and ``paper''
to be of sort Item.  Then we create a variable $X:Item$ of sort Item.  We then define a predicate $item?$ such that $item? X:Item $ evaluates to true.  Since only terms of sort
Item can be unified with $X:Item$, this predicate can be used to check whether or not a term is of sort Item.
The process specification for the initiator and the responder is as follows.

 \begin{align*}
(Initiator) \ 
&          +(pk(B, sig(A, com(n(A,r), XA))))\ .\\
&             -(pk(A, sig(B, ComXB)))\ .\\
&             +(pk(B, sig(A, n(A , r))))\ .\\
&             -(pk(A, sig(B, NB)))\ .\\
&              (if\ ((item?\ open(NB, ComXB))\ eq\  ok)\\ 
&  then\  
                if\ ((XA\ beats\ open(NB, ComXB))\ eq\ ok)\\
&  ~~~~~~~                then\ +(pk(B, sig(A, n(A, r)\  ;\ win))) \\
&  ~~~~~~~                else\ if\ ((open(NB, ComXB)\ beats\ XA)\ eq\ ok)\\
&  ~~~~~~~~~~~~                        then\ +(pk(B, sig(A, n(A, r)\ ;\ lose )))\\
&  ~~~~~~~~~~~~                        else\ +(pk(B, sig(A, n(A, r)\ ;\ tie)))\\
&              else\ nilP )\ .\\
&	     -(pk(A, sig(B, n(A,r) ; NB))\ ;\ S{:}Status)
 \end{align*}
 
 \begin{align*}
(Responder) \
&             -(pk(B, sig(A, ComXA)))\ .\\
&             +(pk(A, sig(B, com(n(B,r), XB))))\ .\\
&             -(pk(B, sig(A, NA)))\ .\\
&             +(pk(A, sig(B, n(B, r))))\ .\\
&             -(pk(B, sig(A, NA ; R)))\ .\\
&              (if\ ((item?\ open(NA, ComXA))\ eq\ ok)\ then\\
&	         if\ (R\ eq\ win)\\
&		 then\ if\ ((open(NA, ComXA)\ beats\ XB)\ eq\ ok) \\
& ~~~~~~                     then\ +(pk(A, sig(B, NA ; n(B,r)))\ ;\ finished)\\
& ~~~~~~		      else\ +(pk(A, sig(B, NA ; n(B,r)))\ ;\ cheater)\\
&		 else\ if\ (R\ eq\ lose)\\
& ~~~~~~		      then\ if\ ((XB\ beats\ open(NA, ComXA))\ eq\ ok)\\
& ~~~~~~~~~~~~		           then\ +(pk(A, sig(B, NA ; n(B,r)))\ ;\ finished)\\
& ~~~~~~~~~~~~		           else\ +(pk(A, sig(B, NA ; n(B,r)))\ ;\ cheater)\\
&~~~~~~		      else\ if\ (R\ eq\ tie)\\
&~~~~~~~~~~~~		           then\ if\ (XB\ eq\ open(NA, ComXA))\\
&~~~~~~~~~~~~~~~~~~			        then\ +(pk(A, sig(B, NA ; n(B,r)))\ ;\ finished)\\
&~~~~~~~~~~~~~~~~~~				else\ +(pk(A, sig(B, NA ; n(B,r)))\ ;\ cheater)\\
&~~~~~~~~~~~~		           else\ nilP\\
&               else\ nilP )
 \end{align*}

 	\noindent
We first tried to see whether the protocol can simulate the game successfully, so we asked for different  scenarios in which the player Alice or Bob can win in a round of the game. Maude-NPA was able to generate the expected scenarios, and it did not generate any others. We then gave Maude-NPA a secrecy attack state,
in which the intruder, playing the role of initiator  against an honest responder, attempts to  guess its nonce before the responder receives  its commitment.

\begin{verbatim}
eq ATTACK-PROCESS(1) =
   -(pk(b,sig(i, ComXA:ComMsg))) .
   +(pk(i,sig(b, com(n(b, r:Fresh), XB:Item)))) .
   -(pk(b,sig(i, NA:Nonce)))
   || n(b, r:Fresh) inI
   || nil
[nonexec] .
\end{verbatim}

Finally we specified an authentication attack state in which we asked if a responder could complete a session with an honest initiator with the conclusion
that the initiator had carried out its rule faithfully, without that actually having happened.

\begin{verbatim}
eq ATTACK-PROCESS(2) =
  -(pk(b, sig(a, ComXA))) .
  +(pk(a, sig(b, com(n(b,r), XB)))) .
  -(pk(b, sig(a, NA))) .
  +(pk(a, sig(b, n(b, r)))) .
  -(pk(b, sig(a, NA ; win))) .
  ((item? open(NA, ComXA)) eq ok) .
  (win eq win) .
  ((open(NA, ComXA) beats XB) eq ok) .
  +(pk(a, sig(b, NA ; n(b,r))) ; finished)
  || empty
  || never(
     +(pk(b, sig(a, ComXA))) .
     -(pk(a, sig(b, com(n(b,r), XB)))) .
     +(pk(b, sig(a, NA))) .
     -(pk(a, sig(b, n(b,r)))) .
     (ok eq ok) .
     (ok eq ok) .
     +(pk(b, sig(a, NA ; win))) .
     -(pk(a, sig(b, NA ; n(b,r))) ; finished)
     || empty
  )
[nonexec] .
\end{verbatim}

For both of these attack states Maude-NPA finished its search without finding any attacks.


\subsection{TLS}
 In Section \ref{sec:ex} we introduced a simplified version of the handshake protocol in TLS 1.3 \cite{tls1.3.12}.
 Even this simplified version produced a very large search space, because of the long list of messages and the concurrent interactions of a big amount of  choices.
We are however able to 
 check the correctness of our specification by producing legal executions
 in Maude-NPA. 
Unlike TLS 1.3, we intentionally introduced a ``downgrade attack"  in our version in which the attacker can trick the principals into using a weaker crypto system. 
However, we have not yet been able to produce this attack because of the very deep and wide analysis tree (i.e., long reachability sequences with many  branches) that is produced.  
We are currently investigating more efficient ways of managing list processing. 

\section{Related Work}\label{sec:related-work}

As we mentioned in the introduction, there is a considerable amount of
work on adding choice to the strand space model that involves
embedding it into other formal systems, including event-based models
for concurrency \cite{DBLP:conf/fsttcs/CrazzolaraW02}, Petri nets
\cite{DBLP:journals/entcs/Froschle09}, or multi-set rewriting
\cite{DBLP:conf/csfw/CervesatoDMLS00}.  Crazzolara and Winskel model
nondeterministic choice as a form of composition, where a conflict
relation is defined between possible child strands so that the parent
can compose with only one potential child.  In
\cite{DBLP:journals/entcs/Froschle09} Fr{\"{o}}schle uses a Petri net
model to add branching to strand space \emph{bundles}, which represent
the concurrent execution of strand space roles.  Note that we have
taken the opposite approach of representing bundles as traces of
non-branching strands, where a different trace is generated for each
choice taken.  Although this results in more bundles during forward
execution, it makes little difference in backwards execution, and is
more straightforward to implement in an already existing analysis
tool. 
 
 We also note that deterministic choice has been included in the applied pi calculus for cryptographic protocols \cite{DBLP:conf/popl/AbadiF01}, another widely used formal model, based on Milner's pi calculus \cite{DBLP:books/daglib/0098267}.
 The applied pi calculus includes the rule $\mathit{if} \ M = N \ \mathit{then} \ P \  \mathit{else} \ Q$, where $P$ and $Q$ are terms.  This is similar to our  syntax for deterministic choice.  However
 our long-term plan is to add other types or predicates as well (e.g., 
$M \ \mathit{subsumes} \ N$) ; indeed our approach extends to any type of predicate that can be evaluated
 on a ground state.
 Although the applied pi calculus in its original form does not include nondeterministic choice, both nondeterministic and probabilistic  choice have
 been added in subsequent work \cite{DBLP:conf/aplas/Goubault-LarrecqPT07}.
 
In addition, Olarte and Valencia show in
\cite{DBLP:conf/ppdp/OlarteV08} how a cryptographic protocol modeling
language can be expressed in their universal timed concurrent
constraint programming (utcc) model, a framework that extends the
timed concurrent constraint programming model to express mobility.
The language does not support choice, but utcc does.  It
 seems that it would not be difficult to extend the language 
to incorporate the utcc choice mechanisms.

 The Tamarin protocol analysis tool \cite{DBLP:conf/cav/MeierSCB13} includes deterministic branching, which was used extensively in the analysis  of TLS 1.3 \cite{tls-tamarin/2016}.  In particular,
 it includes an optimization for roles of the form $P . ( \mathit{if} \ T \ \mathit{then} \ {Q} \ \mathit{else} \ R) . S$; when backwards search is used, it is sometimes possible
 to capture such an execution in terms of just one strand until the conditional is encountered, thus reducing the  state space.  Our approach produces two
 strands, but since the process algebra semantics makes it easy to tell whether or not $R$ behaves ``essentially'' the same no matter if $P$ or $Q$ is chosen,
 we believe that  we have a pathway for including such a feature if desired.

\section{Conclusions} \label{sec:Conclusion}

We have provided an extension to the strand space model that allows
for both deterministic and nondeterministic \allowbreak choice, together with an
operational semantics for choice in strand spaces that not only
provides a formal foundation for choice, but allows us to implement it
directly in the Maude-NPA cryptographic protocol analysis tool.  In
particular, we have applied Maude-NPA to several protocols that rely
on choice in order to validate our approach.

This work not only provides a choice extension to strand spaces, but
extends them in other ways as well.  First of all, it provides a process
algebra for strand spaces.  This potentially allows us to relate the
strand space model to other formal systems (e.g., the applied pi
calculus \cite{DBLP:conf/podc/Abadi01}) giving a better understanding of how it compares
with other formal models.  In addition, the process algebra semantics
provides a new specification language for
Maude-NPA that we  believe is more natural for users than
the current strand-space
language.  

Another contribution of this work is that it provides a means for evaluating both equality and disequality predicates in the strand space model and in Maude-NPA.  This
allows us to   implement features such as type checking in Maude-NPA, via predicates such as $\mathit{foocheck(X)}$, where \allowbreak
$\mathit{foocheck(0:Foo)} = tt$, that is, $\mathit{foocheck(X)}$
succeeds only if $X$ is of sort $Foo$.  This proved to be very helpful, for example, in our specification of the Rock-Scissors-Paper protocol as we described earlier.
We believe the expressiveness of Maude-NPA can be further increased at little cost by extending the types of predicates that can be
evaluated, e.g., by including predicates for subsumption and their
negations.
This is another subject for further investigation.


\appendix

\section{Proof of Theorem \ref{thm:Bisimulation}} \label{bisim-proof-app}

 \begin{proof}
 Since $P_{init}\rightarrow_{nil}P_{init}$ and $F_{init}\rightarrow_{nil}F_{init}$, therefore, $(P_{init}, F_{init}) \in \HState$.
We then prove that: for all PA-State $\Pst_{n}$, and FW-State $\Fst_{n}$, if $(\Pst_n, \Fst_n) \allowbreak \in \HState$, and there exists a PA-State $\Pst_{n+1}$ such that $\Pst_n \rightarrow_a \Pst_{n+1}$, then there exists a FW-State $\Fst_{n+1}$ such that $\Fst_n\rightarrow_{a} \Fst_{n+1}$ and $(\Pst_{n+1}, \Fst_{n+1})\in \HState$.. If $(\Pst_n, \Fst_n) \in \HState$,
 by definition of the relation $\HState$, there exists a label sequence $\alpha$  s.t.  $P_{init}\rightarrow_{\alpha} Pst_n$ and $F_{init} \rightarrow_{\alpha} \Fst_n$. Suppose there exists state $\Pst_{n+1}$ such that $\Pst_n \rightarrow_a \Pst_{n+1}$. We prove by case analysis on label $a$ that there exists $\Fst_{n+1}$ such that $\Fst_n\rightarrow_{a} \Fst_{n+1}$.  The fact that $(\Pst_{n+1}, \Fst_{n+1})\in \HState$ then follows this by the definition of relation $\HState$.
 
 In the rest of this proof, $\overrightarrow{L}, \overrightarrow{L_1}$ and $\overrightarrow{L_2}$ denote lists of messages, $M, M'$ and $m$ denote messages, $P, Q$ and $R$ denote processes, $PS$ denotes a process configuration, $SS$ denotes a set of constrained protocol strands, $IK$ and $IK'$ denote the set of messages in the intruder's knowledge.
 	\begin{itemize}
		\item[1)] $a=(ro, i, j, +m, 0):$ if $j>1$, according to the semantics, $\Pst_n \rightarrow_a \Pst_{n+1}$ by applying rule \eqref{eq:pa-output-modIK}, 
 		the state $\Pst_n$ is of the form $\{(ro,i, j)~( +M\cdot P) ~\&~ PS  ~|~  \{IK\} \}$ s.t. there exists a ground substitution $\sigma$ binding the choice variables in $M$ and $m=M\sigma$,
 		the state $\Pst_{n+1}= \{(ro,i, j+1)~P\sigma ~\&~ PS ~|~ \{m\in \caI, IK\}  \}$ and $ \inI{m} \notin \textit{IK}$. 
 		Since $Pst_n \ \HState \ Fst_n$, by Lemmas \ref{lemm:equiv-PStr} and \ref{lemm:sameIK}, 
 		$\Fst_n$ is of the form $\{(ro, i)~[\overrightarrow L] ~\&~ SS~\&~ \{IK\}  \}$ s.t. $(ro,i, j)~( +M\cdot P)~ \HProc ~(ro, i)~[\overrightarrow L] $. 
 		Let $(ro) \ [\overrightarrow{L_1}, \overrightarrow{L_2}]$ be a constrained strand in $\StrCstrFW$ s.t. there exists a ground substituion $\theta$ s.t. $\overrightarrow{L_1}\rho_{ro, i}\theta = \overrightarrow{L}$.
 		 By the definition of relation $\HProc$ and mapping $\toCstrFW$,  the first message of $\overrightarrow{L_2}$ is $+M'$, s.t. $M'\rho_{ro, i}\theta=M$. Then since $M\sigma =m $ and $\inI{m} \notin \textit{IK}$, the rule \eqref{eq:forward-positive-modIK-Cstr} can be applied for the rewrite $Fst_n\rightarrow_{a}Fst_{n+1}$, where $Fst_{n+1} = \{(ro, i)~[\overrightarrow L, +m] ~\&~ SS~\&~ \{\inI{m}, IK\}\}$.  

 	\ \ \ \	\ If $j=1$, $\Pst_n \rightarrow_a Pst_{n+1}$ by applying rule \eqref{eq:pa-new}, there exists a process $(ro)~( +M\cdot P) $ in $\ProcPA$ and a ground substitution $\sigma$ s.t. $M\rho_{ro, i}\sigma= m$. Since $\toCstrFW(\ProcPA)= \StrCstrFW$, by the definition of $\toCstrFW$, for all strands of role $ro$ in $\StrCstrFW$, the first message is $+M$. Without loss of generality, let $\Pst_n$ be $\{ PS  ~|~  \{IK\} \}$, and $\Fst_n$ be $\{ SS  ~\&~  \{IK'\} \}$. Since the rule \eqref{eq:pa-new} can be applied, $\inI{m} \notin \textit{IK}$. By Lemma \ref{lemm:sameIK}, $IK=IK'$. Moreover, by Lemma \ref{lemm:equiv-PStr}, $\MaxStrId(SS, ro) = \MaxProcId(PS, ro)$, and since $\MaxProcId(PS, ro) +1 = i$, by applying the rule \eqref{eq:forward-positive2-modIK-Cstr} we get $\Fst_n\rightarrow_{a}\Fst_{n+1}$.
 		 
 		\item[2)] $a=(ro, i, j, M\sigma, 0)$: similar to case 1. 
 		
 		\item[3)] $a=(ro, i, j, -m, 0)$: if $j>1$, according to the semantics, $\Pst_n \rightarrow_a \Pst_{n+1}$ by applying rule \eqref{eq:pa-input},
 		$\Pst_n$ is of the form $\{(ro, i, j)~( -M\cdot P) ~\&~ PS ~|~ \{\inI{m}, IK\} \}$ s.t. $m=_{E_{\caP}} M\sigma$ for some ground substitution $\sigma$ and 
 		$\Pst_{n+1} =\{ (ro, i, j+1)~P\sigma ~\&~ PS ~|~ \{\inI{m}, IK\}\}$. 
 	     Since $\Pst_n \ \HState \ \Fst_n$, by Lemmas \ref{lemm:equiv-PStr} and \ref{lemm:sameIK}, 
 	     $\Fst_n = \{   (ro, i)~[\overrightarrow L]  \  \&  \   SS~\&~ \{\inI{m},  IK\} \}$ s.t. $(ro,i, j)~( -M\cdot P) ~\HProc~ (ro)~[\overrightarrow L] $. 
 		Let $(ro)\ [\overrightarrow{L_1}, \overrightarrow{L_2}] \in \StrCstrFW$ s.t. 
 		there exists a ground substitution $\theta$ s.t. $\overrightarrow{L_1}\rho_{ro, i}\theta \allowbreak = \overrightarrow{L}$, 
 		then by definition of $\HProc$ and $\toCstrFW$, the first message of $\overrightarrow{L_2}$ is $-M'$ s.t. $M'\rho_{ro, i}\theta = M$. 
 		Since $m=_{E_{\caP}} M\sigma$, rule \eqref{eq:forward-negative-Cstr} can be applied to get the transition $Fst_n \rightarrow_{a} Fst_{n+1}$, 
 		where $Fst_{n+1} =  \{ (ro, i)~[\overrightarrow L,  -m] ~\&~ SS~\& ~ \{\inI{m}, IK\} \}$. 
 		
	\ \ \ \ 	\	If $j=1$, $Pst_n \rightarrow_a Pst_{n+1}$ by applying rule \eqref{eq:pa-new}, there exists a process $(ro)~( -M\cdot P) $ in $\ProcPA$ and a ground substitution $\sigma$ s.t. $M\rho_{ro, i}\sigma= m$. Without loss of generality, let $Pst_n$ be $\{ PS  ~|~  \{IK\} \}$. Then $\inI{m}  \in IK$. 
 		Since $\toCstrFW(\ProcPA)= \StrCstrFW$, by the definition of $\toCstrFW$, for all strands of role $ro$ in $\StrCstrFW$, the first message is $-M$. 
 		 By Lemma \ref{lemm:sameIK}, $\inI{m}$ is in the intruder knowledge of $Fst_n$. 
 		 Moreover, by Lemma \ref{lemm:equiv-PStr}, $\MaxStrId(SS, ro) = \MaxProcId(PS, ro)$, and since $\MaxProcId(PS, ro) +1 = i$, by applying the rule \eqref{eq:forward-negative2-Cstr} we get $Fst_n\rightarrow_{a}Fst_{n+1}$.
 		 		
 		\item[4)] $a=(ro, i, j, T, 1)$: according to the transition rules, $Pst_n \rightarrow_a Pst_{n+1}$ by applying rule \eqref{eq:pa-detBranch1}. Therefore 
 		$Pst_n$ is of the form $\{ (ro,i, j)  \ ((\textit{if} \ \ c \ \textit{then} ~P~ \textit{else}  ~Q) ~\cdot R) ~\&~ PS \mid  \{ IK\} \}$,     
 		 $Pst_{n+1} = \{(ro,i, j+1)~(P\cdot R) ~\&~ PS ~| \allowbreak ~ \{IK\} \}$  and $ c =_{E_{\caP}} true$. 
 		Since $Fst_n \ \HState \ Pst_n$, by Lemma \ref{lemm:equiv-PStr}, 
 		 $Fst_n = \{ (ro)~[\overrightarrow{L}]~\&~ SS ~ \& ~ \{IK'\}  \}$ 
 		 s.t. $ (ro,i, j)~((\textit{if} ~c~ \allowbreak \textit{then} ~P~ \textit{else} ~Q) ~\cdot R) ~\HProc~ (ro, i)~[\overrightarrow{L}] $. 
 		 By the definition of the relation $\HProc$ and the mapping $\toCstrFW$,
 		  there exists $(ro) \ [\overrightarrow{L_1}, \{C, 1\}, \allowbreak \overrightarrow{L_2}] \in \StrCstrFW$ and a ground substitution $\theta$ s.t. $\overrightarrow{L} = \overrightarrow{L_1}\rho_{ro, i}\theta$, and $C\rho_{ro,i}\theta=c$. Since $c =_{E_{\caP}} true$, the rule \eqref{eq:forward-CstrMsg-det} can be applied for the rewrite $Fst_n \rightarrow_a Fst_{n+1}$,  where $Fst_{n+1}= \{
\{ (ro)~[\overrightarrow{L}, \allowbreak\{t, 1\}]~ \&~ SS ~ \& ~ \{IK'\} \}$
 		\item[5)] $a=(ro, i, j, T, 2)$: similar to case 4.
 		\item[6)] $a=(ro, i, j, ?, 1)$: if $j>1$, $Pst_n \rightarrow_a Pst_{n+1}$ by applying rule \eqref{eq:pa-nonDetBranch1}. Therefore 
 		$Pst_n $ is of the form $\{(ro,i, j)~((P~?~Q)\cdot R) ~\&~ PS ~|~ \allowbreak\{IK\} \}$  and
 		 $Pst_{n+1} = \{(ro,i, j+1)~(P \cdot R) ~\&~ PS ~|~ \{IK\} \}$. 
 		 Since $Fst_n \ \HState \ Pst_n$, by Lemma \ref{lemm:equiv-PStr}, 
 		 $Fst_n= \{ (ro, i)~[\overrightarrow{L}]~\&~ SS ~ \& ~ \{IK'\}  \}$ 
 		 s.t. $  (ro, i, j)~((P~?~Q)\cdot R) ~\HProc (ro, i)~[\overrightarrow{L}] $. 
 		 By the definition of $\HProc$ and $\toCstrFW$, there is a strand 
 		 $(ro, i) \ [\overrightarrow{L_1}, \{?, 1\}, \overrightarrow{L_2}] \in \StrCstrFW$ s.t. 
 		 $\overrightarrow{L} = \overrightarrow{L_1}\theta$. 
 		 Therefore, rule \eqref{eq:forward-CstrMsg-nonDet} can be applied for the rewrite $Fst_n \rightarrow_a Fst_{n+1}$, 
 		 and $Fst_{n+1}  =  \{ (ro, i)~[\overrightarrow{L}, \allowbreak \{?, 1\}]~\&~ SS \allowbreak ~ \& ~ \{IK'\}  \}$. 
 		 
 	\ \ \ \ \		 If $j=1$, $Pst_n \rightarrow_a Pst_{n+1}$ by applying rule \eqref{eq:pa-new}.
 Therefore, there exists a process $(ro)~( (P~?~Q)\cdot R) $ in $\ProcPA$.
 		Since $\toCstrFW(\ProcPA)= \StrCstrFW$, by the definition of $\toCstrFW$, there is a strand of role $ro$ whose first message is $(?, 1)$ in $\StrCstrFW$.
 		 Moreover, by Lemma \ref{lemm:equiv-PStr}, $\MaxStrId(SS, ro) = \MaxProcId(PS, ro)$, and since $\MaxProcId(PS, ro) +1 = i$, by applying the rule \eqref{eq:forward-CstrMsg-first} we get $Fst_n\rightarrow_{a}Fst_{n+1}$.
 		 		
 		\item[7)] $a=(ro, i, j, ?, 2)$ similar to case 6.
 	\end{itemize}
 Similarly, we can prove that for all PA-State $\Pst_{n}$, and FW-State $\Fst_{n}$, if $(\Pst_n, \Fst_n) \allowbreak \in \HState$, and there exists a FW-State $\Fst_{n+1}$ such that $\Fst_n \rightarrow_a \Fst_{n+1}$, then there exists a PA-State $\Pst_{n+1}$ such that $\Pst_n\rightarrow_{a} \Pst_{n+1}$ and $(\Pst_{n+1}, \Fst_{n+1})\in \HState$ 
 \end{proof}

\end{document}